\documentclass[11pt]{amsart}

\usepackage{amsmath,amssymb,amsthm,mathrsfs}
\usepackage{enumerate}
\usepackage[margin=3cm]{geometry}
\usepackage[colorlinks=true,linkcolor=blue,citecolor=blue,urlcolor=blue]{hyperref}

\title[NSA for coherent risk estimation]{Non-standard analysis for coherent risk estimation:\\
hyperfinite representations, discrete Kusuoka formulae, and plug-in asymptotics}
\author[T.~Kania]{Tomasz Kania}
\address[T.~Kania]{Mathematical Institute\\Czech Academy of Sciences\\\v{Z}itn\'a 25\\115~67 Praha 1\\Czech Republic and Institute of Mathematics and Computer Science\\ Jagiellonian University\\ {\L}ojasiewicza 6, 30-348 Krak\'{o}w, Poland}
\email{kania@math.cas.cz, tomasz.marcin.kania@gmail.com}
\thanks{IM CAS (RVO 67985840). }
\date{\today}
\subjclass[2020]{60B10, 60F15, 62G30, 91G70, 03H05}
\keywords{coherent risk measure, coherent risk estimator, spectral risk, Kusuoka representation,
non-standard analysis, Loeb measure, order statistics, L-statistics, hyperfinite probability}

\newtheorem{theorem}{Theorem}[section]
\newtheorem{proposition}[theorem]{Proposition}
\newtheorem{lemma}[theorem]{Lemma}
\newtheorem{corollary}[theorem]{Corollary}
\newtheorem{assumption}[theorem]{Assumption}
\theoremstyle{definition}
\newtheorem{definition}[theorem]{Definition}
\newtheorem{example}[theorem]{Example}
\theoremstyle{remark}
\newtheorem{remark}[theorem]{Remark}
\newtheorem{notation}[theorem]{Notation}
\newtheorem{convention}[theorem]{Convention}

\newcommand{\R}{\mathbb{R}}
\newcommand{\N}{\mathbb{N}}

\newcommand{\E}{\mathsf{E}}
\newcommand{\Pbb}{\mathsf{P}}
\newcommand{\Qbb}{\mathbb{Q}}
\newcommand{\ind}{\mathbf{1}}
\newcommand{\ip}[2]{\langle #1,#2\rangle}
\newcommand{\Linf}{L^\infty}
\newcommand{\Lone}{L^1}
\newcommand{\Mn}{\Delta_n}
\newcommand{\Mndown}{\Delta_n^\downarrow}
\DeclareMathOperator{\st}{st}
\DeclareMathOperator{\ES}{ES}
\DeclareMathOperator{\dES}{dES}

\DeclareMathOperator{\Var}{\mathsf{Var}}
\newcommand{\IF}{\mathrm{IF}}
\renewcommand{\le}{\leqslant}
\renewcommand{\ge}{\geqslant}
\renewcommand{\leq}{\leqslant}
\renewcommand{\geq}{\geqslant}

\begin{document}
\begin{abstract}
We develop a non-standard analysis framework for coherent risk measures and their finite-sample
analogues, coherent risk estimators, building on recent work of Aichele, Cialenco, Jelito, and Pitera.
Coherent risk measures on $L^\infty$ are realised as standard parts of internal support functionals on
Loeb probability spaces, and coherent risk estimators arise as finite-grid restrictions.

Our main results are: (i) a hyperfinite robust representation theorem that yields, as finite shadows,
the robust representation results for coherent risk estimators; (ii) a discrete Kusuoka representation
for law-invariant coherent risk estimators as suprema of mixtures of discrete expected shortfalls on
$\{k/n:k=1,\ldots,n\}$; (iii) uniform almost sure consistency (with an explicit rate) for canonical
spectral plug-in estimators over Lipschitz spectral classes; (iv) a Kusuoka-type plug-in consistency
theorem under tightness and uniform estimation assumptions; (v) bootstrap validity for spectral
plug-in estimators via an NSA reformulation of the functional delta method (under standard smoothness
assumptions on $F_X$); and (vi) asymptotic normality obtained through a hyperfinite central limit theorem.

The hyperfinite viewpoint provides a transparent probability-to-statistics dictionary: applying a risk
measure to a law corresponds to evaluating an internal functional on a hyperfinite empirical measure and
taking the standard part. We include a standard self-contained introduction to the required non-standard tools.
\end{abstract}

\maketitle

\section{Introduction}\label{sec:intro}

\subsection{Background and motivation}

The measurement of financial risk occupies a central position in modern quantitative
finance, insurance, and regulatory frameworks. Since the seminal work of Artzner, Delbaen,
Eber, and Heath \cite{ADEH}, the axiomatic approach to risk measures has provided both
theoretical foundations and practical guidance for risk management. A \emph{coherent risk
measure} (CRM) is a functional $\rho$ acting on random variables (representing profit and
loss distributions) that satisfies four economically motivated axioms: monotonicity, cash
additivity, positive homogeneity, and subadditivity. (Throughout, we adopt the P\&L sign
convention: $X$ represents profit, so losses are negative and $\rho(X)$ evaluates required
capital; hence the $(-X)$ in dual representations.)

The fundamental structural result for CRMs is the \emph{robust representation theorem},
which asserts that under mild regularity conditions, any coherent risk measure can be
written as a supremum of expectations under a family of probability measures:
\[
\rho(X) = \sup_{Q \in \mathcal{Q}} \E_{Q}[-X].
\]
This representation reveals the dual nature of coherent risk: the set $\mathcal{Q}$
encodes model uncertainty or stress scenarios, whilst the supremum captures a worst-case
perspective essential to prudent risk management.

Whilst the theory of CRMs is well-developed at the population level---where one has access
to the full distribution of $X$---practical applications invariably require estimation from
finite samples. This raises the statistical question: what is the appropriate analogue of
coherent risk when one observes only a sample $(x_1, \ldots, x_n) \in \R^n$?

This question was recently addressed by Aichele, Cialenco, Jelito, and Pitera \cite{ACJP},
who introduced the concept of a \emph{coherent risk estimator} (CRE): a functional
$\hat\rho_n: \R^n \to \R$ satisfying the same four coherence axioms, now interpreted
coordinatewise on the sample space. Their work establishes that CREs admit finite-dimensional
robust representations as suprema of linear functionals, with law-invariant and comonotonic
variants corresponding to L-estimators based on order statistics.

\subsection{The non-standard analysis perspective}

The purpose of this paper is to develop a~unified framework for CRMs and CREs using
\emph{non-standard analysis} (NSA). This approach, originating in the work of Abraham
Robinson \cite{Robinson}, extends the real numbers to the \emph{hyperreals} ${}^*\R$,
which include infinitesimal and infinite elements. Whilst initially developed for logical
and foundational purposes, NSA has proven to be a powerful tool in probability theory,
particularly through the \emph{Loeb measure construction} \cite{Loeb}, which produces
genuine $\sigma$-additive probability spaces from internal hyperfinite structures.

From the NSA perspective, the objects of interest have particularly transparent
interpretations:

\begin{itemize}
\item \textbf{Hyperfinite probability spaces.} An internal hyperfinite set
$I_N = \{1, 2, \ldots, N\}$ with $N \in {}^*\N$ infinite, equipped with the counting
probability measure $\mu_N(A) = |A|/N$, becomes a genuine probability space via the Loeb
construction. This provides a ``bridge'' between discrete and continuous probability.

\item \textbf{Coherent risk measures as standard parts.} A CRM $\rho$ on $\Linf$ can be
expressed as the standard part of a hyperfinite support functional: dual measures $Q$
correspond to hyperfinite weight vectors $a = (a_1, \ldots, a_N)$ with $\sum_{k=1}^N a_k = 1$,
and expectations under $Q$ become hyperfinite sums.

\item \textbf{CREs as finite shadows.} By taking $N = n$ finite, the hyperfinite
representation specialises to the finite-dimensional robust representation of CREs. The
passage from CRMs to CREs is simply the restriction of the hyperfinite picture to a finite
grid.

\item \textbf{Plug-in estimators as hyperfinite L-statistics.} On an infinite hyperfinite
i.i.d.\ sample $(X_1, \ldots, X_N)$, spectral risk measures become standard parts of
hyperfinite L-statistics. Consistency results emerge from the hyperfinite law of large
numbers and quantile convergence.
\end{itemize}

This unified viewpoint offers several advantages. First, it provides a single conceptual
framework encompassing both the population-level theory of CRMs and the finite-sample
theory of CREs. Second, it simplifies certain asymptotic arguments by allowing one to work
internally with the entire probability space at once, taking standard parts only at the
end. Third, it suggests natural generalisations and new results, including uniform
consistency over families of risk measures and bootstrap validity.

\subsection{Main contributions}

The principal contributions of this paper are as follows.

\begin{enumerate}[(1)]
\item \textbf{Hyperfinite robust representation (Section \ref{sec:hyperCRM}).} We establish
that coherent risk measures on $\Linf$ are standard parts of hyperfinite support functionals,
providing a unified proof of the finite-sample robust representation theorems for CREs
(Theorems \ref{thm:ACJP-4.1}, \ref{thm:ACJP-4.2}, and \ref{thm:ACJP-4.10}).

\item \textbf{Discrete Kusuoka representation (Section \ref{sec:discKusuoka}).} We prove
that every law-invariant CRE admits a representation as a supremum over mixtures of
discrete expected shortfalls at grid points (Theorem \ref{thm:discKusuoka}). This is the
finite-sample analogue of Kusuoka's celebrated representation theorem for law-invariant
CRMs on atomless spaces.

\item \textbf{Uniform spectral consistency (Section \ref{sec:uniform}).} We establish
uniform almost sure consistency for spectral plug-in CREs over Lipschitz families of
spectra, with an explicit rate of convergence (Theorem \ref{thm:uniform-spectral} and
Corollary \ref{cor:rate}).

\item \textbf{Kusuoka plug-in consistency (Section \ref{sec:kusuokaPlugin}).} We prove a
general consistency theorem for Kusuoka-type plug-in estimators under tightness and
uniform estimation conditions (Theorem \ref{thm:Kusuoka-consistency}).

\item \textbf{Hyperfinite bootstrap validity (Section \ref{sec:bootstrap}).} We show that
the internal resampling scheme yields the correct Gaussian limit, via an NSA reformulation
of the bootstrap delta method (Theorem \ref{thm:bootstrap-consistency}).

\item \textbf{Asymptotic normality (Section \ref{sec:CLT}).} We derive the asymptotic
distribution of spectral plug-in estimators via the hyperfinite central limit theorem
(Theorem \ref{thm:CLT}).
\end{enumerate}

\subsection{Organisation of the paper}

Section \ref{sec:prelim} recalls the definitions of coherent risk measures and estimators,
following \cite{ACJP} and \cite{FS}. Section \ref{sec:nsa} provides a self-contained
introduction to non-standard analysis, covering hyperreals, hyperfinite sets, and Loeb
measures at the level required for our applications. Section \ref{sec:hyperCRM} develops
the hyperfinite representation of CRMs and uses it to derive the finite-sample
representation theorems for CREs. Section \ref{sec:discKusuoka} establishes the discrete
Kusuoka representation. Section \ref{sec:spectral} treats spectral risk measures and their
hyperfinite L-statistic representations. Section \ref{sec:uniform} proves the uniform
spectral consistency theorem. Section \ref{sec:kusuokaPlugin} extends to general
Kusuoka-type plug-in estimators. Section \ref{sec:bootstrap} establishes bootstrap
validity. Section \ref{sec:CLT} derives asymptotic normality. Section \ref{sec:orlicz}
briefly discusses the extension to Orlicz hearts.

The role of NSA varies across sections: in Sections \ref{sec:hyperCRM}--\ref{sec:discKusuoka},
NSA provides a unifying language for dual representations, showing that CREs are finite shadows
of CRMs; in Sections \ref{sec:uniform}--\ref{sec:CLT}, NSA serves as a probability-to-statistics
transfer device, converting population-level results (Glivenko--Cantelli, CLT, bootstrap validity)
to sample-level statements via the standard-part map.

\subsection{Notation}

Throughout, $(\Omega, \mathscr{G}, \Pbb)$ denotes a probability space. For brevity, we write $L^0$ for
$L^0(\Omega, \mathscr{G}, \Pbb)$, the space of (equivalence classes of) measurable
functions, and $\Linf$ for $L^\infty(\Omega, \mathscr{G}, \Pbb)$, the space of essentially
bounded functions. Equalities and inequalities between random variables are understood in
the $\Pbb$-almost sure sense.

For $x = (x_1, \ldots, x_n) \in \R^n$, we write $x_{1:n} \leq x_{2:n} \leq \cdots \leq x_{n:n}$
for the order statistics and $s(x) = (x_{1:n}, \ldots, x_{n:n})$ for the sorted sample.
The standard simplex is
\[
\Mn := \Big\{ a \in [0, \infty)^n : \sum_{i=1}^n a_i = 1 \Big\},
\]
and its monotone subset is
\[
\Mndown := \{ a \in \Mn : a_1 \geq a_2 \geq \cdots \geq a_n \}.
\]

\section{Coherent risk measures and coherent risk estimators}\label{sec:prelim}

This section recalls the basic definitions and fundamental representation theorems,
following the exposition in \cite{ACJP} and the standard reference \cite{FS}.

\subsection{Coherent risk measures on $\Linf$}
The axiomatic approach to risk measurement begins with economically motivated desiderata
for a `good' risk functional. We interpret $X \in \Linf$ as a random profit and loss
(P\&L), with positive values representing gains and negative values representing losses.
The quantity $\rho(X)$ represents the capital required to make the position $X$ acceptable.

\begin{definition}[Coherent risk measure]\label{def:CRM}
A functional $\rho: \Linf \to \R$ is a \emph{coherent risk measure} (CRM) if for all
$X, Y \in \Linf$, $m \in \R$, and $\lambda \geq 0$:
\begin{enumerate}[(R1)]
\item \emph{Monotonicity:} $X \leq Y$ implies $\rho(X) \geq \rho(Y)$.
\item \emph{Cash additivity:} $\rho(X + m) = \rho(X) - m$.
\item \emph{Positive homogeneity:} $\rho(\lambda X) = \lambda \rho(X)$.
\item \emph{Subadditivity:} $\rho(X + Y) \leq \rho(X) + \rho(Y)$.
\end{enumerate}
\end{definition}

The axioms have clear economic interpretations. Monotonicity states that a position with
uniformly better outcomes requires less capital. Cash additivity asserts that adding a
deterministic amount $m$ to the position reduces the required capital by the same amount.
Positive homogeneity implies that scaling a position scales the risk proportionally.
Subadditivity captures the principle of diversification: the risk of a combined portfolio
does not exceed the sum of individual risks.

The fundamental structural result for CRMs is the robust representation theorem, which
characterises coherent risk measures in terms of their dual sets of probability measures.
We recall one version; see \cite{Delbaen02, FS} for a comprehensive treatment.

\begin{theorem}[Robust representation on $\Linf$]\label{thm:crm-robust-Linf}
Let $\rho: \Linf \to \R$ be a coherent risk measure satisfying the \emph{Fatou property}:
if $(X_n)$ is a bounded sequence converging $\Pbb$-almost surely to $X$, then
$\rho(X) \leq \liminf_{n \to \infty} \rho(X_n)$. Then there exists a non-empty convex set
$\mathcal{Q}$ of probability measures on $(\Omega, \mathscr{G})$, each absolutely
continuous with respect to $\Pbb$, such that
\begin{equation}\label{eq:robust-CRM}
\rho(X) = \sup_{Q \in \mathcal{Q}} \E_{Q}[-X], \qquad X \in \Linf.
\end{equation}
If, in addition, the set of Radon--Nikodym derivatives
$\{dQ/d\Pbb : Q \in \mathcal{Q}\} \subset L^1$ can be chosen $\sigma(L^1,L^\infty)$-compact
(equivalently, uniformly integrable and $\sigma(L^1,L^\infty)$-closed),
then the supremum is attained for each $X$.
\end{theorem}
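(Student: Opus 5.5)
The plan is the classical Fenchel--Moreau route on the dual pair $\langle \Linf, L^1\rangle$. Define the acceptance set $\mathcal{A} := \{X \in \Linf : \rho(X) \le 0\}$. Axioms (R1)--(R4) make $\mathcal{A}$ a convex cone that contains $\Linf_+$. Cash additivity gives $\rho(X) = \inf\{m \in \R : X + m \in \mathcal{A}\}$. We then put
\[
\mathcal{Q} := \Big\{ Q \ll \Pbb \text{ probability} : \E_Q[X] \ge 0 \text{ for all } X \in \mathcal{A}\Big\}.
\]
This set is convex. Every $Q\in\mathcal{Q}$ satisfies $\E_Q[-X] \le \rho(X)$. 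Indeed, $X + \rho(X) \in \mathcal{A}$ by cash additivity, so $\E_Q[X] + \rho(X) \ge 0$. This gives the inequality ``$\ge$'' in \eqref{eq:robust-CRM}.

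For the reverse inequality we need $\mathcal{A}$ to be $\sigma(\Linf,L^1)$-closed. This is where the Fatou property enters, and I expect it to be the main obstacle. By the Krein--\v{S}mulian theorem it suffices to show that $\mathcal{A}\cap\{\|X\|_\infty\le r\}$ is weak$^*$-closed for each $r$. That set is convex, so it is weak$^*$-closed as soon as it is closed in $L^1$. An $L^1$-convergent sequence in it has an a.s.\ convergent subsequence, which is bounded by $r$. The Fatou property then gives $\rho(X)\le\liminf\rho(X_n)\le 0$, so the limit lies in the set. Krein--\v{S}mulian is the one genuinely nontrivial external ingredient.

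Now suppose $X\notin\mathcal{A}+m$, i.e.\ $X - m\cdot 0$ is not acceptable; concretely, fix $m<\rho(X)$, so that $X+m\notin\mathcal{A}$. Hahn--Banach separation in $(\Linf,\sigma(\Linf,L^1))$ yields $Z\in L^1$ with $\E[Z(X+m)] < \inf_{Y\in\mathcal{A}}\E[ZY]$. Since $\mathcal{A}$ is a cone, this infimum is $0$. Since $\Linf_+\subset\mathcal{A}$, we get $Z\ge0$, and $Z\ne0$. Normalising $dQ := Z/\E[Z]\,d\Pbb$ gives $Q\in\mathcal{Q}$ with $\E_Q[-X] > m$. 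Letting $m\uparrow\rho(X)$ yields the inequality ``$\le$''. In particular $\mathcal{Q}\neq\emptyset$, because $\rho(0)=0$ and we may apply the argument to any $X$ with $\rho(X)>-\infty$, e.g.\ $X=-1$.

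For the attainment claim, let $\mathcal{D}:=\{dQ/d\Pbb: Q\in\mathcal{Q}\}$ be $\sigma(L^1,\Linf)$-compact. The map $Z\mapsto\E[-ZX]$ is $\sigma(L^1,\Linf)$-continuous for fixed $X\in\Linf$. A continuous function on a compact set attains its supremum, so the supremum in \eqref{eq:robust-CRM} is attained. The stated equivalence with uniform integrability plus closedness is the Dunford--Pettis theorem, which we would simply cite.
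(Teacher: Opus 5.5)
Your argument is correct. It is the standard proof found in the references the paper cites for this result (Delbaen; F\"ollmer--Schied): acceptance set, then Fatou $\Rightarrow$ weak$^*$-closedness via Krein--\v{S}mulian, then Hahn--Banach separation and normalisation of the separating density. The paper itself states the theorem without proof. The only blemish is the garbled phrase ``$X\notin\mathcal{A}+m$, i.e.\ $X-m\cdot 0$ is not acceptable'', which should simply be replaced by your ``concretely, fix $m<\rho(X)$, so that $X+m\notin\mathcal{A}$'' formulation.
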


The set $\mathcal{Q}$ is often interpreted as a family of ``stress scenarios'' or
``generalised probability assessments'', and the representation \eqref{eq:robust-CRM}
expresses the risk as a worst-case expected loss over these scenarios.

\subsection{Law invariance and spectral representation}

A particularly important class of CRMs are those depending only on the distribution of
the random variable.

\begin{definition}[Law invariance]
A CRM $\rho$ is \emph{law-invariant} if $\rho(X) = \rho(Y)$ whenever $X$ and $Y$ have the
same distribution under $\Pbb$.
\end{definition}

For law-invariant CRMs on atomless probability spaces, Kusuoka \cite{Kusuoka} established
a remarkable representation in terms of expected shortfall:

\begin{equation}\label{eq:ES-def}
\ES_\alpha(X) := -\frac{1}{\alpha} \int_0^\alpha q_X(u) \, du, \qquad \alpha \in (0, 1],
\end{equation}
where $q_X: (0, 1) \to \R$ is the lower quantile function of $X$,
$q_X(\alpha) := \inf\{ x \in \R : F_X(x) \geq \alpha \}$.

\begin{remark}[Parameterisation convention]
In our convention, the parameter $\alpha\in(0,1]$ is a~\emph{tail probability}: small $\alpha$
corresponds to the worst tail of the profit distribution (largest losses). This differs from
the regulatory convention for \emph{loss} distributions, where ES is often quoted at a
``confidence level'' close to $1$ (\textit{e.g.}, $\ES_{0.975}$ for the $2.5\%$ tail of losses).
The two are related by $\ES_\alpha^{\text{profit}}(X)=\ES_{1-\alpha}^{\text{loss}}(-X)$.
\end{remark}

\begin{theorem}[Kusuoka representation {\cite{Kusuoka}}]\label{thm:Kusuoka}
Let $(\Omega, \mathscr{G}, \Pbb)$ be atomless and $\rho: \Linf \to \R$ a~law-invariant
CRM satisfying the Fatou property. Then there exists a non-empty convex set $\mathcal{M}$
of probability measures on $(0, 1]$ such that
\begin{equation}\label{eq:Kusuoka}
\rho(X) = \sup_{\nu \in \mathcal{M}} \int_{(0,1]} \ES_\alpha(X) \, \nu(d\alpha),
\qquad X \in \Linf.
\end{equation}
\end{theorem}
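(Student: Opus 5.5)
We derive the Kusuoka representation from the robust representation (Theorem \ref{thm:crm-robust-Linf}) combined with law invariance and the Hardy--Littlewood rearrangement inequality.

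\textbf{Step 1: dual density set.} By Theorem \ref{thm:crm-robust-Linf} we have $\rho(X)=\sup_{Z\in\mathcal{D}}\E[-XZ]$, with $\mathcal{D}=\{dQ/d\Pbb:Q\in\mathcal{Q}\}$. We enlarge $\mathcal{D}$ to the maximal dual set
\[
\mathcal{D}_{\max}=\{Z\ge 0,\ \E Z=1,\ \E[-XZ]\le\rho(X)\ \forall X\in\Linf\}.
\]
It is convex and $\sigma(L^1,L^\infty)$-closed, and it still represents $\rho$. Law invariance of $\rho$ will give \emph{rearrangement invariance} of $\mathcal{D}_{\max}$: if $Z\in\mathcal{D}_{\max}$ and $Z'\stackrel{d}{=}Z$, then $Z'\in\mathcal{D}_{\max}$. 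The argument runs as follows. On an atomless space, for $Z'\stackrel{d}{=}Z$ and any $X$, one approximates $(X,Z')$ in joint law by $(X',Z)$ with $X'\stackrel{d}{=}X$; measure-preserving transformations suffice up to an $L^1$-approximation. Then $\E[-XZ']\approx\E[-X'Z]\le\rho(X')=\rho(X)$.

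\textbf{Step 2: reduce each density to its law.} For fixed $Z\in\mathcal{D}_{\max}$, rearrangement invariance together with the Hardy--Littlewood inequality gives
\[
\sup_{Z'\stackrel{d}{=}Z}\E[-XZ']=\int_0^1 q_{-X}(1-u)\,q_Z(1-u)\,du=-\int_0^1 q_X(u)\,q_Z(1-u)\,du.
\]
This holds because the supremum is approached by anti-comonotone couplings of $X$ and $Z$, again using atomlessness. Set $\phi_Z(u):=q_Z(1-u)$. This is a nonincreasing, nonnegative function with $\int_0^1\phi_Z=1$, i.e.\ a spectrum. Hence
\[
\rho(X)=\sup_{Z\in\mathcal{D}_{\max}}\Big(-\int_0^1 q_X(u)\phi_Z(u)\,du\Big).
\]

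\textbf{Step 3: spectra as ES-mixtures.} Any nonincreasing, right-continuous $\phi\ge0$ with $\int\phi=1$ can be written as $\phi(u)=\int_{(0,1]}\alpha^{-1}\ind_{\{u<\alpha\}}\,\nu(d\alpha)$. Here $\nu$ is the probability measure given by $\nu(d\alpha)=\alpha\,d(-\phi)(\alpha)$ on $(0,1)$, plus an atom $\phi(1^-)$ at $\alpha=1$. It has total mass $\int_0^1\phi=1$ by integration by parts. Fubini, justified because $q_X$ is bounded, then yields
\[
-\int q_X\phi=\int\ES_\alpha(X)\,\nu(d\alpha).
\]
Since $\phi\mapsto\nu$ is affine and the spectra $\{\phi_Z\}$ form a convex set (convexity of the law-set follows from Step 1 via comonotone mixtures, or one simply takes the convex hull, which does not change the supremum by linearity), we obtain a convex set $\mathcal{M}$ and \eqref{eq:Kusuoka}.

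\textbf{Main obstacle.} The delicate point is Step 1, namely transferring law invariance of $\rho$ to rearrangement invariance of the dual set on a general atomless space. Here the couplings cannot always be realised exactly, and one needs the standard approximation lemma that any two random variables with equal laws can be mapped into each other by measure-preserving maps up to arbitrarily small $L^1$ error. Closedness of $\mathcal{D}_{\max}$ in $\sigma(L^1,L^\infty)$ then absorbs the limit. I would first prove the statement on the standard space $((0,1),\text{Lebesgue})$, where quantile couplings are exact. I would then transfer it using the fact that $\rho$ depends only on laws, so it induces a law-invariant CRM on $L^\infty(0,1)$ with the Fatou property.
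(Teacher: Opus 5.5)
The paper gives no proof of this theorem; it is quoted from Kusuoka and used only as an input. Your proposal is correct and is essentially the standard argument (Kusuoka; F\"ollmer--Schied). Its ingredients are the dual representation, Hardy--Littlewood, a quantile coupling, and the decomposition of $\phi_Z(u)=q_Z(1-u)$ into an ES-mixture via $\nu(d\alpha)=\alpha\,d(-\phi_Z)(\alpha)+\phi_Z(1^-)\delta_1$. Convexity of $\mathcal M$ then comes from passing to the convex hull, which leaves the supremum unchanged.

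You work harder than needed in Step 1, and the ``main obstacle'' you identify disappears. On \emph{every} atomless space, each $Z\in L^1$ can be written $Z=q_Z(U)$ a.s.\ for a uniform $U$. To see this, set $U=F_Z(Z)$ off the atoms of the law of $Z$, and randomise on each of the countably many events $\{Z=z\}$ of positive probability, which are atomless under the conditional probability. Then $X':=-q_{-X}(U)$ satisfies $X'\stackrel{d}{=}X$ and $\E[-X'Z]=\int_0^1 q_{-X}q_Z$ exactly. Hence, for each $Z$ in the \emph{original} dual set $\mathcal D$, you get $\int_0^1 q_{-X}q_Z\le\rho(X')=\rho(X)$, while Hardy--Littlewood gives $\E[-XZ]\le\int_0^1 q_{-X}q_Z$. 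Together these yield $\rho(X)=\sup_{Z\in\mathcal D}\bigl(-\int_0^1 q_X\phi_Z\bigr)$ directly. You then need neither $\mathcal D_{\max}$, nor rearrangement invariance, nor closedness, nor approximation by measure-preserving maps, nor the reduction to $L^\infty(0,1)$. That reduction is still valid, since $Y\mapsto\rho(Y(U))$ is a law-invariant CRM with the Fatou property on $L^\infty(0,1)$. What your route buys is the extra structural fact that the maximal dual set is rearrangement invariant, which the direct route does not need.

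A minor point: $\phi_Z$ is generally unbounded, so it is not a spectrum in the sense of Definition \ref{def:spectral}. Step 3 is unaffected, because Fubini only needs $\int_0^1|q_X|\,\phi_Z\le\|X\|_\infty$.
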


A special case arises when the supremum in \eqref{eq:Kusuoka} is achieved by a single
measure, leading to spectral risk measures.

\begin{definition}[Spectral risk measure]\label{def:spectral}
Let $\varphi: [0, 1] \to [0, \infty)$ be a function satisfying:
\begin{enumerate}[(S1)]
\item $\varphi$ is non-increasing;
\item $\varphi$ is bounded;
\item $\int_0^1 \varphi(\alpha) \, d\alpha = 1$.
\end{enumerate}
The \emph{spectral risk measure} with spectrum $\varphi$ is
\begin{equation}\label{eq:spectral-def}
\rho_\varphi(X) := -\int_0^1 q_X(\alpha) \, \varphi(\alpha) \, d\alpha.
\end{equation}
\end{definition}

Spectral risk measures are coherent and law-invariant. The expected shortfall $\ES_\alpha$
corresponds to the spectrum $\varphi(u) = \alpha^{-1} \ind_{(0, \alpha]}(u)$.

\subsection{Coherent risk estimators}

We now turn to the finite-sample setting. Fix $n \in \N$ and interpret
$x = (x_1, \ldots, x_n) \in \R^n$ as a realised sample of P\&L values.

\begin{definition}[Coherent risk estimator {\cite[Definition 3.1]{ACJP}}]\label{def:CRE}
A mapping $\hat\rho_n: \R^n \to \R$ is a~\emph{coherent risk estimator} (CRE) if for all
$x, y \in \R^n$, $m \in \R$, and $\lambda \geq 0$:
\begin{enumerate}[(E1)]
\item \emph{Monotonicity:} $x_i \leq y_i$ for all $i$ implies
$\hat\rho_n(x) \geq \hat\rho_n(y)$.
\item \emph{Cash additivity:} $\hat\rho_n(x + m\ind) = \hat\rho_n(x) - m$, where
$\ind = (1, \ldots, 1)$.
\item \emph{Positive homogeneity:} $\hat\rho_n(\lambda x) = \lambda \hat\rho_n(x)$.
\item \emph{Subadditivity:} $\hat\rho_n(x + y) \leq \hat\rho_n(x) + \hat\rho_n(y)$.
\end{enumerate}
\end{definition}

The axioms (E1)--(E4) are the direct translations of (R1)--(R4) to the finite-sample
setting. Note that we follow the sign convention of \cite{ACJP}: larger losses (smaller
$x_i$) increase the required capital.

\begin{definition}[Law invariance and comonotonicity for CREs]
A CRE $\hat\rho_n$ is \emph{law-invariant} if $\hat\rho_n(x) = \hat\rho_n(\sigma(x))$ for
every permutation $\sigma$ of $\{1, \ldots, n\}$. It is called \emph{comonotonic} whenever
$\hat\rho_n(x + y) = \hat\rho_n(x) + \hat\rho_n(y)$ whenever $x$ and $y$ are comonotonic,
meaning $(x_i - x_j)(y_i - y_j) \geq 0$ for all $i, j$.
\end{definition}

The main representation results for CREs, established in \cite{ACJP}, are:

\begin{theorem}[Robust representation of CREs {\cite[Theorem 4.1]{ACJP}}]\label{thm:ACJP-4.1}
A function $\hat\rho_n: \R^n \to \R$ is a CRE if and only if there exists a non-empty
convex set $M^*_{\hat\rho_n} \subseteq \Mn$ such that
\begin{equation}\label{eq:ACJP-4.1}
\hat\rho_n(x) = \sup_{a \in M^*_{\hat\rho_n}} \sum_{i=1}^n a_i (-x_i),
\qquad x \in \R^n,
\end{equation}
and the supremum is attained for each $x$.
\end{theorem}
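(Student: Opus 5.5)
The plan is to prove the two directions separately, with essentially all the work in the forward implication.

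\emph{Sufficiency.} Suppose $\hat\rho_n$ has the form \eqref{eq:ACJP-4.1} for a non-empty convex $M^*\subseteq\Mn$. Each map $x\mapsto \sum_i a_i(-x_i)$ with $a\in\Mn$ is linear and finite-valued. It satisfies (E1) because $a\ge 0$, and it satisfies (E2) because $\sum_i a_i=1$. Since $M^*\subseteq\Mn$, the supremum is bounded by $\max_i(-x_i)$, so $\hat\rho_n$ is real-valued. A supremum of such maps inherits (E1)--(E3), and subadditivity (E4) holds because the supremum of a sum is at most the sum of the suprema.

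\emph{Necessity.} Let $\hat\rho_n$ be a CRE.

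1. I first show $\hat\rho_n$ is Lipschitz, hence continuous. From $x\le y+\|x-y\|_\infty\ind$, (E1) and (E2) give $\hat\rho_n(y)-\hat\rho_n(x)\le\|x-y\|_\infty$. Swapping $x$ and $y$ gives the reverse inequality.

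2. Put $f(x):=\hat\rho_n(-x)$. By (E3) and (E4), $f$ is sublinear and finite on $\R^n$, so it is convex. For a finite convex function on $\R^n$ the subdifferential $\partial f(x)$ is non-empty at every $x$. Define
\[
M^*:=\{a\in\R^n:\ \ip{a}{z}\le f(z)\ \text{for all } z\in\R^n\}=\partial f(0).
\]
This set is convex and closed. Sublinearity shows that any $a\in\partial f(x)$ lies in $M^*$ and satisfies $\ip{a}{x}=f(x)$. Indeed, $f(x+tz)-f(x)\le t f(z)$ together with the subgradient inequality gives $\ip{a}{z}\le f(z)$. Taking $z=0$ in the subgradient inequality gives $f(x)\le\ip{a}{x}$, while $\ip{a}{x}\le f(x)$ is the membership condition at $z=x$. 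Consequently $f(x)=\max_{a\in M^*}\ip{a}{x}$, and the maximum is attained. Substituting $x\mapsto -x$ yields \eqref{eq:ACJP-4.1} with attainment.

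3. It remains to check $M^*\subseteq\Mn$.
\begin{itemize}
\item Testing $z=-e_i$ gives $-a_i\le f(-e_i)=\hat\rho_n(e_i)$. Since $e_i\ge 0$, monotonicity gives $\hat\rho_n(e_i)\le\hat\rho_n(0)=0$. Hence $a_i\ge 0$.
\item Testing $z=\pm\ind$ and using cash additivity, $f(\pm\ind)=\hat\rho_n(\mp\ind)=\pm1$. This gives $\sum_i a_i\le 1$ and $-\sum_i a_i\le -1$, so $\sum_i a_i=1$.
\end{itemize}
Non-emptiness of $M^*$ follows from $\partial f(0)\neq\emptyset$.

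The only genuinely non-elementary step is the existence of subgradients, equivalently Hahn--Banach, for finite convex functions on $\R^n$; I will cite it. An alternative route to the same step is a separating-hyperplane argument on the closed convex cone $\{x:\hat\rho_n(x)\le 0\}$; closedness of that cone comes from the continuity in step 1.

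This is also exactly the $N=n$ finite specialisation of the hyperfinite support-functional picture the paper develops later. In that picture $M^*$ plays the role of the set of weight vectors, and attainment is automatic because $\Mn$ is compact.
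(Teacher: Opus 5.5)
Your proof is correct, and it takes a genuinely different route from the paper's.

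\textbf{The paper's route.} The paper turns $\hat\rho_n$ into a CRM on the uniform $n$-point space $\{\omega_1,\dots,\omega_n\}$. It then invokes ``finite-dimensional convex duality'' (the robust representation on a finite space) as a black box, obtaining a convex set $\mathcal Q$ of probability measures. Each $Q$ is relabelled as the weight vector $a_i=Q(\{\omega_i\})$. Attainment is asserted ``by compactness of $\Mn$''. Only the necessity direction is argued.

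\textbf{Your route.} You prove the duality step yourself: you take $M^*=\partial f(0)$ for the sublinear $f(x)=\hat\rho_n(-x)$. You read off $M^*\subseteq\Mn$ directly from (E1)--(E2) by testing $z=-e_i$ and $z=\pm\ind$. Attainment follows from $\partial f(x)\neq\emptyset$ together with $\partial f(x)\subseteq M^*$.

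\textbf{What your approach buys.} It is self-contained apart from the existence of subgradients. It produces the canonical representing set, which is the largest one and is closed. It also makes attainment rigorous. The paper's compactness argument tacitly needs the representing set to be closed, so one would have to pass to $\overline{\mathcal Q}$; your $\partial f(0)$ is closed automatically. You also supply the easy converse, which the paper omits.

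\textbf{What the paper's approach buys.} Its gain is mainly conceptual: it fits the result into the dictionary ``probability measure $Q$ $\leftrightarrow$ weight vector $a$'' of Remark~\ref{rem:dictionary}. The analytic content, however, is outsourced rather than proved.

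One small point: your Lipschitz bound in Step~1 is not needed for the subgradient route, since finite convex functions on $\R^n$ are automatically continuous. It matters only for your alternative cone-separation argument.
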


\begin{theorem}[Law-invariant CREs {\cite[Theorem 4.2]{ACJP}}]\label{thm:ACJP-4.2}
A function $\hat\rho_n: \R^n \to \R$ is a law-invariant CRE if and only if there exists
a non-empty convex set $M^s_{\hat\rho_n} \subseteq \Mndown$ such that
\begin{equation}\label{eq:ACJP-4.2}
\hat\rho_n(x) = \sup_{a \in M^s_{\hat\rho_n}} \sum_{i=1}^n a_i (-x_{i:n}),
\qquad x \in \R^n,
\end{equation}
with the supremum attained for each $x$.
\end{theorem}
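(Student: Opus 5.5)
Both directions are proved via Theorem~\ref{thm:ACJP-4.1}, with law invariance used to move the weights onto the sorted sample.

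\textbf{Sufficiency.} Suppose $\hat\rho_n(x)=\sup_{a\in M^s}\sum_i a_i(-x_{i:n})$ with $M^s\subseteq\Mndown$ non-empty and convex.
\begin{enumerate}[(i)]
\item Each map $x\mapsto\sum_i a_i(-x_{i:n})$ is monotone, because the sorting map $s$ is coordinatewise monotone: $x\le y$ implies $x_{i:n}\le y_{i:n}$.
\item Each such map is cash additive, since sorting commutes with adding $m\ind$ and $\sum_i a_i=1$.
\item Each such map is positively homogeneous, since sorting commutes with multiplication by $\lambda\ge0$.
\item Each such map is subadditive. By the rearrangement inequality, for $a$ non-increasing the map $x\mapsto-\sum_i a_ix_{i:n}$ equals $\max_\sigma\sum_i a_i(-x_{\sigma(i)})$. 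This is a maximum of linear maps, hence convex, and together with homogeneity it is subadditive.
\end{enumerate}
A supremum of coherent, permutation-invariant functionals is again coherent and permutation invariant. Finiteness follows from $|\sum_i a_ix_{i:n}|\le\max_i|x_i|$.

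For attainment, replace $M^s$ by its closure. This does not change the supremum, because each functional is continuous in $a$, and it leaves the set convex and inside the compact set $\Mndown$. The supremum of a continuous function over a compact set is then attained. If one insists on the original set $M^s$, attainment can instead be read off from the necessity construction below.

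\textbf{Necessity.} Let $\hat\rho_n$ be a law-invariant CRE. Theorem~\ref{thm:ACJP-4.1} gives a convex set $M^*\subseteq\Mn$ with attained suprema. By law invariance, $\hat\rho_n(x)=\hat\rho_n(s(x))=\sup_{a\in M^*}\sum_i a_i(-x_{i:n})$. Since $s(x)$ is sorted, the rearrangement inequality gives $\sum_i a_i(-x_{i:n})\le\sum_i a^\downarrow_i(-x_{i:n})$, where $a^\downarrow$ is the decreasing rearrangement of $a$.

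Also, $\sum_i a^\downarrow_i(-x_{i:n})=\sum_i a_{\pi(i)}(-x_{i:n})$ for some permutation $\pi$. This equals $\sum_j a_j(-y_j)$ with $y=(x_{\pi^{-1}(j):n})_j$, a permutation of $x$. Hence it is bounded by $\hat\rho_n(y)=\hat\rho_n(x)$.

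So we can take $M^s:=\operatorname{conv}\{a^\downarrow:a\in M^*\}\subseteq\Mndown$. Every element of $M^s$ gives a value at most $\hat\rho_n(x)$, since a convex combination of the previous bound is linear in $a$ for fixed sorted $x$. Moreover, $M^s$ reaches $\hat\rho_n(x)$: take a maximiser $a\in M^*$ at $s(x)$; then $a^\downarrow$ does at least as well, so it attains the value. This gives equality with attainment.

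\textbf{Main obstacle.} The delicate step is to keep convexity and attainment simultaneously. The set of rearrangements $\{a^\downarrow\}$ need not be convex, so we pass to its convex hull. We then need the upper bound $\hat\rho_n(x)$ for the whole hull, which holds by linearity in $a$ on sorted $x$. Attainment then comes directly from attainment in Theorem~\ref{thm:ACJP-4.1}.
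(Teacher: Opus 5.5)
Your proof is correct. In the necessity direction it takes a more economical route than the paper, and it also covers the sufficiency direction, which the paper does not prove.

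\textbf{Necessity.} The paper first symmetrises the dual set, setting $\widetilde M:=\operatorname{conv}\{\pi(a): a\in M^*_{\hat\rho_n},\ \pi \text{ a permutation}\}$. It shows via law invariance that $\widetilde M$ still represents $\hat\rho_n$ on unsorted $x$. Permutation invariance of $\widetilde M$ then gives both inequalities $\langle b,-x\rangle\le\langle b^\downarrow,-s(x)\rangle\le\hat\rho_n(x)$. The paper takes $M^s_{\hat\rho_n}:=\operatorname{cl}\operatorname{conv}\{b^\downarrow: b\in\widetilde M\}$ and gets attainment from compactness of $\Mndown$. It therefore never uses the attainment clause of Theorem~\ref{thm:ACJP-4.1}.

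You skip the symmetrisation. You evaluate at $s(x)$ directly and bound $\langle a^\downarrow,-s(x)\rangle$ by $\hat\rho_n$ at a permuted sample. You then inherit attainment from the maximiser that Theorem~\ref{thm:ACJP-4.1} provides at the point $s(x)$, so no closure is needed and your $M^s$ need not be closed.

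\textbf{What each route buys.} The paper's version does not depend on attainment in Theorem~\ref{thm:ACJP-4.1} and produces a compact representing set. Yours is shorter and reuses the structure already delivered by Theorem~\ref{thm:ACJP-4.1}.

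\textbf{Sufficiency.} You show each sorted L-functional is coherent via the identity $-\sum_i a_i x_{i:n}=\max_\sigma\sum_i a_i(-x_{\sigma(i)})$ for $a\in\Mndown$, and then note that suprema preserve coherence. The paper omits this direction and later calls it ``immediate'' in the proof of Theorem~\ref{thm:ACJP-4.10}, so your argument fills a real omission. Your attainment discussion in that direction is superfluous, since attainment is part of the hypothesis there.
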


\begin{theorem}[Comonotonic law-invariant CREs {\cite[Theorem 4.10]{ACJP}}]\label{thm:ACJP-4.10}
A function $\hat\rho_n: \R^n \to \R$ is a comonotonic, law-invariant CRE if and only if
there exists a unique $a \in \Mndown$ such that
\begin{equation}\label{eq:ACJP-4.10}
\hat\rho_n(x) = \sum_{i=1}^n a_i (-x_{i:n}), \qquad x \in \R^n.
\end{equation}
\end{theorem}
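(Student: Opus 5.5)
The plan is to reduce the statement to Theorem~\ref{thm:ACJP-4.2} and then exploit comonotonic additivity on the cone of sorted vectors.

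\emph{Sufficiency.} Suppose $\hat\rho_n(x)=\sum_i a_i(-x_{i:n})$ with $a\in\Mndown$. Law invariance is immediate, since the right-hand side depends only on $s(x)$. The other properties then follow from a rearrangement identity. For $a$ non-increasing and any $z\in\R^n$, write $\pi(z)$ for the set of permutations of $z$. Then
\[
\sum_i a_i(-z_{i:n})=\max_{\pi(z)}\sum_i a_i(-z_{\pi(i)}).
\]
This holds because pairing the largest weights with the largest values of $-z$ maximises the sum. The representation is therefore a supremum of linear functionals with weights in $\Mn$, so (E1)--(E4) hold exactly as in Theorem~\ref{thm:ACJP-4.1}. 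For comonotonic $x,y$ there is a common permutation sorting both vectors. Hence $(x+y)_{i:n}=x_{i:n}+y_{i:n}$, and additivity follows.

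\emph{Necessity.} Let $\hat\rho_n$ be a comonotonic law-invariant CRE. Set $C=\{z\in\R^n:z_1\le\dots\le z_n\}$. Any two vectors in $C$ are comonotonic, so $\hat\rho_n$ is additive on $C$. By (E3) it is also positively homogeneous there, so it is additive and positively homogeneous on the convex cone $C$. The cone $C$ is spanned by the vectors
\[
e^{(k)}=(0,\dots,0,1,\dots,1), \qquad k=1,\dots,n,
\]
which have ones in positions $k,\dots,n$; note that $e^{(1)}=\ind$. A point $z\in C$ can be written as
\[
z=z_1\ind+\sum_{k\ge2}(z_k-z_{k-1})e^{(k)},
\]
with non-negative coefficients for $k\ge2$. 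Using (E2) for the $\ind$ term and additivity with homogeneity for the rest, $\hat\rho_n$ is linear on $C$. Thus $\hat\rho_n(z)=\sum_i a_i(-z_i)$ for $z\in C$, where the $a_i$ are determined by
\[
\sum_{i\ge k}a_i=-\hat\rho_n(e^{(k)}).
\]
By law invariance, $\hat\rho_n(x)=\hat\rho_n(s(x))$ with $s(x)\in C$, which gives \eqref{eq:ACJP-4.10}.

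It remains to show $a\in\Mndown$. Cash additivity gives $\hat\rho_n(\ind)=-1$, so $\sum a_i=1$. For $a_i\ge0$, apply (E1) to $0\le e^{(k)}-e^{(k+1)}$; but that vector is unsorted, so this must be done through Theorem~\ref{thm:ACJP-4.2}. The CRE equals $\sup_{b\in M^s}\sum b_i(-x_{i:n})$ with each $b\in\Mndown$, and this supremum coincides with the linear functional $\ell_a(z)=\sum_i a_i(-z_i)$ on $C$. Every $b\in M^s$ then satisfies $\ell_b\le\ell_a$ on $C$, with equality attained at each point. Taking $z=\pm\ind$ and $\pm e^{(k)}$, one chooses $b$ attaining the supremum at $z=-\sum_k e^{(k)}$, which lies in the interior of the cone. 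Since $\ell_a-\ell_b\ge0$ on $C$ and vanishes at an interior point, and $\ell_a-\ell_b$ is linear while $C$ spans $\R^n$, it vanishes identically on $\R^n$. Hence $a=b\in\Mndown$, which gives non-negativity and monotonicity at once.

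\emph{Uniqueness.} The vectors $e^{(k)}$ form a basis of $\R^n$, so the values $\hat\rho_n(e^{(k)})$ determine $a$.

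The main obstacle is this last transfer, from linearity on $C$ to $a\in\Mndown$. The interior-point argument above is what I would try first. A fallback is direct: monotonicity of $a$ follows from subadditivity applied to permuted indicator vectors, and positivity from (E1) applied on $C$ to $e^{(n)}\ge0$ together with the monotonicity just obtained.
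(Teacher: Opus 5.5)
Your proof is correct apart from one sign slip, and it takes a genuinely different route from the paper.

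\textbf{The slip.} The slip is in the interior-point step. The vector $-\sum_k e^{(k)}=(-1,-2,\dots,-n)$ is strictly decreasing, so for $n\ge2$ it lies outside $C$ (as does $-e^{(k)}$ for $k\ge2$). At such a point you have no comparison between $\hat\rho_n$ and $\ell_a$. Take instead $z_0:=\sum_k e^{(k)}=(1,2,\dots,n)$, which is strictly increasing and hence interior to $C$. If $b\in M^s_{\hat\rho_n}$ attains the supremum at $z_0$, then $f:=\ell_a-\ell_b$ is $\ge0$ on $C$. Since $\pm\ind\in C$, you get $f(\ind)=0$. Then $0=f(z_0)=\sum_{k\ge2}f(e^{(k)})$ with every term non-negative, so $f$ vanishes on the basis $\{e^{(k)}\}$ and $a=b\in\Mndown$. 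It is the interior point that does the work here, not merely the fact that $C$ spans $\R^n$.

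\textbf{Comparison with the paper.} The paper proves necessity entirely on the dual side. It takes the compact set $K=\overline{M^s_{\hat\rho_n}}\subseteq\Mndown$ from Theorem~\ref{thm:ACJP-4.2}. Using comonotonic additivity on sums of sorted vectors, it shows that the maximiser sets $N(x)$ have the finite intersection property. Compactness then gives a single $a\in K$ that is optimal for every $x$ at once, so $a\in\Mndown$ is automatic. Uniqueness is checked separately on $(-1,\dots,-1,0,\dots,0)$.

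You work on the primal side. Comonotonic additivity plus (E2)--(E3), applied to the conic decomposition in $\ind,e^{(2)},\dots,e^{(n)}$, makes $\hat\rho_n$ linear on $C$. This yields the explicit weights $a_k=\hat\rho_n(e^{(k+1)})-\hat\rho_n(e^{(k)})$ (with $e^{(n+1)}:=0$) and gives uniqueness for free, leaving only $a\in\Mndown$ to check.

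Your fallback for that last step is correct and needs no duality at all. Subadditivity applied to $\ind_A,\ind_B$ with $|A|=|B|=n-k$ and $|A\cap B|=n-k-1$ gives $a_k\ge a_{k+1}$. Then (E1) applied to $e^{(n)}\ge0$ gives $a_n\ge0$.

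Your route is thus more constructive and self-contained; the same holds for your sufficiency argument via the rearrangement identity. The paper's route reuses Theorem~\ref{thm:ACJP-4.2} wholesale, and its finite-intersection/compactness argument is closer in spirit to the saturation arguments used elsewhere in the paper.
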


Our goal in the following sections is to provide a unified hyperfinite perspective on
these results and to establish new consistency and asymptotic theorems.

\section{Non-standard analysis: a self-contained introduction}\label{sec:nsa}

This section provides the essential non-standard machinery required for our applications.
We aim for a self-contained exposition accessible to readers without prior exposure to
NSA, whilst maintaining sufficient rigour for our proofs. For comprehensive treatments,
see Keisler \cite{Keisler-book}, Albeverio et al.\ \cite{AFHL}, and Fajardo--Keisler
\cite{FK}.

\subsection{Standing conventions and the Loeb model}\label{subsec:conventions}

\begin{convention}[Saturation assumption]\label{conv:saturation}
Throughout this paper, we work in a \emph{countably saturated} non-standard enlargement:
every countable collection of internal sets with the finite intersection property has
non-empty intersection. This level of saturation suffices for all constructions in this paper,
including overspill/underspill arguments and the existence of Loeb measures. Countable saturation
can be arranged in standard constructions, e.g., via an ultrapower by a non-principal ultrafilter
on $\N$ in a countable superstructure setting; see \cite[Ch.~4]{AFHL}.
\end{convention}

\begin{notation}[Standing Loeb model]\label{not:Loeb-model}
Throughout Sections \ref{sec:nsa}--\ref{sec:CLT} we work inside a fixed countably
saturated non-standard universe. When we invoke hyperfinite probability, we use the
following canonical model.

We fix an infinite $N\in{}^*\N$, chosen once and for all by the mechanism described in
Lemma \ref{lem:choose-N} below. The hyperfinite set $I_N=\{1,\dots,N\}$ carries the internal
counting measure $\mu_N(A)=|A|/N$ on the algebra $\mathscr{I}_N$ of internal subsets.
Let $(I_N,\mathscr{I}_N^L,L(\mu_N))$ denote the associated Loeb probability space.

When we state results about i.i.d.\ samples $(X_i)$ from a random variable $X$ on an
ambient probability space $(\Omega,\mathscr{G},\Pbb)$, we implicitly pass to the
non-standard extension $({}^*\Omega,{}^*\mathscr{G},{}^*\Pbb)$ and its Loeb
completion $( {}^*\Omega, L({}^*\mathscr{G}), L({}^*\Pbb))$; all ``almost sure''
statements in NSA arguments are understood with respect to $L({}^*\Pbb)$ unless
explicitly stated otherwise.

\medskip
\noindent\textbf{Star-suppression convention.}
When a standard function $f$ is evaluated at a hyperreal argument, we mean, by convention, its
non-standard extension ${}^*f$; \textit{e.g.}\ $q(\alpha_k)$ abbreviates ${}^*q(\alpha_k)$.
\end{notation}

\begin{lemma}[Simultaneous validity of hyperfinite properties]\label{lem:choose-N}
Let $(P_j)_{j\in\N}$ be a countable family of properties, where each $P_j(n)$ is a
\emph{first-order} statement about sample size $n$ expressible with rational parameters
(e.g., ``${}^*\Pbb(|\cdots|>\varepsilon)<\delta$'' for rational $\varepsilon,\delta>0$).
By the internal definition principle, each set
\[
S_j := \{n\in{}^*\N : P_j(n)\text{ holds}\}
\]
is internal. Assume each $P_j$ holds eventually for standard $n$, i.e., $S_j$ contains
all sufficiently large standard naturals.

Then there exists an \emph{infinite} $N\in{}^*\N$ such that $P_j(N)$ holds for all $j\in\N$.
\end{lemma}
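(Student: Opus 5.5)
The proof will combine overspill with countable saturation. Since each $S_j$ is internal and contains all sufficiently large standard naturals, I will first show that $S_j$ contains every infinite $n$ below some infinite bound. The cleaner route, though, is to use the \emph{eventual} character directly. For each $j$, let $m_j\in\N$ be such that $n\in S_j$ for all standard $n\ge m_j$. I then pass to the internal set
\[
T_j := \{ n\in{}^*\N : \forall k\in{}^*\N\,(m_j\le k\le n \Rightarrow k\in S_j)\},
\]
which is internal by the internal definition principle, because $S_j$ is internal and $m_j$ is standard. By hypothesis $T_j$ contains every standard $n\ge m_j$. By overspill (an internal subset of ${}^*\N$ containing arbitrarily large standard naturals contains an infinite element, since $\N$ is external), $T_j$ contains some infinite $H_j$. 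By its definition, $T_j$ is then downward closed above $m_j$, so every $n$ with $m_j\le n\le H_j$ lies in $S_j$; in particular $S_j$ contains all infinite $n\le H_j$.

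Next I use countable saturation, as in Convention \ref{conv:saturation}. I consider the countable family of internal sets
\[
A_{j,k} := \{n\in{}^*\N : n\ge k\}\cap \bigcap_{i\le j} T_i ,\qquad j,k\in\N,
\]
or more simply the family $\{T_j\setminus\{0,\dots,m_j-1\}\}_{j}\cup\{[k,\infty)\}_{k\in\N}$. For the finite intersection property, note that any finitely many $T_{j_1},\dots,T_{j_r}$, together with $[k,\infty)$, contain a common element: the standard natural $\max(k,m_{j_1},\dots,m_{j_r})$ already lies in every $T_{j_i}$ by the eventual hypothesis. Saturation then yields $N$ in the whole intersection. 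Since $N\ge k$ for every standard $k$, $N$ is infinite. Since $N\in T_j$ and $N\ge m_j$, we get $N\in S_j$, that is, $P_j(N)$ holds for every $j$.

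The overspill step is actually unnecessary: the saturation argument alone works, because the standard witness $\max(k,m_{j_1},\dots)$ lies in every finite intersection. So I can intersect the $S_j$ directly with the sets $[\max(k,m_j),\infty)$. Doing so avoids $T_j$, but only at the cost of not obtaining $N$ in a whole interval. The main point needing care is internality. Each $S_j$ must be internal, which requires that $P_j$ be first-order with standard (here rational) parameters, so that its transfer defines an internal set. The bounds $[k,\infty)\cap{}^*\N$ are internal as well. Once the family is countable and internal, the finite intersection property makes the conclusion immediate.
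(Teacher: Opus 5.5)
Your argument is correct, and its final form (countable saturation applied to the internal sets $S_j$ together with the tails $\{n\in{}^*\N: n\ge k\}$, with the finite intersection property witnessed by a sufficiently large standard natural) is exactly the paper's proof. The overspill step producing $H_j$ is never used afterwards. Routing through the downward-closed sets $T_j$ does, however, give the slightly stronger conclusion that every infinite $n\le N$ satisfies all the $P_j$.
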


\begin{proof}
For each standard $m\in\N$, define the internal set $B_m:=\{n\in{}^*\N: n>m\}$.
Consider the countable family of internal sets $\{S_j\}_{j\in\N}\cup\{B_m\}_{m\in\N}$.

This family has the finite intersection property: for any finite $J\subset\N$ and
finite $M\subset\N$, the intersection $\bigcap_{j\in J}S_j\cap\bigcap_{m\in M}B_m$
contains all standard $n>\max(\max_{j\in J}n_0^{(j)},\max M)$, hence is non-empty.

By countable saturation, $\bigcap_{j\in\N}S_j\cap\bigcap_{m\in\N}B_m\neq\emptyset$.
Any element $N$ of this intersection satisfies $P_j(N)$ for all $j$ (since $N\in S_j$)
and $N>m$ for all standard $m$ (since $N\in B_m$), hence $N$ is infinite.
\end{proof}

\begin{remark}
The properties $P_j$ we invoke include: the hyperfinite strong law of large numbers
(Theorem \ref{thm:hSLLN}) for countably many integrands arising from truncations,
the hyperfinite Glivenko--Cantelli theorem (Theorem \ref{thm:hGC}), and the hyperfinite
CLT (Theorem \ref{thm:hCLT}). Each is expressed in the first-order form
``$\forall\varepsilon\in\Qbb_{>0}\,\exists n_0\,\forall n\ge n_0$: [bound holds]'',
which transfers to give internal sets $S_j$. Lemma \ref{lem:choose-N} then
guarantees a single infinite $N$ for which all hold simultaneously.
\end{remark}

\begin{proposition}[Atomless separable spaces and Loeb spaces]\label{prop:atomless-Loeb}
Every atomless \emph{separable} (i.e., countably generated mod null sets) standard probability
space $(\Omega,\mathscr{G},\Pbb)$ is measure-algebra isomorphic to a hyperfinite Loeb space
$(I_N,\mathscr{I}_N^L,L(\mu_N))$ for some infinite $N\in{}^*\N$.
Precisely, there is a Boolean algebra isomorphism
\[
\Phi:\mathscr{G}/\mathcal{N}_\Pbb \xrightarrow{\;\cong\;} \mathscr{I}_N^L/\mathcal{N}_{L(\mu_N)}
\]
between the measure algebras (quotients by null sets) such that $\Pbb(A)=L(\mu_N)(\Phi([A]))$
for all $A\in\mathscr{G}$. This isomorphism induces isometric lattice isomorphisms
$L^p(\Omega,\Pbb)\cong L^p(I_N,L(\mu_N))$ for all $p\in[1,\infty]$.
\end{proposition}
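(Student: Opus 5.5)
The plan is to reduce the statement to the classical isomorphism theorem for measure algebras, due to Carathéodory and to Maharam for the separable case. That theorem says any two atomless, separable (countably generated mod null sets) probability measure algebras are isomorphic via a measure-preserving Boolean isomorphism. Both $(\mathscr{G}/\mathcal{N}_\Pbb,\Pbb)$ and $(\mathscr{I}_N^L/\mathcal{N}_{L(\mu_N)},L(\mu_N))$ would then be isomorphic to the Lebesgue measure algebra of $[0,1]$, and composing the two isomorphisms gives $\Phi$. The hypotheses already provide that the side of $\Omega$ is atomless and separable. So everything reduces to verifying these two properties for the Loeb side.

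\textbf{Atomlessness of the Loeb algebra.} Let $B\in\mathscr{I}_N^L$ have $L(\mu_N)(B)=\beta>0$. First choose an internal $A\subseteq B$ with $\st\mu_N(A)$ as close to $\beta$ as required; this is possible by inner regularity of the Loeb construction. Next enumerate $A$ internally as $a_1<\dots<a_{|A|}$. Then the internal set $\{a_1,\dots,a_{\lfloor|A|/2\rfloor}\}$ has $\mu_N$-measure within $1/N$ of $\mu_N(A)/2$. Iterating this splitting, and using countable saturation to pass to limits, produces a subset of $B$ of any prescribed measure in $[0,\beta]$. Hence the Loeb algebra has no atoms.

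\textbf{Separability of the Loeb algebra.} The natural candidate for a countable generating family is the dyadic blocks
\[
D_{m,j}=\{k\in I_N: j2^{-m}<k/N\le (j+1)2^{-m}\}, \qquad m\in\N,\ 0\le j<2^m.
\]
These sets are internal. The aim would be to show that every internal $A$ is approximated in $L(\mu_N)$-measure by finite unions of the $D_{m,j}$. The argument would then extend to all Loeb sets, since Loeb sets are approximated by internal ones. The step would try a density argument for the pushforward under $\st(k/N)$.

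\textbf{Main obstacle.} I expect the separability step to be the main obstacle, and I doubt it can be carried out. The $\sigma$-algebra generated by the $D_{m,j}$ is contained in $\sigma(\st(k/N))$. An internal set such as $A=\{k\in I_N: k \text{ even}\}$ has conditional measure $1/2$ on every $D_{m,j}$. Such a set is therefore independent of that $\sigma$-algebra and is not approximable by it. More generally, internal "parity-type" sets produce uncountably many mutually independent events of measure $1/2$. This suggests that the Loeb algebra has uncountable Maharam type.

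If this obstruction is confirmed, no choice of countable generators will work, and the statement as worded cannot be obtained by this route. I would therefore first try to find a countable generating family that escapes the parity example. If that fails, I would record the computation above as the reason the proposition, in the precise form stated (onto the full Loeb measure algebra), does not follow by this method.
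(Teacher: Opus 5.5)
Your obstruction is genuine, and it is not a weakness of your route. The proposition is false as worded, and the paper's argument fails at exactly the step you flagged. The paper's proof is by citation: it asserts, as the ``key fact'', that the hyperfinite Loeb space is atomless \emph{and has the same measure algebra as the Lebesgue unit interval}, and then invokes Maharam's theorem for $\Omega$. That is your reduction. Its second half, separability of $\mathscr{I}_N^L/\mathcal{N}_{L(\mu_N)}$, fails for every infinite $N$, so there is no point looking for a countable generating family that escapes the parity example. Your atomlessness step is correct, and simpler than you make it. Take one internal $A\subseteq B$ with $\st\mu_N(A)>0$ and split it into two internal halves. Each half is a Loeb subset of $B$ of measure strictly between $0$ and $\beta$, so no iteration or saturation is needed.

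To turn your heuristic into a proof, note that the single set $\{k\ \text{even}\}$ only shows that the dyadic blocks do not generate. You need an uncountable, uniformly separated family. Put $J:=\{j\in{}^*\N:\ 2^j\le\sqrt N\}$ and $A_j:=\{k\in I_N:\ \lfloor k/2^j\rfloor\ \text{even}\}$, both internal. For $i<j$ in $J$, cut $I_N$ into aligned blocks of length $2^{j+1}$. In each full block, $A_j$ occupies exactly half and $A_i\cap A_j$ exactly a quarter. The leftover has $\mu_N$-measure at most $2^{j+1}/N\le 2/\sqrt N\approx 0$. Hence $L(\mu_N)(A_j)=\tfrac12$, $L(\mu_N)(A_i\cap A_j)=\tfrac14$, and $L(\mu_N)(A_i\triangle A_j)=\tfrac12$ for all distinct $i,j\in J$.

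$J$ is internal with infinite internal cardinality $H$, so it has external cardinality at least $2^{\aleph_0}$: the map $k\mapsto\st(k/H)$ sends $\{1,\dots,H\}$ onto $[0,1]$. Thus the Loeb measure algebra, with the metric $d(A,B):=L(\mu_N)(A\triangle B)$, contains an uncountable $\tfrac12$-separated family and is not separable. By contrast, $\mathscr{G}/\mathcal{N}_\Pbb$ is $d$-separable when $\mathscr{G}$ is countably generated mod null sets, and a measure-preserving Boolean isomorphism is a $d$-isometry. So no $\Phi$ as in the statement exists. For the same reason, $L^1(\Omega,\Pbb)$ is separable while $L^1(I_N,L(\mu_N))$ is not, so the claimed isometric $L^p$ isomorphisms also fail.

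What is true, and what the later sections can legitimately use, is an \emph{embedding}. The map $\theta(k):=\st(k/N)$ is Loeb measurable and pushes $L(\mu_N)$ forward to Lebesgue measure on $[0,1]$. So $C\mapsto\theta^{-1}(C)$ embeds the Lebesgue measure algebra measure-preservingly into $\mathscr{I}_N^L/\mathcal{N}_{L(\mu_N)}$. Composing with the Carath\'eodory isomorphism for $(\Omega,\mathscr{G},\Pbb)$ gives a measure-preserving Boolean embedding of $\mathscr{G}/\mathcal{N}_\Pbb$. It also gives isometric lattice embeddings $L^p(\Omega,\Pbb)\hookrightarrow L^p(I_N,L(\mu_N))$, whose images are proper closed sublattices by the non-separability just shown. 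This lets you realise any given random variable, or countable family, on the Loeb space with the same joint law. It does not justify ``identifying'' the two spaces, as the statement and its later uses do.
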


\begin{proof}[Proof (reference)]
This is Keisler's representation theorem; see \cite[Thm.~10.3.1]{FK} or \cite[Ch.~5]{AFHL}.
The key fact is that the hyperfinite Loeb space $(I_N,\mathscr{I}_N^L,L(\mu_N))$ with infinite $N$ is
atomless and has the same measure algebra (mod null sets) as the Lebesgue unit interval.
Every atomless separable probability space shares this property by Maharam's theorem.
\end{proof}

For distributional/statistical arguments, when the underlying standard probability space
is atomless and separable (which includes all standard Borel probability spaces), we may
replace it by an isomorphic Loeb space via Proposition \ref{prop:atomless-Loeb}. This convention
avoids repeated measure-theoretic bookkeeping and allows us to treat hyperfinite sums as
Loeb integrals via standard part.

\subsection{The hyperreal numbers}

The starting point of non-standard analysis is the construction of an ordered field
extension ${}^*\R$ of $\R$, called the \emph{hyperreals}. There are several approaches
to this construction (ultraproducts, superstructures, axiomatic); we adopt the axiomatic
viewpoint, which suffices for applications.

\begin{definition}[Hyperreal numbers]
The \emph{hyperreal numbers} ${}^*\R$ form an ordered field extension of $\R$ satisfying
the following properties:
\begin{enumerate}[(H1)]
\item $\R \subset {}^*\R$ is a proper ordered subfield.
\item (\emph{Transfer principle}) Every first-order statement about $\R$ that is true
remains true when interpreted in ${}^*\R$.
\item There exist \emph{infinitesimals} $\varepsilon \in {}^*\R$ with
$|\varepsilon| < 1/n$ for all $n \in \N$.
\item There exist \emph{infinite} hyperreals $H \in {}^*\R$ with $|H| > n$ for all
$n \in \N$.
\end{enumerate}
(In standard constructions, (H3) and (H4) follow from (H1) and (H2) together with properness.)
\end{definition}

The transfer principle is the key tool for moving between standard and non-standard
contexts. It asserts that the hyperreals satisfy the same first-order properties as the
reals. For instance, the statement ``for all $x, y \in \R$ with $x < y$, there exists
$z \in \R$ with $x < z < y$'' transfers to ``for all $x, y \in {}^*\R$ with $x < y$,
there exists $z \in {}^*\R$ with $x < z < y$.''

\begin{definition}[Finite, infinitesimal, infinite]
A hyperreal $x \in {}^*\R$ is:
\begin{itemize}
\item \emph{finite} if $|x| \leq n$ for some $n \in \N$;
\item \emph{infinitesimal} if $|x| < 1/n$ for all $n \in \N$;
\item \emph{infinite} if $|x| > n$ for all $n \in \N$.
\end{itemize}
Two hyperreals $x, y$ are \emph{infinitely close}, written $x \approx y$, if $x - y$ is
infinitesimal.
\end{definition}

The following proposition provides the crucial connection between finite hyperreals and
real numbers.

\begin{proposition}[Standard part]\label{prop:standard-part}
Every finite hyperreal $x \in {}^*\R$ is infinitely close to a~unique real number,
denoted $\st(x) \in \R$ and called its \emph{standard part}. The map
$\st: \{ x \in {}^*\R : x \text{ is finite} \} \to \R$ satisfies:
\begin{enumerate}[(i)]
\item $\st(x + y) = \st(x) + \st(y)$ for finite $x, y$;
\item $\st(xy) = \st(x) \st(y)$ for finite $x, y$;
\item $\st(x) \leq \st(y)$ if $x \leq y$ for finite $x, y$;
\item $\st(r) = r$ for all $r \in \R$.
\end{enumerate}
\end{proposition}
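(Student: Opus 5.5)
The plan is to first establish existence and uniqueness of the standard part, and then deduce properties (i)--(iv) from the uniqueness. The only tool beyond the ordered-field structure of ${}^*\R$ is the completeness of $\R$, specifically the least upper bound property of \emph{standard} subsets of $\R$. For existence, let $x\in{}^*\R$ be finite, so $|x|\le n_0$ for some $n_0\in\N$. Consider the set
\[
A_x:=\{r\in\R : r\le x\}\subseteq\R .
\]
It is non-empty, since $-n_0\in A_x$, and it is bounded above by $n_0$: if $r\in A_x$ then $r\le x\le n_0$. This is an ordinary subset of $\R$ (not an internal set), so $c:=\sup A_x\in\R$ exists by completeness of $\R$.

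Next, we show $x\approx c$. Fix a standard $\varepsilon>0$ (it suffices to take $\varepsilon=1/n$ with $n\in\N$).
\begin{itemize}
\item If $x\ge c+\varepsilon$, then $c+\varepsilon\in A_x$, which contradicts $c=\sup A_x$. Hence $x<c+\varepsilon$.
\item If $x\le c-\varepsilon$, then every $r\in A_x$ satisfies $r\le x\le c-\varepsilon$. So $c-\varepsilon$ is an upper bound for $A_x$ smaller than $c$, again a contradiction. Hence $x>c-\varepsilon$.
\end{itemize}
Thus $|x-c|<1/n$ for every $n\in\N$, i.e.\ $x-c$ is infinitesimal. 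For uniqueness, suppose $x\approx c$ and $x\approx c'$ with $c,c'\in\R$. Then $c-c'$ is a real number that is infinitesimal. A nonzero real $d$ satisfies $|d|\ge 1/n$ for some $n\in\N$ (Archimedean property of $\R$), so $c=c'$. We then set $\st(x):=c$.

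The algebraic properties follow from closure facts about infinitesimals. The sum of two infinitesimals is infinitesimal, since $|\delta+\eta|<1/(2n)+1/(2n)$. The product of a finite hyperreal and an infinitesimal is infinitesimal, since if $|y|\le m$ then $|y\delta|<m\cdot 1/(mn)$.
\begin{itemize}
\item For (i), write $x=\st(x)+\delta$ and $y=\st(y)+\eta$ with $\delta,\eta$ infinitesimal. Then $x+y-(\st x+\st y)=\delta+\eta\approx 0$, and uniqueness gives (i).
\item For (ii), expand $xy-\st(x)\st(y)=\st(x)\eta+\st(y)\delta+\delta\eta$. Each term is a finite quantity times an infinitesimal, so the sum is infinitesimal, and uniqueness gives (ii).
\item For (iv), a real $r$ satisfies $r\approx r$, so uniqueness gives $\st(r)=r$.
\item For (iii), suppose $x\le y$ but $\st(x)>\st(y)$. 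Put $d=\st(x)-\st(y)>0$, a positive real. Then $x-y=d+(\delta-\eta)$ with $\delta-\eta$ infinitesimal, so $x-y>d-d/2>0$, contradicting $x\le y$.
\end{itemize}

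The only delicate point is the existence step. One must apply completeness to the external set $A_x\subseteq\R$ rather than to anything internal: the supremum property does not transfer to arbitrary subsets of ${}^*\R$. One must also check that the two inequalities are obtained for every standard $\varepsilon$, which is exactly what the definition of infinitesimal requires. Everything else is routine bookkeeping with the closure properties of infinitesimals.
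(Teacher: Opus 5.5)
Your proof is correct and follows the paper's own argument: take $s=\sup\{r\in\R: r\le x\}$, rule out $x\ge s+\varepsilon$ and $x\le s-\varepsilon$ by contradiction with the supremum, and get uniqueness because a real infinitesimal is zero. The one difference is that you write out the closure properties of infinitesimals behind (i)--(iv), which the paper only asserts in a single line.
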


\begin{proof}
For any finite $x$, the set $\{ r \in \R : r \leq x \}$ is non-empty and bounded above,
hence has a supremum $s := \sup\{ r \in \R : r \leq x \}$. We claim $x \approx s$. If
$x - s > 1/n$ for some $n \in \N$, then $s + 1/(2n) < x$ and $s + 1/(2n) \in \R$,
contradicting the definition of $s$. Similarly, if $s - x > 1/n$, then $s - 1/(2n) > x$
implies $s - 1/(2n)$ is an upper bound for $\{ r \in \R : r \leq x \}$, contradicting the
supremum. Thus $|x - s| < 1/n$ for all $n$, \textit{i.e.}, $x \approx s$.

Uniqueness: if $x \approx s$ and $x \approx s'$ with $s, s' \in \R$, then $s - s'$ is
infinitesimal and real, hence zero.

The algebraic properties follow from the corresponding properties of $\approx$ and the
field operations.
\end{proof}

\subsection{The hypernaturals and hyperfinite sets}

The natural numbers $\N$ extend to the \emph{hypernaturals} ${}^*\N \subset {}^*\R$. By
the transfer principle, ${}^*\N$ satisfies all first-order properties of $\N$. However,
${}^*\N$ properly contains $\N$: there exist \emph{infinite hypernaturals}
$N \in {}^*\N \setminus \N$ satisfying $N > n$ for all $n \in \N$.

\begin{definition}[Hyperfinite set]
For $N \in {}^*\N$, the set
\[
I_N := \{1, 2, \ldots, N\} = \{ k \in {}^*\N : 1 \leq k \leq N \}
\]
is called a \emph{hyperfinite set}. It is \emph{internal} in the sense that it arises
from the non-standard extension.
\end{definition}

The crucial property of hyperfinite sets is that they behave like finite sets from the
internal perspective. In particular:

\begin{proposition}[Internal finite operations]\label{prop:internal-sums}
Let $N \in {}^*\N$ (possibly infinite) and let $f: I_N \to {}^*\R$ be an internal function.
Then:
\begin{enumerate}[(i)]
\item The hyperfinite sum $\sum_{k=1}^N f(k) \in {}^*\R$ is well-defined.
\item The hyperfinite product $\prod_{k=1}^N f(k) \in {}^*\R$ is well-defined.
\item The maximum $\max_{k \in I_N} f(k)$ and minimum $\min_{k \in I_N} f(k)$ are
well-defined.
\item Sorting: if $f$ takes values in ${}^*\R$, then there exists an \emph{internal} permutation
$\sigma: I_N \to I_N$ such that $f(\sigma(1)) \leq f(\sigma(2)) \leq \cdots \leq f(\sigma(N))$.
\end{enumerate}
\end{proposition}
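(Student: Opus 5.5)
The plan is to prove all four items by \emph{transfer}, starting from the corresponding elementary facts about finite sequences in standard mathematics. Each item is a first-order statement quantified over all $n\in\N$ and all functions $f:\{1,\dots,n\}\to\R$, so transfer applies directly. The one preliminary is to fix the meaning of the internal objects. An internal function $f:I_N\to{}^*\R$ is an element of ${}^*\big(\bigcup_{n\in\N}\R^{\{1,\dots,n\}}\big)$, that is, a member of the nonstandard extension of the set of all finite real sequences. Its domain is $I_N$ for some $N\in{}^*\N$.

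For (i) and (ii), consider the standard maps $\Sigma,\Pi:\bigcup_{n}\R^{\{1,\dots,n\}}\to\R$. They are defined by the recursion $\Sigma(f|_{\{1\}})=f(1)$ and $\Sigma(f|_{\{1,\dots,m+1\}})=\Sigma(f|_{\{1,\dots,m\}})+f(m+1)$, and analogously for $\Pi$. These maps exist and are unique in the standard universe. Their extensions ${}^*\Sigma$ and ${}^*\Pi$ are internal functions defined on all internal $f$ with hyperfinite domain. Transfer shows that they satisfy the same recursion for every $m<N$ with $m\in{}^*\N$. We then define $\sum_{k=1}^N f(k):={}^*\Sigma(f)$, and likewise for the product. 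This is well-defined because ${}^*\Sigma$ is a function, so its value is unique. The value also agrees with the ordinary sum whenever $N$ is standard, because ${}^*\Sigma$ extends $\Sigma$.

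For (iii), we transfer the sentence ``for every $n\in\N$ and every $f:\{1,\dots,n\}\to\R$ there is $j\le n$ with $f(j)\ge f(k)$ for all $k\le n$''. The transferred sentence produces $j\in I_N$ with $f(j)=\max_{I_N}f$, and the maximum value is unique. The minimum is handled in the same way.

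For (iv), we transfer ``for every $n$ and every $f:\{1,\dots,n\}\to\R$ there exists a bijection $\sigma:\{1,\dots,n\}\to\{1,\dots,n\}$ with $f(\sigma(i))\le f(\sigma(i+1))$ for $i<n$''. The standard version holds by induction on $n$, for example by selection sort using (iii). Transfer then yields an \emph{internal} bijection $\sigma$ of $I_N$ with the required property. Internality comes for free, because the existential quantifier ranges over ${}^*$(bijections), which are internal objects. The point I expect to need care is exactly this: an arbitrary external permutation is not enough, and the argument must be phrased so that the quantifiers are bounded by standard sets before transferring. Once that is done, the rest is routine.
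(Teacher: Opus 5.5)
Your proposal is correct and follows the paper's argument: the paper also derives all four items by transferring the corresponding statements about finite sets, and it stresses that transfer makes the sorting permutation in (iv) internal. You add useful detail the paper omits, namely defining the hyperfinite sum and product as ${}^*\Sigma(f)$ and ${}^*\Pi(f)$ via the transferred recursion on ${}^*\big(\bigcup_{n}\R^{\{1,\dots,n\}}\big)$.
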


These properties follow from the transfer principle applied to the corresponding statements
about finite sets. In particular, for (iv) we transfer the statement ``for every finite set
$S$ and function $g:S\to\R$ there exists a permutation $\sigma$ sorting $g$''; this yields
an \emph{internal} permutation $\sigma$, which is essential for later applications to
order statistics.

\subsection{Internal and external sets}

A fundamental distinction in NSA is between \emph{internal} and \emph{external} sets.
Internal sets arise from the non-standard extension and satisfy the transfer principle;
external sets do not.

\begin{example}
The set $I_N=\{1,2,\dots,N\}$ for $N\in{}^*\N$ is internal. The set $\N$ itself,
viewed as a subset of ${}^*\N$, is external when ${}^*\N\neq\N$.

A convenient way to see this without appealing to second-order quantification is the
following. In the standard universe, for every $m\in\N$ every non-empty subset of
$\{1,\dots,m\}$ has a maximum. This \emph{is} a first-order statement in the superstructure.
By transfer, for every $M\in{}^*\N$ every non-empty \emph{internal} subset of $I_M$ has a
maximum.

Now pick an infinite $H\in{}^*\N$. Then $\N\subseteq I_H$. If $\N$ were internal, it would be
an internal non-empty subset of $I_H$ and hence would have a maximum, contradicting the
fact that $\N$ has no largest element. Therefore $\N$ is external.
\end{example}

\begin{proposition}[Internal definition principle]
If $P(x)$ is a first-order property and $A$ is an internal set, then
$\{ x \in A : P(x) \}$ is internal.
\end{proposition}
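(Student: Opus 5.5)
The plan is to argue by induction on the complexity of the first-order formula $P$, working in the superstructure model $V(\R)$ with its non-standard extension ${}^*\colon V(\R)\to V({}^*\R)$. I take as given that the internal sets are exactly the elements of ${}^*V(\R)$, that is, the sets belonging to some ${}^*V_m(\R)$, and that they include all ${}^*B$ for standard $B$. I also read ``first-order property'' as a bounded formula $P(x,p_1,\dots,p_k)$ whose parameters $p_1,\dots,p_k$ are themselves internal; that is the only reading under which the statement can hold, since the external $\N$ is a counterexample otherwise.

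The core step runs as follows. Because $A$ is internal, $A\in{}^*V_m(\R)$ for some standard $m$, and each parameter $p_j\in{}^*V_{m_j}(\R)$. In the standard universe the following sentence holds: for all $A'\in V_m(\R)$ and all $p_j'\in V_{m_j}(\R)$, there exists $B\in V_{m+1}(\R)$ such that for all $x$, $x\in B\leftrightarrow(x\in A'\wedge P(x,p'_1,\dots,p'_k))$. This is just bounded separation, which holds in $V(\R)$ because $V_{m+1}(\R)=V_m(\R)\cup\mathcal P(V_m(\R))$ contains every subset of a set in $V_m(\R)$. Every quantifier in the sentence is bounded by a standard set, so transfer applies. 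It yields some $B\in{}^*V_{m+1}(\R)$ such that $x\in B\iff x\in A\wedge {}^*P(x,p_1,\dots,p_k)$. So $B=\{x\in A:P(x)\}$, with $P$ interpreted internally, and $B$ is internal.

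The main obstacle is bookkeeping rather than substance. I have to make precise that ``first-order'' means bounded quantifiers ranging over internal sets, and that $P$ is read via its ${}^*$-interpretation. In particular, quantifiers over $\R$ or $\N$ inside $P$ must become quantifiers over ${}^*\R$ or ${}^*\N$. Otherwise predicates such as ``$x$ is standard'' would sneak in, and the principle would fail, exactly as in the preceding Example. I will therefore state this convention explicitly. I will also note that membership and equality between internal objects are absolute, which is what lets the transferred biconditional be read as an honest statement about elements of $A$.

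With that settled, the usual consequences follow immediately. One is that the sets $S_j$ in Lemma \ref{lem:choose-N} are internal. Another is that any internal non-empty bounded subset of ${}^*\N$ has a maximum.
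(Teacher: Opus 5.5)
The paper gives no proof of this proposition. It is stated as a standard fact, implicitly deferring to the texts cited at the start of Section~\ref{sec:nsa}, and ``first-order property'' is left informal.

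Your argument is the standard textbook proof in the superstructure setting, and it is correct: transfer the bounded separation sentence for the fixed formula $P$ from $V(\R)$, then instantiate it at the internal set $A$ and the internal parameters. Compared with the paper's bare statement, its real value is the caveat you make explicit: $P$ must be a bounded formula, read through its ${}^*$-interpretation, with only internal parameters. That is exactly what licenses the paper's later uses, namely the sets $S_j$ of Lemma~\ref{lem:choose-N} (defined via ${}^*\Pbb$) and the sets $A_r$ of Remark~\ref{rem:bootstrap-NSA}. It is also what rules out the external set $\N$ of the preceding Example.

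Two small repairs are needed. First, drop the announced induction on the complexity of $P$. You never carry it out, and the single transfer of separation makes it unnecessary. Second, the clause ``for all $x$'' in your standard sentence is an unbounded quantifier, which contradicts your assertion that every quantifier is bounded. Write the matrix as $\forall x\in B\,(x\in A'\wedge P)\wedge\forall x\in A'\,(P\to x\in B)$, or bound $x$ by $V_m(\R)$; transfer then applies verbatim.

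A minor point of attribution: the existence of maxima for non-empty bounded internal subsets of ${}^*\N$ comes from transferring the corresponding statement about $\N$. The internal definition principle only supplies the internality of the sets to which that fact is applied.
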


\subsection{Loeb measure: from hyperfinite to standard probability}

The Loeb construction \cite{Loeb} is the key tool for producing genuine probability spaces
from internal hyperfinite structures. We describe it in the context most relevant to our
applications.

Fix an infinite $N \in {}^*\N$ and consider the hyperfinite set $I_N = \{1, \ldots, N\}$.
Let $\mathscr{I}_N$ denote the algebra of all internal subsets of $I_N$. Define the
\emph{internal counting probability measure} $\mu_N: \mathscr{I}_N \to {}^*[0, 1]$ by
\begin{equation}\label{eq:counting-measure}
\mu_N(A) := \frac{|A|}{N}, \qquad A \in \mathscr{I}_N,
\end{equation}
where $|A| \in {}^*\N$ is the internal cardinality of $A$.

The triple $(I_N, \mathscr{I}_N, \mu_N)$ is an internal finitely additive probability
space. The Loeb construction promotes this to a genuine $\sigma$-additive probability
space.

\begin{theorem}[Loeb measure construction]\label{thm:Loeb}
Define the \emph{Loeb premeasure} $\mu_N^0:\mathscr{I}_N\to[0,1]$ by
$\mu_N^0(A):=\st(\mu_N(A))$ for $A\in\mathscr{I}_N$.
Define an outer measure $L^*(\mu_N)$ on \emph{all} subsets $B\subseteq I_N$ by
\[
L^*(\mu_N)(B)
:=
\inf\{\mu_N^0(A):\ A\in\mathscr{I}_N,\ B\subseteq A\}.
\]
Let $\mathscr{I}_N^L$ be the $\sigma$-algebra of $L^*(\mu_N)$-measurable sets in the
sense of Carath\'eodory, and define $L(\mu_N):=L^*(\mu_N)\!\upharpoonright_{\mathscr{I}_N^L}$.
Then:
\begin{enumerate}[(i)]
\item $\mu_N^0$ is a finitely additive probability measure on the internal algebra
$\mathscr{I}_N$.
\item $\mathscr{I}_N\subseteq \mathscr{I}_N^L$, and $L(\mu_N)$ is a complete $\sigma$-additive
probability measure on $\mathscr{I}_N^L$ extending $\mu_N^0$.
\item In particular, $\sigma(\mathscr{I}_N)\subseteq \mathscr{I}_N^L$, and the restriction
of $L(\mu_N)$ to $\sigma(\mathscr{I}_N)$ is a $\sigma$-additive extension of $\mu_N^0$.
\end{enumerate}
\end{theorem}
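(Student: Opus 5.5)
The plan is the classical Carathéodory route, with countable saturation carrying the one non-trivial step.

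Part (i) is immediate from the standard-part rules in Proposition \ref{prop:standard-part}. For disjoint internal $A,B$, internal finite additivity gives $\mu_N(A\cup B)=\mu_N(A)+\mu_N(B)$. Taking standard parts and using (i) of that proposition gives finite additivity of $\mu_N^0$. We also have $\mu_N^0(I_N)=\st(1)=1$, and $\mu_N^0\ge 0$ by (iii).

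For (ii), the key step, and the one I expect to be the main obstacle, is showing that $\mu_N^0$ is \emph{countably} additive on the algebra $\mathscr{I}_N$. Here countable saturation (Convention \ref{conv:saturation}) enters. I would first prove that if $A=\bigcup_{n\in\N}A_n$ with $A,A_n\in\mathscr{I}_N$, then $A=\bigcup_{n\le m}A_n$ for some finite $m$. Otherwise the internal sets $A\setminus(A_1\cup\dots\cup A_m)$ form a decreasing countable family of nonempty internal sets. This family has the finite intersection property, so saturation yields a point of $A$ lying in no $A_n$, a contradiction.

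Hence any countable disjoint union of internal sets that is itself internal has only finitely many nonempty pieces. Countable additivity of $\mu_N^0$ on $\mathscr{I}_N$ then reduces to finite additivity, i.e.\ to part (i). So $\mu_N^0$ is a premeasure on an algebra.

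With this in hand, $L^*(\mu_N)$ as defined is precisely the Carathéodory outer measure generated by $\mu_N^0$. Indeed, covers by countably many internal sets can be replaced by single internal covers: the infimum over countable covers $\sum\mu_N^0(A_n)$ is at least the infimum over single internal covers, by a second saturation argument that upgrades an increasing union of internal sets to an internal superset with controlled measure. Alternatively, and more simply, one just works with the standard Carathéodory outer measure $\inf\sum_n\mu_N^0(A_n)$. One then checks it agrees with the single-set formula by noting that, for $\varepsilon>0$ standard, the internal sets $C\supseteq\bigcup_{n\le m}A_n$ with $\mu_N(C)\le\sum_{n}\mu_N^0(A_n)+\varepsilon$ have the finite intersection property, so some internal $C$ covers all $A_n$.

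The Carathéodory extension theorem then gives the remaining conclusions of (ii). The $L^*$-measurable sets form a $\sigma$-algebra $\mathscr{I}_N^L$ containing $\mathscr{I}_N$. The restriction $L(\mu_N)$ is a complete, countably additive measure there, and it agrees with $\mu_N^0$ on $\mathscr{I}_N$. Completeness is automatic for Carathéodory measures: sets of outer measure zero are measurable. Total mass is $\mu_N^0(I_N)=1$.

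Part (iii) then follows formally. Since $\mathscr{I}_N^L$ is a $\sigma$-algebra containing $\mathscr{I}_N$, it contains $\sigma(\mathscr{I}_N)$. The restriction of a countably additive measure to a sub-$\sigma$-algebra remains countably additive and still extends $\mu_N^0$.
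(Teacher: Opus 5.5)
Your proposal is correct and follows the route the paper indicates: its proof defers to Loeb's construction, a Carath\'eodory extension in which countable saturation supplies countable (sub)additivity of $\mu_N^0$ on $\mathscr{I}_N$, here via your lemma that an internal countable union of internal sets is already a finite subunion. Your second saturation argument shows that the single-internal-cover infimum defining $L^*(\mu_N)$ equals the usual countable-cover Carath\'eodory outer measure; this fills in a point the paper leaves implicit when it simply calls $L^*(\mu_N)$ an outer measure.
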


\begin{proof}[Proof (reference)]
The key step is verifying that the Carathéodory construction yields a complete
$\sigma$-additive measure. The countable subadditivity of $\mu_N^0$ on
$\mathscr{I}_N$ (which follows from the internal finite additivity and saturation)
is the crucial ingredient. For the full argument, see Loeb \cite{Loeb} or
\cite[Ch.~5]{AFHL} and \cite[Chs.~4--5]{FK}.
\end{proof}

\begin{remark}
The Loeb space is rich enough to support continuous distributions, despite being built
from a discrete internal structure. For instance, if $N$ is infinite, the Loeb space
$(I_N, \mathscr{I}_N^L, L(\mu_N))$ is atomless.
\end{remark}

\subsection{Hyperfinite integration and the Riemann sum approximation}

The Loeb measure allows us to integrate functions on hyperfinite spaces. For our purposes,
the following connection between hyperfinite sums and standard integrals is fundamental.

\begin{proposition}[Hyperfinite Riemann approximation]\label{prop:Loeb-Riemann}
Let $f: [0, 1] \to \R$ be bounded and Riemann integrable, and let ${}^*f: {}^*[0, 1] \to {}^*\R$
be its non-standard extension. For any infinite $N \in {}^*\N$,
\begin{equation}\label{eq:Riemann-approx}
\int_0^1 f(t) \, dt = \st\left( \frac{1}{N} \sum_{k=1}^N {}^*f\left( \frac{k}{N} \right) \right).
\end{equation}
\end{proposition}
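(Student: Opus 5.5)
The plan is to reduce the statement to the standard characterisation of Riemann integrability through upper and lower Darboux sums, and then to use transfer together with an infinitesimal-mesh argument.

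\textbf{Step 1: fix the Riemann sums.} For each standard $n\in\N$, let $S_n := \frac1n\sum_{k=1}^n f(k/n)$ be the right-endpoint Riemann sum on the uniform partition of mesh $1/n$. Since $f$ is bounded and Riemann integrable, the upper and lower Darboux sums $U_n$ and $L_n$ for this partition both converge to $I:=\int_0^1 f(t)\,dt$. Using only the uniform partitions here is legitimate: Riemann integrability gives convergence along any sequence of partitions whose mesh tends to $0$. Because $L_n\le S_n\le U_n$, it follows that $S_n\to I$.

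\textbf{Step 2: transfer the convergence.} Write this convergence in first-order form:
\[
\forall \varepsilon\in\R_{>0}\ \exists n_0\in\N\ \forall n\in\N\ \bigl(n\ge n_0 \Rightarrow |S_n - I|<\varepsilon\bigr),
\]
where $n\mapsto S_n$ is a standard sequence $S:\N\to\R$. Now fix a standard $\varepsilon>0$. The $n_0$ it produces is standard, so transfer of the inner statement $\forall n\ge n_0\ |S_n-I|<\varepsilon$ yields $|{}^*S_N - I|<\varepsilon$ for every $N\in{}^*\N$ with $N\ge n_0$. In particular this holds for every infinite $N$.

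\textbf{Step 3: identify ${}^*S_N$ and conclude.} The extended sequence satisfies ${}^*S_N=\frac1N\sum_{k=1}^N {}^*f(k/N)$, interpreted as the hyperfinite sum of Proposition~\ref{prop:internal-sums}. This follows by transferring the defining identity $\forall n\ \bigl(S_n = \frac1n\sum_{k=1}^n f(k/n)\bigr)$, where the finite sum is a definable operation on finite sequences. Since $\varepsilon$ was an arbitrary standard positive number, ${}^*S_N\approx I$. The sum is finite, being bounded by $\sup|f|$ (again by transfer), so Proposition~\ref{prop:standard-part} gives $\st({}^*S_N)=I$, which is the claimed identity.

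The only delicate point is Step 1: uniform-mesh Riemann sums converge for every Riemann integrable $f$, not merely continuous ones. I will quote this as a standard real-analysis fact via Darboux's theorem. Alternatively, it can be derived from the $\varepsilon$--$\delta$ definition of the Riemann integral, applied to the tagged partition $\{k/n\}$ with right-endpoint tags. Boundedness of $f$ is used only to guarantee that the hyperfinite average is finite.
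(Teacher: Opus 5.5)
Your proof is correct and follows the paper's argument: state convergence of the uniform right-endpoint Riemann sums in the form $\forall\varepsilon\,\exists n_0\,\forall n\ge n_0$, transfer it to the infinite $N$, and take standard parts. You add detail the paper leaves implicit (the Darboux justification in Step~1 and the identification of ${}^*S_N$ with the hyperfinite sum); one small correction is that finiteness of ${}^*S_N$ already follows from $|{}^*S_N-I|<1$, so boundedness of $f$ is needed only through Riemann integrability, not for finiteness as your final remark says.
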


\begin{proof}
By Riemann integrability, for each $\varepsilon > 0$ there exists $n_0 \in \N$ such that
for all $n \geq n_0$,
\[
\left| \int_0^1 f(t) \, dt - \frac{1}{n} \sum_{k=1}^n f\left( \frac{k}{n} \right) \right|
< \varepsilon.
\]
By transfer, this statement holds for all $n \in {}^*\N$ with $n \geq n_0$. In particular,
it holds for the infinite $N$, giving
\[
\left| \int_0^1 f(t) \, dt - \frac{1}{N} \sum_{k=1}^N {}^*f\left( \frac{k}{N} \right) \right|
< \varepsilon.
\]
Since $\varepsilon > 0$ was arbitrary, the hyperfinite sum differs from the integral by
an infinitesimal, and taking standard parts yields \eqref{eq:Riemann-approx}.
\end{proof}

\begin{proposition}[Loeb lifting for $L^1$-functions on the hyperfinite grid]\label{prop:Loeb-L1}
Let $f\in L^1([0,1],\lambda)$, where $\lambda$ is Lebesgue measure. Fix an infinite
$N\in{}^*\N$ and write $I_N=\{1,\dots,N\}$. Then there exists an internal function
$F:I_N\to{}^*\R$ such that:
\begin{enumerate}[(i)]
\item $f(\st(k/N))=\st(F(k))$ for $L(\mu_N)$-almost all $k\in I_N$;
\item $F$ is $S$-integrable (equivalently, Loeb integrable) and
\[
\int_0^1 f(t)\,dt
=
\st\Big(\frac{1}{N}\sum_{k=1}^N F(k)\Big).
\]
\end{enumerate}
If, in addition, $f$ is bounded and Riemann integrable, one may choose $F(k)={}^*f(k/N)$.
\end{proposition}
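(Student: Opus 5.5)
The plan is to build $F$ as a step-function approximation and use a countable-saturation/overspill argument, combined with the standard fact that $\st$ pushes $L(\mu_N)$ forward to Lebesgue measure.

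\textbf{Step 1: the pushforward.} The map $\st_N:k\mapsto\st(k/N)$ sends $L(\mu_N)$ to $\lambda$. For an interval $[a,b]$, the internal set $\{k:a\le k/N\le b\}$ has normalised counting measure $\approx b-a$. The set $\st_N^{-1}([a,b])$ is a countable intersection of such internal sets, taken with slightly enlarged intervals. Hence $L(\mu_N)(\st_N^{-1}[a,b])=b-a$, and a $\pi$--$\lambda$ argument extends this to all Borel sets. The same argument shows that $\st_N$ is measurable from $\mathscr{I}_N^L$ to the Lebesgue-completed $\sigma$-algebra, so $f\circ\st_N$ is a well-defined Loeb-integrable function with
\[
\int_{I_N} f\circ\st_N\,dL(\mu_N)=\int_0^1 f\,d\lambda .
\]

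\textbf{Step 2: the lifting.} For each standard $m$, choose a continuous $g_m$ on $[0,1]$ with $\|f-g_m\|_{L^1}<2^{-m}$ and $g_m\to f$ Lebesgue-a.e. This is possible by density of $C[0,1]$ in $L^1$ together with passing to a fast-converging subsequence. Each ${}^*g_m(k/N)$ is internal, and since $g_m$ is continuous, $\st({}^*g_m(k/N))=g_m(\st(k/N))$.

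By the Riemann approximation (Proposition~\ref{prop:Loeb-Riemann}) applied to $|g_m-g_{m'}|$, we get
\[
\frac1N\sum_k|{}^*g_m(k/N)-{}^*g_{m'}(k/N)|\approx\|g_m-g_{m'}\|_{L^1}<2^{1-\min(m,m')}.
\]

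Extend $(g_m)$ to an internal sequence $({}^*g_M)_{M\in{}^*\N}$. By countable saturation, or simply by overspill applied to the internal set of $M$ satisfying these bounds for all $m'\le M$, there is an infinite $M$ such that, for every standard $m$,
\[
\frac1N\sum_k|{}^*g_M(k/N)-{}^*g_m(k/N)|\lesssim 2^{1-m}.
\]
Set $F(k):={}^*g_M(k/N)$.

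Then $F$ is the internal limit of the $S$-integrable lifts $G_m$ of $g_m\circ\st_N$ in the internal $L^1$ norm. It follows that $\st F=\lim_m g_m\circ\st_N$ in $L^1(L(\mu_N))$. By Step 1, $g_m\circ\st_N\to f\circ\st_N$ in $L^1(L(\mu_N))$. Hence $\st F=f\circ\st_N$ a.e., which is (i).

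\textbf{Step 3: $S$-integrability.} Each $G_m$ is $S$-integrable because it is bounded by the finite constant $\sup|g_m|$. Since $F$ is within internal-$L^1$ distance $\lesssim 2^{1-m}$ of an $S$-integrable $G_m$, for every standard $m$, $F$ is itself $S$-integrable. Equivalently, $\st\frac1N\sum|F|=\int|\st F|\,dL(\mu_N)$; one sees this by passing these estimates through standard parts, using Fatou in one direction and the triangle inequality in the other. Then Loeb's theorem gives
\[
\st\frac1N\sum_k F(k)=\int\st F\,dL(\mu_N)=\int_0^1 f\,d\lambda,
\]
using Step 1. This is (ii).

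The final clause follows from Proposition~\ref{prop:Loeb-Riemann} together with the following observation: a Riemann-integrable $f$ is continuous $\lambda$-a.e., so $\st({}^*f(k/N))=f(\st(k/N))$ whenever $\st(k/N)$ is a continuity point and $k/N$ is not exactly one of the finitely many exceptional standard reals. By Step 1, this holds $L(\mu_N)$-almost surely.

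The main obstacle is Step 2's choice of infinite $M$: the bounds are external ("for every standard $m$"), so they have to be phrased through the internal sets $\{M:\forall m'\le M\ \frac1N\sum|G_M-G_{m'}|<2^{2-m'}\}$. These sets contain every standard $M$ (up to infinitesimals, handled by the slack in the constant $2^{2-m'}$), so overspill applies.
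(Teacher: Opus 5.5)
Your argument is correct, but it is not the paper's route. The paper gives no argument: it cites the general Loeb lifting theorem and the $S$-integrability criterion from the standard texts, and says the final clause follows from Proposition~\ref{prop:Loeb-Riemann}. That cited route works on an arbitrary hyperfinite Loeb space. There one lifts a Loeb-integrable function via simple functions and countable saturation, then truncates at an infinite level to force $S$-integrability.

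You instead exploit the concrete structure of $([0,1],\lambda)$:
\begin{itemize}
\item You prove that $k\mapsto\st(k/N)$ pushes $L(\mu_N)$ forward to $\lambda$. Statement (i) tacitly needs this, since $f$ is only defined up to $\lambda$-null sets.
\item You approximate $f$ in $L^1$ by continuous $g_m$, whose grid evaluations are controlled by Proposition~\ref{prop:Loeb-Riemann}.
\item You take $F={}^*g_M(\cdot/N)$ with $M$ infinite, obtained by overspill on the internal Cauchy condition.
\item $S$-integrability follows from $\frac1N\sum_{k\in A}|F(k)|\le\sup|g_m|\,\mu_N(A)+2^{2-m}$ for every standard $m$.
\item Claim (ii) even follows without Loeb's theorem, by comparing $\frac1N\sum_k F(k)$ with $\frac1N\sum_k G_m(k)\approx\int_0^1 g_m$.
\end{itemize}

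The cited approach buys generality: it handles any Loeb-integrable function on any hyperfinite space. Your version buys three things. It is self-contained, using only tools from Section~\ref{sec:nsa} plus density of $C[0,1]$ in $L^1$. It produces an explicit lifting. And it genuinely proves the final clause: Proposition~\ref{prop:Loeb-Riemann} alone gives only (ii), while (i) for $F(k)={}^*f(k/N)$ needs Lebesgue's criterion together with your Step~1.

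Two minor points.
\begin{itemize}
\item The passage from the internal bounds to $\st F=\lim_m g_m\circ\st_N$ in $L^1(L(\mu_N))$ uses the Loeb--Fatou inequality $\int\st|H|\,dL(\mu_N)\le\st\bigl(\frac1N\sum_k|H(k)|\bigr)$ for the internal function $H=F-G_m$. You should state it explicitly.
\item The exclusion of ``exceptional standard reals'' in the final clause is superfluous. Continuity of $f$ at $t$ already gives ${}^*f(x)\approx f(t)$ for every $x\approx t$, including $x=t$.
\end{itemize}
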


\begin{proof}[Proof (reference)]
This is a standard lifting result in Loeb integration theory: every $L^1$ function admits
an internal $S$-integrable lifting on the hyperfinite grid, and the Loeb integral equals
the standard part of the internal counting integral. See, for example,
\cite[Chs.~4--5]{FK} (Loeb liftings and $S$-integrability) or \cite[Ch.~5]{AFHL}.
The last sentence follows from Proposition \ref{prop:Loeb-Riemann}.
\end{proof}

\begin{remark}[A concrete $S$-integrability criterion]\label{rem:S-integrable}
An internal function $F:I_N\to{}^*\R$ is called \emph{$S$-integrable} if
$\frac{1}{N}\sum_{k=1}^N |F(k)|$ is finite and, for every internal $A\subseteq I_N$ with
$\mu_N(A)\approx 0$, one has $\frac{1}{N}\sum_{k\in A}|F(k)|\approx 0$.
In that case $\st(F)$ is Loeb integrable and its Loeb integral equals the standard part
of the internal counting integral.
\end{remark}

\subsection{The hyperfinite strong law and Glivenko--Cantelli theorem}

For statistical applications, we need the non-standard versions of classical limit theorems.

\begin{theorem}[Hyperfinite strong law of large numbers]\label{thm:hSLLN}
Let $(X_i)_{i\in\N}$ be i.i.d.\ with $\E[|X_1|]<\infty$ on $(\Omega,\mathscr{G},\Pbb)$.
Let $N\in{}^*\N$ be infinite and consider the hyperfinite extension
$(X_1,\dots,X_N)$ as random variables on $({}^*\Omega,L({}^*\mathscr{G}),L({}^*\Pbb))$.
Then
\[
\frac{1}{N}\sum_{k=1}^N X_k \approx \E[X_1]
\qquad\text{holds $L({}^*\Pbb)$-almost surely.}
\]
\end{theorem}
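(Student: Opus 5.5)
The plan is to derive the statement from the classical strong law by transfer, using only that an almost sure limit controls the tail suprema. Write $S_n:=\sum_{k=1}^n X_k$ and $\mu:=\E[X_1]$. For standard rational $\varepsilon>0$ and standard $m\in\N$, consider the standard events
\[
B_{m,\varepsilon}:=\Big\{\omega\in\Omega:\ \sup_{n\ge m}\Big|\tfrac{S_n(\omega)}{n}-\mu\Big|>\varepsilon\Big\}
=\bigcup_{n\ge m}\Big\{\Big|\tfrac{S_n}{n}-\mu\Big|>\varepsilon\Big\}\in\mathscr{G}.
\]
Kolmogorov's strong law gives $S_n/n\to\mu$ $\Pbb$-almost surely. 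Since $B_{m,\varepsilon}$ decreases in $m$ to a subset of the non-convergence set, continuity from above yields $\Pbb(B_{m,\varepsilon})\to0$ as $m\to\infty$, for each fixed $\varepsilon$.

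Next I transfer. The standard sentence ``for all $n\in\N$ with $n\ge m$, $\{|S_n/n-\mu|>\varepsilon\}\subseteq B_{m,\varepsilon}$'' has standard parameters $m,\varepsilon,\mu$ and the sequence $(X_k)$. By transfer it holds for all $n\in{}^*\N$ with $n\ge m$, with $B_{m,\varepsilon}$ replaced by ${}^*B_{m,\varepsilon}$. The infinite $N$ exceeds every standard $m$, so the internal event
\[
A_{N,\varepsilon}:=\Big\{\omega\in{}^*\Omega:\ \Big|\tfrac{S_N(\omega)}{N}-\mu\Big|>\varepsilon\Big\}\in{}^*\mathscr{G}
\]
satisfies $A_{N,\varepsilon}\subseteq{}^*B_{m,\varepsilon}$ for every standard $m$. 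Transfer also gives ${}^*\Pbb({}^*B_{m,\varepsilon})=\Pbb(B_{m,\varepsilon})$, since this is a real number. Hence
\[
L({}^*\Pbb)(A_{N,\varepsilon})=\st\,{}^*\Pbb(A_{N,\varepsilon})\le\Pbb(B_{m,\varepsilon})\quad\text{for all standard } m,
\]
and letting $m\to\infty$ shows $L({}^*\Pbb)(A_{N,\varepsilon})=0$.

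Finally, observe that $S_N/N\not\approx\mu$ holds exactly when $|S_N/N-\mu|>\varepsilon$ for some standard rational $\varepsilon>0$. The exceptional set is therefore the external, but countable, union $\bigcup_{\varepsilon\in\Qbb_{>0}}A_{N,\varepsilon}$. By $\sigma$-additivity of the Loeb measure (Theorem \ref{thm:Loeb}) it is $L({}^*\Pbb)$-null. This gives the almost sure statement. In particular $S_N/N$ is finite almost surely, so its standard part equals $\mu$ almost surely.

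The only delicate point is the middle step. One cannot simply pass to the limit pathwise at a nonstandard $\omega$, because the standard a.s.\ convergence says nothing directly about nonstandard sample points. The way around this is to encode the convergence as the decay of the standard probabilities $\Pbb(B_{m,\varepsilon})$ and to note that $A_{N,\varepsilon}\subseteq{}^*B_{m,\varepsilon}$ is a statement about a single infinite index $N$, to which transfer applies. Once this is in place, the argument holds for every infinite $N$, so no appeal to Lemma \ref{lem:choose-N} or to saturation beyond the Loeb construction itself is needed.
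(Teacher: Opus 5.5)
Your proof is correct. Its skeleton matches the paper's: classical SLLN, encode it in standard events, transfer, take Loeb measure, intersect over rational $\varepsilon$. The crucial middle step is different, however, and yours is the one that works.

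The paper uses the full-probability event $A_\varepsilon=\{\exists n_0\in\N\ \forall n\ge n_0:\ |S_n/n-\mu|<\varepsilon\}$, with $\mu=\E[X_1]$. It transfers the defining property, so each $\omega\in{}^*A_\varepsilon$ has some $n_0\in{}^*\N$ beyond which the bound holds, and then applies this at $N$. That last step is not justified. The transferred witness $n_0=n_0(\omega)$ is a hypernatural that may be infinite and larger than $N$. Indeed, for a non-degenerate Gaussian $X_1$, transferring ``$\Pbb(A_\varepsilon\cap\{|S_n/n-\mu|>\varepsilon\})>0$ for all $n$'' shows that ${}^*A_\varepsilon$ contains paths with $|S_N/N-\mu|>\varepsilon$. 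So $L({}^*\Pbb)({}^*A_\varepsilon)=1$ does not by itself control $S_N/N$ pathwise.

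Your tail events $B_{m,\varepsilon}$ have a \emph{standard} starting index $m$, and you use the quantitative decay $\Pbb(B_{m,\varepsilon})\to0$. This guarantees $N\ge m$ and is exactly what is needed; the delicate point you flagged is precisely where the paper's argument slips. The paper's proof can be repaired in the same spirit. Write $A_\varepsilon$ as the increasing union of $A_{\varepsilon,m}=\{\forall n\ge m:\ |S_n/n-\mu|<\varepsilon\}$ over standard $m$. Use only the internal sets ${}^*A_{\varepsilon,m}$; on each of them the bound at $N$ does hold. Their countable union has Loeb measure $\sup_m\Pbb(A_{\varepsilon,m})=1$.

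A further simplification: the statement concerns a single infinite $N$, so even the weak law suffices. Transferring ``$\Pbb(|S_n/n-\mu|>\varepsilon)<\delta$ for all $n\ge n_0$'' gives ${}^*\Pbb(A_{N,\varepsilon})<\delta$ directly for every standard $\delta$. The paper itself uses this device in the proof of Theorem \ref{thm:hGC}.
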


\begin{proof}
By the classical strong law, $\frac{1}{n} \sum_{k=1}^n X_k \to \E[X_1]$ almost surely as
$n \to \infty$. For each rational $\varepsilon > 0$, define
\[
A_\varepsilon := \left\{ \omega\in\Omega : \exists n_0 \in \N \text{ such that } \forall n \geq n_0,
\left| \frac{1}{n} \sum_{k=1}^n X_k(\omega) - \E[X_1] \right| < \varepsilon \right\}.
\]
This set is $\mathscr{G}$-measurable (it is a countable union/intersection of measurable events),
and the strong law gives $\Pbb(A_\varepsilon) = 1$.

The non-standard extension ${}^*A_\varepsilon\subseteq{}^*\Omega$ is internal (it is the star-extension
of a standard set), hence ${}^*A_\varepsilon\in{}^*\mathscr{G}\subseteq L({}^*\mathscr{G})$.
Moreover,
\[
L({}^*\Pbb)({}^*A_\varepsilon)=\st({}^*\Pbb({}^*A_\varepsilon))=\st(\Pbb(A_\varepsilon))=1,
\]
using the fact that ${}^*\Pbb({}^*A_\varepsilon)=\Pbb(A_\varepsilon)$ by transfer of the probability.

For $\omega \in {}^*A_\varepsilon$, the internal statement
``$\exists n_0\in{}^*\N$ such that $\forall n\in{}^*\N$ with $n \geq n_0$,
$|\frac{1}{n} \sum_{k=1}^n X_k(\omega) - \E[X_1]| < \varepsilon$''
holds (by transfer of the defining property of $A_\varepsilon$).
In particular, it holds for the infinite $N$, giving
$|\frac{1}{N}\sum_{k=1}^N X_k(\omega)-\E[X_1]|<\varepsilon$.

Since $L({}^*\Pbb)({}^*A_\varepsilon)=1$ for each rational $\varepsilon>0$, intersecting over
$\varepsilon\in\Qbb_{>0}$ (a countable intersection of Loeb-measure-one sets) yields the claim
$L({}^*\Pbb)$-almost surely.
\end{proof}

\begin{remark}[Two-level almost sure statements]\label{rem:two-level}
In hyperfinite sampling statements we use two measures simultaneously:
$L({}^*\Pbb)$ governs the randomness of the sample path $\omega\in{}^*\Omega$,
while $L(\mu_N)$ governs the fraction of indices $k\in I_N$ for a fixed sample path.
Thus ``for $L(\mu_N)$-almost all $k$'' should be read as: for $L({}^*\Pbb)$-almost all
sample outcomes, the exceptional set of indices has Loeb counting measure zero.
\end{remark}

\begin{theorem}[Hyperfinite Glivenko--Cantelli / quantile shadow]\label{thm:hGC}
Let $(X_i)_{i\in\N}$ be i.i.d.\ with distribution function $F$ and lower quantile function
$q(\alpha)=\inf\{x:\,F(x)\ge \alpha\}$. Assume $\E[|X_1|]<\infty$, equivalently
$\int_0^1 |q(\alpha)|\,d\alpha<\infty$.

Fix an infinite $N\in{}^*\N$ and consider the hyperfinite sample $(X_1,\dots,X_N)$ with order
statistics $X_{1:N}\le\cdots\le X_{N:N}$. Let $\alpha_k:=k/N$.

Then $L({}^*\Pbb)$-almost surely, the set of indices
\[
G:=\{k\in I_N:\ k<N\ \text{and}\ X_{k:N}\approx q(\alpha_k)\}
\]
has Loeb counting measure $L(\mu_N)(G)=1$.

Moreover, for every bounded \emph{Riemann integrable} $g:[0,1]\to\R$ with
$\int_0^1 |g(\alpha)q(\alpha)|\,d\alpha<\infty$, we have
\begin{equation}\label{eq:quantile-integral}
\st\Big(\frac{1}{N}\sum_{k=1}^N g(\alpha_k)X_{k:N}\Big)
=
\int_0^1 g(\alpha)q(\alpha)\,d\alpha.
\end{equation}
\end{theorem}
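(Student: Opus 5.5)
We plan to transfer the classical almost-sure statements to the infinite sample size $N$ exactly as in the proof of Theorem \ref{thm:hSLLN}, and then identify the hyperfinite L-statistic with a Loeb integral via $S$-integrability. The standard inputs are: (a) the Glivenko--Cantelli theorem, $\sup_x|F_n(x)-F(x)|\to0$ a.s.; (b) the strong law applied to $|X_1|$ and to truncated tails $|X_1|\ind_{\{|X_1|>M\}}$ for rational $M$; and (c) the classical consequence that for every continuity point $\alpha\in(0,1)$ of $q$, $X_{\lceil n\alpha\rceil:n}\to q(\alpha)$ a.s. Each of these events is a standard measurable set of full $\Pbb$-measure defined by a statement of the form ``$\exists n_0\,\forall n\ge n_0$''. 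Hence its star-extension has Loeb measure one, and on it the bound holds at $n=N$. Intersecting countably many such events gives a single $L({}^*\Pbb)$-full set on which we work with a fixed $\omega$.

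\textbf{Step 1 (pointwise quantile shadow).} Fix such $\omega$. Let $\alpha\in(0,1)$ be standard, and let $D$ be the countable set of discontinuities of $q$. For $k$ with $\st(\alpha_k)=\alpha\notin D$, monotonicity of $k\mapsto X_{k:N}$ combined with the transferred bounds at rational $\alpha'<\alpha<\alpha''$ sandwiches $X_{k:N}$ between values within $\varepsilon$ of $q(\alpha')$ and $q(\alpha'')$. The same sandwich, via monotonicity of ${}^*q$, applies to $q(\alpha_k)$. Letting $\alpha',\alpha''\to\alpha$ and using continuity of $q$ at $\alpha$ gives $X_{k:N}\approx q(\alpha_k)$. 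The indices whose standard part lies in $D\cup\{0,1\}$ form a set of Loeb measure zero: the pushforward of $L(\mu_N)$ under $k\mapsto\st(k/N)$ is Lebesgue measure, and $D$ is countable. The set $G$ is not internal, but it contains a Loeb-measurable set of full measure. Since the Loeb measure is complete, $G$ is measurable with $L(\mu_N)(G)=1$.

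\textbf{Step 2 (integral identity).} Both $\frac1N\sum g(\alpha_k)X_{k:N}$ and $\frac1N\sum g(\alpha_k)q(\alpha_k)$ should be shown to be $S$-integrable sums with standard part equal to the Loeb integral of $\st(g(\alpha_k))\,q(\st\alpha_k)$. That integral equals $\int_0^1 gq$ by the pushforward property above, together with the fact that $g$ is continuous a.e. (Riemann integrability), so $\st g(\alpha_k)=g(\st\alpha_k)$ for almost all $k$. Step 1 gives a.e. agreement of the standard parts of the two integrands. By Remark \ref{rem:S-integrable}, it therefore suffices to prove $S$-integrability of $k\mapsto X_{k:N}$ ($g$ is bounded). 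Finiteness follows from $\frac1N\sum_k|X_{k:N}|=\frac1N\sum_k|X_k|\approx\E|X_1|$ by Theorem \ref{thm:hSLLN}.

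\textbf{Main obstacle.} The hard step is uniform negligibility on internal sets $A$ with $\mu_N(A)\approx0$, because of the heavy tails near $\alpha=0,1$. We will use
\[
\tfrac1N\sum_{k\in A}|X_{k:N}|\le M\mu_N(A)+\tfrac1N\sum_{k=1}^N|X_k|\ind_{\{|X_k|>M\}}.
\]
The last term is $\approx\E[|X_1|\ind_{\{|X_1|>M\}}]$ for each rational $M$ by the transferred strong law, and this expectation tends to $0$ as $M\to\infty$. Taking $M$ standard, the first term is infinitesimal, so the left side is infinitesimal. The same argument applies to $q(\alpha_k)$, using $\frac1N\sum|q(\alpha_k)|\ind_{\{|q(\alpha_k)|>M\}}\approx\int|q|\ind_{\{|q|>M\}}$. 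For the latter we will apply Proposition \ref{prop:Loeb-Riemann} on compact subintervals, plus a monotonicity comparison of the Riemann sums with the integral near the endpoints (since $q$ is monotone). Combining the two $S$-integrable representations yields \eqref{eq:quantile-integral}. The exclusion $k<N$ in $G$ is harmless: it is a single index, of measure zero.
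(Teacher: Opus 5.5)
Your argument is correct, but it is organised differently from the paper's.

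\textbf{Pointwise shadow.} The paper transfers the deterministic bracketing implication $\|F_n-F\|_\infty\le\varepsilon\Rightarrow q(\alpha-\varepsilon)\le q_n(\alpha)\le q(\alpha+\varepsilon)$. It uses the internal infinitesimal $\varepsilon_0=\sup_x|F_N(x)-F(x)|$, obtained from Glivenko--Cantelli in probability. It then disposes of the exceptional indices by a direct estimate on the halos of the countably many discontinuities of $q$. You instead transfer almost sure convergence at countably many fixed levels and sandwich, using monotonicity of $k\mapsto X_{k:N}$ and of ${}^*q$. You also use the fact that the image of $L(\mu_N)$ under $k\mapsto\st(k/N)$ is Lebesgue measure.

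\textbf{Integral identity.} The paper proves the stronger intermediate statement $\int_0^1|q_N-q|\,d\alpha\approx 0$, via truncation, Loeb dominated convergence, and Theorem~\ref{thm:hSLLN} on the tails. It then splits $\int g_Nq_N-\int gq=\int g_N(q_N-q)+\int(g_N-g)q$ and handles $g$ by Riemann approximation. You prove $S$-integrability of $k\mapsto X_{k:N}$ by the same tail truncation and read off the limit as a Loeb integral. The change of variables goes through the image-measure fact and the a.e.\ continuity of $g$.

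\textbf{What each route buys.} Your route stays entirely on the grid $I_N$ and never has to give meaning to internal integrals of $q_N-{}^*q$ over ${}^*[0,1]$. The price is the image-measure theorem, which is standard but not proved in the paper. The paper's route produces a reusable hyperfinite $L^1$ (Wasserstein-type) bound. This mirrors the step $\int_0^1|q_n-q_X|\to 0$ in Theorems~\ref{thm:spectralConsistency} and~\ref{thm:uniform-spectral}.

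\textbf{Three small repairs.}

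First, your input (c) holds only at continuity points of $q$. So take $\alpha',\alpha''$ from a countable dense set of continuity points rather than from all rationals. Alternatively, use the one-sided bounds $q(\alpha)\le\liminf_n q_n(\alpha)$ and $\limsup_n q_n(\alpha)\le q(\alpha+)$, which hold at every $\alpha$ on the Glivenko--Cantelli event.

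Second, consider the star-extension of an event of the form ``$\exists n_0\,\forall n\ge n_0$''. On it, the internal witness $n_0(\omega)$ may be an infinite hypernatural exceeding $N$, so the bound at $n=N$ does not hold on the whole set. You need to add that the least witness satisfies ${}^*\Pbb(n_0>N)\le\Pbb(n_0>m)\to 0$ for standard $m$. Alternatively, transfer in-probability versions, as the paper does for Glivenko--Cantelli. The same slip appears in the paper's proof of Theorem~\ref{thm:hSLLN}, whose statement is nonetheless true.

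Third, the $S$-integrability of $k\mapsto q(\alpha_k)$ that you also plan to prove is unnecessary: the change of variables already yields $\int_0^1 gq$. This is fortunate, since $q(\alpha_N)=q(1)$ may equal $+\infty$, which is exactly why $k=N$ is excluded from $G$.
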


\begin{proof}
\emph{Internal infinitesimal bound via convergence in probability.}
The classical Glivenko--Cantelli theorem states that $\sup_x|F_n(x)-F(x)|\to 0$ almost surely.
The following weaker consequence suffices: for every $\eta>0$,
\[
\Pbb\Big(\sup_x|F_n(x)-F(x)|>\eta\Big)\to 0\quad\text{as }n\to\infty.
\]
This is a first-order statement: for every standard $\eta>0$ and $\delta>0$, there exists
$n_0\in\N$ such that for all $n\ge n_0$, $\Pbb(\sup_x|F_n-F|>\eta)<\delta$.

By transfer, for our fixed infinite $N$ and any standard $\eta,\delta>0$,
\[
{}^*\Pbb\Big(\sup_x|F_N(x)-F(x)|>\eta\Big)<\delta.
\]
Taking Loeb measures (standard parts), $L({}^*\Pbb)(\sup_x|F_N-F|>\eta)\le\delta$. Since
$\delta>0$ was arbitrary, $L({}^*\Pbb)(\sup_x|F_N-F|>\eta)=0$.

Intersecting over a countable sequence $\eta=1/m$ for $m\in\N$, we obtain
\[
L({}^*\Pbb)\Big(\sup_x|F_N(x)-F(x)|\text{ is infinitesimal}\Big)=1.
\]
Fix a sample path in this Loeb-probability-one event, and let $\varepsilon_0:=\sup_x|F_N(x)-F(x)|$.
Then $\varepsilon_0\approx 0$ is a positive infinitesimal. Moreover, $\varepsilon_0$ is \emph{internal}
(it is the internal supremum of an internal function), so we may use it in transferred statements.

The standard quantile bracketing implication: for any $\alpha\in(0,1)$ and $\varepsilon\in(0,\alpha\wedge(1-\alpha))$,
\begin{equation}\label{eq:bracketing}
\sup_x|F_n(x)-F(x)|\le \varepsilon
\quad\Longrightarrow\quad
q(\alpha-\varepsilon)\le q_n(\alpha)\le q(\alpha+\varepsilon).
\end{equation}
This can be expressed using only bounded quantification over $\R$ and the defining formula for
the lower quantile, so it is a first-order statement in $\alpha$, $\varepsilon$, and $F_n$.
By transfer, \eqref{eq:bracketing} holds for all hyperreal $\alpha\in{}^*(0,1)$ and hyperreal
$\varepsilon\in{}^*(0,\alpha\wedge(1-\alpha))$ with $\|F_N-F\|_\infty\le\varepsilon$.

Since $\|F_N-F\|_\infty\le\varepsilon_0$ and $\varepsilon_0$ is infinitesimal,
for any hyperreal $\alpha\in{}^*(0,1)$ with $\alpha>\varepsilon_0$ and $\alpha<1-\varepsilon_0$,
writing $q_N(\alpha):=X_{\lceil N\alpha\rceil:N}$ for the internal empirical quantile function,
\[
q(\alpha-\varepsilon_0)\le q_N(\alpha)\le q(\alpha+\varepsilon_0).
\]
In particular, $q_N(\alpha_k)\le q(\alpha_k+\varepsilon_0)$ and $q_N(\alpha_k)\ge q(\alpha_k-\varepsilon_0)$.

\emph{Halos of discontinuities and boundary have Loeb measure zero.}
Since $q$ is monotone, its set of discontinuity points $\mathrm{Disc}(q)\subseteq(0,1)$ is at most countable.
For each standard $\alpha_0\in\mathrm{Disc}(q)\cup\{0,1\}$, define the \emph{halo}
\[
H_{\alpha_0}:=\{k\in I_N:\ \alpha_k\approx \alpha_0\}=\{k\in I_N:\ |\alpha_k-\alpha_0|\ \text{is infinitesimal}\}.
\]
We claim $L(\mu_N)(H_{\alpha_0})=0$. For any standard $\delta>0$,
\[
H_{\alpha_0}\subseteq\{k\in I_N:\ |\alpha_k-\alpha_0|<\delta\},
\]
and the right-hand side is internal with
\[
\mu_N(\{k:|\alpha_k-\alpha_0|<\delta\})\le 2\delta.
\]
Taking standard parts, $L(\mu_N)(H_{\alpha_0})\le 2\delta$. Since $\delta>0$ was arbitrary,
$L(\mu_N)(H_{\alpha_0})=0$. (Note: $H_{\alpha_0}$ has outer Loeb measure zero, hence is Loeb measurable.)

Define
\[
D:=\bigcup_{\alpha_0\in\mathrm{Disc}(q)\cup\{0,1\}}H_{\alpha_0}.
\]
Since $\mathrm{Disc}(q)\cup\{0,1\}$ is countable and each $H_{\alpha_0}$ has Loeb measure $0$,
and $L(\mu_N)$ is countably additive, $L(\mu_N)(D)=0$.

\emph{The quantile shadow on $G=I_N\setminus D$.}
Take $k\notin D$ with $k<N$, so $\alpha_k$ is not in the halo of any discontinuity of $q$ nor in the halo of
$\{0,1\}$. This means $t:=\st(\alpha_k)\in(0,1)\setminus\mathrm{Disc}(q)$, hence $q$ is continuous at $t$.

Since $q$ is continuous at $t$ and $\alpha_k\approx t$, both $\alpha_k-\varepsilon_0\approx t$
and $\alpha_k+\varepsilon_0\approx t$ (because $\varepsilon_0$ is infinitesimal). Therefore
\[
q(\alpha_k-\varepsilon_0)\approx q(t)\approx q(\alpha_k+\varepsilon_0).
\]
The transferred bracketing gives
\[
q(\alpha_k-\varepsilon_0)\le q_N(\alpha_k)=X_{k:N}\le q(\alpha_k+\varepsilon_0),
\]
hence $X_{k:N}\approx q(t)=q(\st(\alpha_k))\approx q(\alpha_k)$ (using continuity again).
Thus $k\in G$, and we conclude $G\supseteq (I_N\setminus D)\cap\{k:k<N\}$, hence $L(\mu_N)(G)=1$
(since $\{N\}$ has Loeb measure $1/N\approx 0$).

\emph{Convention for the integral identity.}
We interpret integrals involving internal functions (such as $q_N$ and $g_N$) as Loeb integrals
on the hyperfinite grid $\{\alpha_k:k\in I_N\}$, so that
\[
\int_0^1 H(\alpha)\,d\alpha
=
\st\Big(\frac{1}{N}\sum_{k=1}^N H(\alpha_k)\Big)
\]
whenever $H$ is $S$-integrable on the grid (cf.\ Remark~\ref{rem:S-integrable}
and Lemma~\ref{lem:shadowExp}).

\emph{The integral identity via truncation.}
Define the step function $g_N(\alpha):=g(\alpha_k)$ on $((k-1)/N,k/N]$. Then
\[
\frac{1}{N}\sum_{k=1}^N g(\alpha_k)X_{k:N}
=
\int_0^1 g_N(\alpha)\,q_N(\alpha)\,d\alpha
\]
because $q_N(\alpha)=X_{k:N}$ on $((k-1)/N,k/N]$.

We prove $\int_0^1|q_N-q|\,d\alpha\approx 0$ directly by truncation (not as a consequence of
the quantile shadow, which only gives pointwise information on a full-Loeb set).
For $M>0$, define $q^{(M)}:=\max(\min(q,M),-M)$ and $q_N^{(M)}:=\max(\min(q_N,M),-M)$.
Then
\[
\int_0^1|q_N-q|\le \int_0^1|q_N^{(M)}-q^{(M)}|+\int_0^1|q_N-q_N^{(M)}|+\int_0^1|q-q^{(M)}|.
\]
For the first term: $|q_N^{(M)}-q^{(M)}|\le 2M$ pointwise, and on the full-Loeb set $G$,
$q_N^{(M)}(\alpha_k)\approx q^{(M)}(\alpha_k)$. By the Loeb dominated convergence theorem
(see \cite[Thm.~4.3.6]{AFHL}: for internal functions dominated by an $S$-integrable bound,
pointwise near-equality on a Loeb-measure-one set implies the integrals are infinitely close),
$\int_0^1|q_N^{(M)}-q^{(M)}|\approx 0$ for each fixed standard $M$.

For the second term: note that
\[
\int_0^1|q_N-q_N^{(M)}|\,d\alpha
=
\frac{1}{N}\sum_{i=1}^N |X_i|\,\ind_{\{|X_i|>M\}},
\]
because $q_N$ is the empirical quantile step function. Apply
Theorem~\ref{thm:hSLLN} to the integrable variables
\[
Y_i^{(M)}:=|X_i|\,\ind_{\{|X_i|>M\}}.
\]
Then
\[
\int_0^1|q_N-q_N^{(M)}|\,d\alpha
\approx
\E\!\left[|X_1|\,\ind_{\{|X_1|>M\}}\right].
\]
Since $X_1\in L^1$, the right-hand side tends to $0$ as $M\to\infty$.

For the third term,
\[
\int_0^1|q-q^{(M)}|\,d\alpha
=
\E\!\left[|X_1|\,\ind_{\{|X_1|>M\}}\right]\xrightarrow[M\to\infty]{}0.
\]

Hence, for every standard $\delta>0$, one can choose standard $M$ so large that the
second and third terms are both $<\delta/3$, while for this fixed $M$ the first term is
infinitesimal. Therefore
\[
\int_0^1|q_N-q|\,d\alpha\approx 0.
\]

Now we can complete the proof of \eqref{eq:quantile-integral}:
\[
\int_0^1 g_N q_N-\int_0^1 gq
=
\int_0^1 g_N(q_N-q)+\int_0^1 (g_N-g)q.
\]
The first term is bounded by $\|g\|_\infty\int_0^1 |q_N-q|\approx 0$.
For the second term, fix $M$ and split $q=q^{(M)}+(q-q^{(M)})$:
\[
\Big|\int_0^1 (g_N-g)q\Big|
\le
\|q^{(M)}\|_\infty\int_0^1|g_N-g|
+2\|g\|_\infty\int_0^1|q-q^{(M)}|.
\]
Since $g$ is Riemann integrable, $\int_0^1|g_N-g|\approx 0$ (by the hyperfinite Riemann sum
approximation). Let $M\to\infty$ and use $\int_0^1|q-q^{(M)}|\to 0$. Taking standard parts
yields \eqref{eq:quantile-integral}.
\end{proof}

\subsection{Overspill and underspill}

Two technical principles are frequently used in NSA arguments. We assume throughout that
our non-standard universe is \emph{countably saturated}: every countable collection of
internal sets with the finite intersection property has non-empty intersection.

\begin{proposition}[Overspill]\label{prop:overspill}
Let $A \subseteq {}^*\N$ be internal. If $A$ contains all standard natural numbers, then
$A$ contains some infinite hypernatural.
\end{proposition}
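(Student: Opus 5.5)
The plan is a proof by contradiction that exploits the fact, shown in the Example of the previous subsection, that $\N$ is external. Assume $A\subseteq{}^*\N$ is internal and $\N\subseteq A$, and suppose, for contradiction, that $A$ contains no infinite hypernatural, so that $A\subseteq\N$. Combined with $\N\subseteq A$, this gives $A=\N$, and then $\N$ would be internal, which contradicts the Example.

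To keep the argument self-contained, I would rerun the Example's argument directly on a set built from $A$. Consider the complement $B:={}^*\N\setminus A$. It is internal, being the difference of two internal sets; this follows from the internal definition principle applied to the property $x\notin A$. Now fix an infinite $H\in{}^*\N$ and set $B_H:=B\cap I_H$, which is again internal. If $A\subseteq\N$, then $B_H$ consists precisely of the infinite hypernaturals $\le H$, so it is non-empty since it contains $H$. By transfer of ``every non-empty subset of $\{1,\dots,m\}$ has a least element'', $B_H$ has a least element $K$. This $K$ is infinite, so $K-1$ is also infinite (a standard $K-1$ would make $K$ standard), and $K-1\ge 1$. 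Hence $K-1\in B_H$, which contradicts the minimality of $K$. Therefore $A$ contains an infinite hypernatural.

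An equivalent variant argues directly via the maximum. The set $A\cap I_H$ together with the property ``every $m\le$ that element lies in $A$'' yields the internal set $C:=\{n\in I_H:\ \forall m\le n,\ m\in A\}$, which contains $\N$. Transfer gives $C$ a maximum $n^*$. Because $C\supseteq\N$ and $\N$ has no largest element, $n^*$ must be infinite, and $n^*\in C$ gives $n^*\in A$. I would present this version, since it also shows that $A$ contains every hypernatural up to some infinite $n^*$.

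The only delicate step is justifying that $C$, or $B_H$, is internal: the defining formula must quantify only over internal objects, namely $A$ and ${}^*\N$. I would verify this explicitly via the internal definition principle. Countable saturation is not needed for this result.
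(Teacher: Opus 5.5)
Your proof is correct, and it takes a different route from the paper's. The paper proves overspill via countable saturation. It considers the internal sets $A_m=\{n\in A:\ n>m\}$, each non-empty because $m+1\in A$, notes that they have the finite intersection property, and takes any element of $\bigcap_{m\in\N}A_m$, which is automatically an infinite element of $A$.

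You instead use only two things: transfer of the least-element (or maximum) principle for internal subsets of a hyperfinite set $I_H$, and the existence of a single infinite $H$. This is in effect the argument of the paper's Example showing that $\N$ is external. In your $C$-version, the maximum $n^*$ of $C$ must be infinite because $\N\subseteq C$ has no largest element.

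Your approach has two advantages. First, it works in any proper non-standard extension with no saturation hypothesis at all, so your closing remark is accurate. Second, the $C$-version gives the stronger conclusion $\{1,\dots,n^*\}\subseteq A$ for some infinite $n^*$.

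The paper's approach is shorter, given that countable saturation is a standing convention there. It also visibly uses only that $A\cap\{n>m\}\neq\emptyset$ for each standard $m$, i.e. that $A\cap\N$ is unbounded, rather than $\N\subseteq A$. That weaker hypothesis does not need saturation either: apply your maximum argument to the internal set $A\cap I_H$, whose maximum cannot be standard if $A\cap\N$ is unbounded.
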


\begin{proof}
For each standard $m\in\N$, define the internal set
\[
A_m:=\{n\in{}^*\N:\ n\in A\ \wedge\ n>m\}.
\]
Each $A_m$ is internal by the internal definition principle (intersection of $A$ with
the internal set $\{n\in{}^*\N:n>m\}$). Each $A_m$ is non-empty since $m+1\in A$
(as $A$ contains all standard naturals) and $m+1>m$.

The collection $\{A_m:m\in\N\}$ is countable and has the finite intersection property:
for any $m_1,\dots,m_k\in\N$, we have $A_{m_1}\cap\cdots\cap A_{m_k}\supseteq A_{\max(m_1,\dots,m_k)}\neq\emptyset$.

By countable saturation, $\bigcap_{m\in\N}A_m\neq\emptyset$. Pick $H$ in this intersection.
Then $H\in A$ and $H>m$ for every standard $m\in\N$, hence $H$ is infinite.
\end{proof}

\begin{proposition}[Underspill / underflow]\label{prop:underspill}
Let $A\subseteq{}^*\N$ be internal. If $A$ contains every infinite hypernatural, then
$A$ contains some standard natural number.

Equivalently: if $A\cap\N=\emptyset$, then there exists an infinite $H\in{}^*\N$ such that
$A\subseteq\{H+1,H+2,\dots\}$.
\end{proposition}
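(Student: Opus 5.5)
The natural route is to deduce underspill from overspill (Proposition~\ref{prop:overspill}) by passing to a complement, after a small reformulation. I will first prove the "equivalently" form and then recover the first form from it.

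\emph{Proof of the equivalent form.} Let $A\subseteq{}^*\N$ be internal with $A\cap\N=\emptyset$, and define
\[
B:=\{n\in{}^*\N:\ \forall m\in{}^*\N\ (m\le n\Rightarrow m\notin A)\}.
\]
\begin{itemize}
\item $B$ is internal by the internal definition principle, since its defining formula uses only internal parameters ($A$ and ${}^*\N$) and bounded quantification.
\item $B$ contains every standard $n$: all $m\le n$ are then standard, so $m\notin A$.
\item By overspill, $B$ contains some infinite $H$.
\item Hence every $m\le H$ lies outside $A$, which gives $A\subseteq\{H+1,H+2,\dots\}$.
\end{itemize}
Using the downward-closed set $B$, rather than the bare complement ${}^*\N\setminus A$, is exactly what turns "some infinite element avoids $A$" into the stronger statement "an entire initial segment up to an infinite $H$ avoids $A$."

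\emph{Recovering the first formulation.} Suppose $A$ is internal and contains every infinite hypernatural. Assume for contradiction that $A\cap\N=\emptyset$. The equivalent form then yields an infinite $H$ with $A\subseteq\{H+1,\dots\}$, so $H\notin A$. This contradicts the hypothesis that $A$ contains all infinite hypernaturals, so $A$ must contain a standard natural number.

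\emph{Direct alternative.} One can also argue straight from the first formulation without $B$. If $A$ contains no standard element, then ${}^*\N\setminus A$ is internal and contains all of $\N$. Overspill gives an infinite element outside $A$, contradicting the hypothesis.

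\emph{Remaining checks.} The two formulations are equivalent: one direction is shown above, and conversely the first form fails whenever the second conclusion holds, because $H\notin A$. The only real point to verify is the internality of $B$, which I expect to be routine. I anticipate no substantive obstacle; the main care needed is to state the bounded-quantifier definition of $B$ so that the internal definition principle applies cleanly.
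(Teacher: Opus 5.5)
Your proof is correct. For the second formulation, your set $B$ is exactly the paper's set $C=\{m\in{}^*\N:\{1,\dots,m\}\subseteq{}^*\N\setminus A\}$, and the overspill step is the same.

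For the first formulation the routes differ. The paper does not invoke overspill there. It observes that ${}^*\N\setminus A$, if non-empty, is an internal set of standard numbers contained in $I_H$ for an infinite $H$. By the transferred finite-maximum principle it therefore has a maximum, which must be standard. This uses only transfer, and it gives a stronger conclusion: ${}^*\N\setminus A$ is a finite set of standard naturals, so $A$ contains every standard natural beyond some $m\in\N$. Your argument, either overspill applied to ${}^*\N\setminus A$ or a reduction to the second form, is shorter. It yields exactly what is claimed, one standard element, but not the cofiniteness.

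There is one small imprecision. Your sentence that ``the first form fails whenever the second conclusion holds'' does not show that the first form implies the second. That direction follows by applying the first form to the internal upward closure $\{n\in{}^*\N:\exists m\le n,\ m\in A\}$, whose complement is your $B$. Since you prove both formulations outright, the proposition does not depend on this.
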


\begin{proof}
Assume first that $A$ contains every infinite hypernatural and set $B:={}^*\N\setminus A$.
Then $B$ is internal. If $B=\emptyset$, then $A={}^*\N$ and in particular $A$ contains every standard natural,
so the conclusion holds. Hence we may assume $B\neq\emptyset$.

By assumption, $B$ contains no infinite hypernaturals, hence every
element of $B$ is finite and therefore standard; thus $B\subseteq\N$.

Fix an infinite $H\in{}^*\N$. Then $B\subseteq I_H=\{1,\dots,H\}$. Since $B$ is a non-empty
internal subset of the hyperfinite set $I_H$, transfer of the finite-maximum principle
implies that $B$ has a maximum element $m=\max B$. This $m$ cannot be infinite (because $B$
contains no infinite hypernaturals), hence $m\in\N$. Therefore $B\subseteq\{1,\dots,m\}$,
so $B$ is finite. Consequently $A$ contains all standard naturals $>m$, and in particular
$A$ contains some standard natural number.

For the equivalent formulation, assume $A\cap\N=\emptyset$, so $\N\subseteq{}^*\N\setminus A$.
Consider the internal set
\[
C:=\{m\in{}^*\N:\ \{1,\dots,m\}\subseteq{}^*\N\setminus A\}.
\]
Then $C$ contains every standard $m\in\N$, hence by overspill $C$ contains an infinite $H$.
Thus $\{1,\dots,H\}\subseteq{}^*\N\setminus A$, which is equivalent to
$A\subseteq\{H+1,H+2,\dots\}$.
\end{proof}

\section{Hyperfinite representation of coherent risk measures}\label{sec:hyperCRM}

We now develop the hyperfinite representation of coherent risk measures, which forms the
foundation for our unified treatment of CRMs and CREs.

\subsection{The hyperfinite dictionary}

Working on the standing Loeb model $(I_N,\mathscr{I}_N^L,L(\mu_N))$ from
Notation \ref{not:Loeb-model}, we may regard elements of $L^\infty$ as essentially bounded
Loeb-measurable functions on $I_N$. The Loeb integration theory provides a canonical
internal representation of such functions.

\begin{lemma}[Internal liftings and Loeb expectation]\label{lem:shadowExp}
Let $X:I_N\to\R$ be Loeb measurable and essentially bounded. Then there exists a bounded
internal function $\tilde X:I_N\to{}^*\R$ such that $X=\st(\tilde X)$ $L(\mu_N)$-almost surely, and
\[
\int_{I_N} X\,dL(\mu_N)=\st\Big(\frac{1}{N}\sum_{k=1}^N \tilde X(k)\Big).
\]
If $Y$ is Loeb integrable and $\tilde Y$ is an $S$-integrable internal lifting, then
\[
\int_{I_N} Y\,dL(\mu_N)=\st\Big(\frac{1}{N}\sum_{k=1}^N \tilde Y(k)\Big)
\]
as well.
\end{lemma}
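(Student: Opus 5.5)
The plan is to prove the standard lifting theorem for Loeb measurable functions in two stages: first for bounded functions, then for $S$-integrable liftings of general Loeb integrable functions.

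\emph{Bounded case.} Let $|X|\le C$ almost surely, with $C$ standard; after modifying $X$ on a null set we may assume $|X|\le C$ everywhere. The first step is to lift indicators. Every $B\in\mathscr{I}_N^L$ differs from some internal $A\in\mathscr{I}_N$ by an $L(\mu_N)$-null set. To see this, use the Carath\'eodory construction in Theorem~\ref{thm:Loeb}: choose internal $A_m\supseteq B$ and internal $A'_m\supseteq I_N\setminus B$ such that $\mu_N^0(A_m)$ and $\mu_N^0(A'_m)$ are within $1/m$ of the outer measures. Then apply countable saturation, in the form of an $\omega_1$-extension of the decreasing internal sequence, to obtain an internal set squeezed between $\bigcap_m (I_N\setminus A'_m)$ and $\bigcap_m A_m$. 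The second step is to approximate $X$ by simple functions. Set $X_m:=\sum_j (j/m)\ind_{B_{m,j}}$ with $B_{m,j}=\{j/m\le X<(j+1)/m\}$, which gives $|X-X_m|\le 1/m$. Replacing each $B_{m,j}$ by an internal $A_{m,j}$ produces internal simple functions $F_m$ with $\st F_m=X_m$ almost surely. The third step combines these. For each standard $m$, the internal set $\{k:|F_m(k)-F_{m'}(k)|\le 2/m\}$ has $\mu_N$-measure close to $1$ for all $m'\ge m$. By countable saturation, applied as in Lemma~\ref{lem:choose-N} or via overspill on the internal sequence $(F_m)$, we extend $(F_m)_{m\in\N}$ to an internal sequence and pick an infinite index $H$ such that $\tilde X:=F_H\wedge C\vee(-C)$ satisfies $\st\tilde X=X$ almost surely. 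Because $\tilde X$ is bounded by $C$, it is trivially $S$-integrable.

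\emph{Integral identity.} For an internal simple function $F=\sum_j c_j\ind_{A_j}$ with standard coefficients and finitely many terms, we have $\st\big(\frac1N\sum_k F(k)\big)=\sum_j c_j\mu_N^0(A_j)=\int \st F\,dL(\mu_N)$ by the definition of Loeb measure. For bounded $\tilde X$, set $F^{(m)}:=\lfloor m\tilde X\rfloor/m$. This internal simple function has standard values and satisfies $|\tilde X-F^{(m)}|\le 1/m$. It follows that both sides of the identity differ by at most $1/m$ from the corresponding quantities for $F^{(m)}$. Letting $m\to\infty$ proves the first display.

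\emph{Integrable case.} Let $\tilde Y$ be $S$-integrable. Truncate to obtain $\tilde Y^{(M)}:=\tilde Y\wedge M\vee(-M)$ for standard $M$, which is bounded. The bounded case then yields $\int \st\tilde Y^{(M)}\,dL(\mu_N)=\st\big(\frac1N\sum\tilde Y^{(M)}\big)$. The internal error is $\frac1N\sum_{|\tilde Y|>M}|\tilde Y|$. The standard part of this error tends to $0$ as $M\to\infty$: the set $\{|\tilde Y|>M\}$ has measure at most $\frac{1}{M}\cdot\frac1N\sum|\tilde Y|$, and the $S$-integrability criterion of Remark~\ref{rem:S-integrable}, combined with underspill (Proposition~\ref{prop:underspill}) applied to small-measure sets, converts ``$\approx 0$ on infinitesimal sets'' into ``uniformly small on sets of small standard measure''. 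On the Loeb side, monotone convergence handles $\st\tilde Y^{(M)}\to\st\tilde Y=Y$.

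\emph{Main obstacle.} I expect the hardest step to be the measurable-to-internal approximation $B\mapsto A$ together with the diagonal choice of $H$ in the bounded case, since that is where saturation is genuinely used. For economy, this part may simply be cited as Loeb's lifting theorem, as in \cite[Ch.~5]{AFHL} and \cite[Chs.~4--5]{FK}, matching the treatment of Proposition~\ref{prop:Loeb-L1}.
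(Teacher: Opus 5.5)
Your proposal is correct and follows the standard textbook proof (Loeb sets are a.e.\ equal to internal sets, internal simple-function approximation plus overspill for the lifting, then truncation with the underspill form of $S$-integrability); the paper gives no argument of its own and simply cites Fajardo--Keisler and Albeverio et al.\ for this. Three small fixes are needed. First, the internal set in your first step must be squeezed between $\bigcup_m (I_N\setminus A'_m)$ and $\bigcap_m A_m$, not $\bigcap_m (I_N\setminus A'_m)$. Second, for overspill in the diagonal step the internal condition should be a standard threshold such as $\mu_N(\{|F_m-F_n|\le 2/m\})>1-1/m$ rather than the external ``close to $1$''. Third, in the last step use dominated convergence with envelope $|Y|$ (or monotone convergence on $Y^{+}$ and $Y^{-}$ separately), since two-sided truncation is not monotone in $M$.
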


\begin{proof}[Proof (reference)]
See \cite[Chs.~4--5]{FK} for liftings of Loeb-measurable functions and the identification
of Loeb integrals with standard parts of internal counting integrals.
\end{proof}

The following lemma ensures that uniform integrability passes to the non-standard extension,
which is crucial for making our hyperfinite representation \emph{internal}.

\begin{lemma}[Uniform integrability implies $S$-integrability of ${}^*\mathcal{Z}$]\label{lem:UI-S-int}
Let $\mathcal{Z}\subset L^1_+$ be uniformly integrable with $\sup_{Z\in\mathcal{Z}}\E[Z]\le C<\infty$.
Then for every $Z\in{}^*\mathcal{Z}$, the internal function $Z:I_N\to{}^*[0,\infty)$ is $S$-integrable
on the Loeb space $(I_N,\mathscr{I}_N^L,L(\mu_N))$, with
\[
\int Z\,dL(\mu_N)=\st\Big(\frac{1}{N}\sum_{k=1}^N Z(k)\Big).
\]
In particular, if $\E[Z]=1$ for all $Z\in\mathcal{Z}$, then $\frac{1}{N}\sum_{k=1}^N Z(k)\approx 1$
for all $Z\in{}^*\mathcal{Z}$.

Moreover, if $X\in L^\infty$ has a bounded internal lifting $\tilde X$ with $|\tilde X|\le M$,
and if $f_X:\mathcal{Z}\to\R$ is defined by
\[
f_X(Z_0):=\E[-XZ_0],\qquad Z_0\in\mathcal Z,
\]
then for any $Z\in{}^*\mathcal{Z}$ the product $\tilde X\cdot Z$ is $S$-integrable and
\[
{}^*f_X(Z)
=
\st\Big(\frac{1}{N}\sum_{k=1}^N (-\tilde X(k))\,Z(k)\Big).
\]
In particular, for standard $Z_0\in\mathcal Z$,
\[
f_X(Z_0)
=
\st\Big(\frac{1}{N}\sum_{k=1}^N (-\tilde X(k))\,{}^*Z_0(k)\Big).
\]
\end{lemma}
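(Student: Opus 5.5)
The plan is to transfer the uniform integrability modulus and then apply the $S$-integrability criterion of Remark~\ref{rem:S-integrable}. Throughout I identify $\Omega$ with the Loeb space $(I_N,\mathscr{I}_N^L,L(\mu_N))$ via Proposition~\ref{prop:atomless-Loeb}, and I take $\mathcal Z$ to be a standard family of internal functions on the grid, so that ${}^*\mathcal Z$ consists of internal functions $Z:I_N\to{}^*[0,\infty)$ and $\E$ transfers to ${}^*\E[Z]=\frac1N\sum_k Z(k)$. The hypotheses are thus read at the internal counting level; making this identification coherent is bookkeeping rather than mathematics.

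\emph{Step 1 (transfer of uniform integrability).} I would use the $\varepsilon$--$\delta$ characterisation. For every standard $\varepsilon>0$ there are a standard $\delta>0$ and a standard $K$ such that, for all $Z\in\mathcal Z$ and all events $A$ with $\Pbb(A)<\delta$, we have $\E[Z\ind_A]<\varepsilon$, and also $\E[Z\ind_{\{Z>K\}}]<\varepsilon$. This is a first-order statement, so by transfer it holds for all $Z\in{}^*\mathcal Z$ and all internal $A$ with $\mu_N(A)<\delta$. The same transfer gives $\frac1N\sum_k Z(k)\le C$, so the internal integral is finite.

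\emph{Step 2 ($S$-integrability and the integral identity).} Let $A$ be internal with $\mu_N(A)\approx0$. Then $\mu_N(A)<\delta$ for every standard $\delta$, so $\frac1N\sum_{k\in A}Z(k)<\varepsilon$ for every standard $\varepsilon$, which means the sum is infinitesimal. Remark~\ref{rem:S-integrable} now gives that $Z$ is $S$-integrable and that $\int\st(Z)\,dL(\mu_N)=\st(\frac1N\sum_k Z(k))$. If $\E[Z]=1$ on $\mathcal Z$, then transfer gives $\frac1N\sum_k Z(k)=1$ exactly for $Z\in{}^*\mathcal Z$, and in particular the sum is $\approx1$.

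\emph{Step 3 (the product with a lifting).} Since $|\tilde X Z|\le MZ$, the product $\tilde X Z$ inherits finiteness and the small-set condition, so it is $S$-integrable. For ${}^*f_X(Z)$, transfer of the definition gives ${}^*f_X(Z)={}^*\E[-{}^*X\,Z]=\frac1N\sum_k(-{}^*X(k))Z(k)$. To finish I must compare ${}^*X$ with $\tilde X$ and show $\frac1N\sum_k |{}^*X(k)-\tilde X(k)|\,Z(k)\approx0$. I expect this comparison to be the main obstacle. Both ${}^*X$ and $\tilde X$ are bounded by $M$, and both are liftings of $X$, so they are infinitely close off an internal-approximable Loeb-null set. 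On any internal set $B$ where $|{}^*X-\tilde X|<\eta$, the contribution is at most $\eta C$. The complement of $B$ can be chosen internal with $\mu_N$ arbitrarily small in standard terms, and there Step 1 bounds the contribution by $2M\varepsilon$. Letting $\eta,\varepsilon\downarrow0$ gives the claim, so the standard part equals $\st\bigl(\frac1N\sum_k(-\tilde X(k))Z(k)\bigr)$.

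Read literally, the display equates the internal value ${}^*f_X(Z)$ with a standard part, so I would interpret it as $\approx$ (or apply $\st$ to the left-hand side). For standard $Z_0$ we have ${}^*f_X({}^*Z_0)=f_X(Z_0)\in\R$, so in that case the equality holds exactly.
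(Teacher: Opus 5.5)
This is essentially the paper's argument: transfer the uniform-integrability modulus to ${}^*\mathcal Z$, read the transferred bound on $I_N$ as an $S$-integrability criterion, and dominate $|\tilde X Z|\le MZ$. The only variation is that you verify the small-set criterion of Remark~\ref{rem:S-integrable}, while the paper checks the equivalent tail bound $\frac1N\sum_{k:Z(k)>K}Z(k)<\varepsilon$.

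Two parts of your proposal improve on the paper's one-line assertion. You compare ${}^*X$ with $\tilde X$ explicitly, and you read the last display as $\approx$ for nonstandard $Z$, which is correct since ${}^*f_X(Z)$ need not be standard. However, the identification of ${}^*\E$ with the counting average on $I_N$, which both proofs use, is not mere bookkeeping. On a grid model of $[0,1]$ it fails for a representative such as $1+\ind_{\Qbb}$ of the density $1$. Likewise ${}^*\ind_{\Qbb}\equiv 1$ on the grid, so it is no lifting of $\ind_{\Qbb}=0$ a.e. On that point your argument is no more (and no less) rigorous than the paper's.
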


\begin{proof}
Uniform integrability of $\mathcal{Z}$ means: for every $\varepsilon>0$ there exists $K>0$ such that
$\sup_{Z\in\mathcal{Z}}\E[Z\ind_{\{Z>K\}}]<\varepsilon$. This is a first-order statement:
\[
\forall\varepsilon\in\Qbb_{>0}\ \exists K\in\Qbb_{>0}\ \forall Z\in\mathcal{Z}:\quad
\E[Z\ind_{\{Z>K\}}]<\varepsilon.
\]
By transfer, for every standard $\varepsilon>0$ there exists standard $K>0$ such that
${}^*\E[Z\ind_{\{Z>K\}}]<\varepsilon$ for all $Z\in{}^*\mathcal{Z}$.

In terms of the hyperfinite counting measure, this says
\[
\frac{1}{N}\sum_{k:Z(k)>K} Z(k) < \varepsilon
\]
for all $Z\in{}^*\mathcal{Z}$. This is exactly the criterion for $S$-integrability: the ``tail''
of the internal sum is uniformly small for large standard $K$. By the standard characterisation
of $S$-integrability (see \cite[Prop.~4.3.5]{AFHL}), each $Z\in{}^*\mathcal{Z}$ is $S$-integrable.

The Loeb integral identity follows from the definition of $S$-integrability: if $Z$ is $S$-integrable,
then $\int Z\,dL(\mu_N)=\st(\frac{1}{N}\sum_{k=1}^N Z(k))$.

For the constraint $\E[Z]=1$, transfer gives $\int Z\,dL(\mu_N)=1$ for all $Z\in{}^*\mathcal{Z}$.
By the Loeb integral identity, $\st(\frac{1}{N}\sum_{k=1}^N Z(k))=1$, hence
$\frac{1}{N}\sum_{k=1}^N Z(k)\approx 1$ for all $Z\in{}^*\mathcal{Z}$.

For the product: since $|\tilde X|\le M$ and $Z\ge 0$ is $S$-integrable, we have
\[
|\,\tilde X\cdot Z\,|\le MZ,
\]
so $\tilde X\cdot Z$ is $S$-integrable.

If $f_X(Z_0):=\E[-XZ_0]$ on $\mathcal Z$, then for $Z\in{}^*\mathcal Z$ the non-standard extension
${}^*f_X(Z)$ is represented on the Loeb model by the $S$-integrable internal product
$(-\tilde X)\cdot Z$. Therefore
\[
{}^*f_X(Z)
=
\st\Big(\frac{1}{N}\sum_{k=1}^N (-\tilde X(k))\,Z(k)\Big),
\]
as claimed.
\end{proof}

\subsection{Hyperfinite robust representation}

We can now state and prove the hyperfinite version of the robust representation theorem.
The key observation is that standard attainment on a compact set immediately gives us an
internal maximiser via the non-standard extension of the standard maximiser. By
Lemma \ref{lem:UI-S-int}, elements of ${}^*\mathcal{Z}$ are automatically $S$-integrable,
so we can work directly with ${}^*\mathcal{Z}$ as an \emph{internal} set.

\begin{theorem}[Hyperfinite robust representation on $L^\infty$]\label{thm:hyperfinite-rho}
Let $(\Omega,\mathscr{G},\Pbb)$ be atomless and separable, and let $\rho:L^\infty\to\R$ be a
coherent risk measure with the Fatou property. Assume it admits a robust representation of the form
\[
\rho(X)=\sup_{Z\in\mathcal{Z}} \E[-XZ],
\qquad
\mathcal{Z}\subset L^1_+,\ \E[Z]=1,
\]
where $\mathcal{Z}$ is compact in $\sigma(L^1,L^\infty)$ (equivalently, uniformly integrable and
$\sigma(L^1,L^\infty)$-closed).

We may (and do) identify the underlying atomless probability space with the standing Loeb model
$(I_N,\mathscr{I}_N^L,L(\mu_N))$ (Notation \ref{not:Loeb-model}), and let $\tilde X$ be a bounded
internal lifting of $X$ as in Lemma \ref{lem:shadowExp}. Then
\[
\rho(X)=\st\Big(\sup_{Z\in{}^*\mathcal{Z}}\frac{1}{N}\sum_{k=1}^N (-\tilde X(k))\,Z(k)\Big),
\qquad X\in L^\infty,
\]
where ${}^*\mathcal{Z}$ is the internal non-standard extension of $\mathcal{Z}$, and elements
$Z\in{}^*\mathcal{Z}$ are $S$-integrable by Lemma \ref{lem:UI-S-int}.
Writing $f_X(Z_0):=\E[-XZ_0]$ for $Z_0\in\mathcal Z$, the internal dual problem
\[
\sup_{Z\in{}^*\mathcal{Z}} {}^*f_X(Z)
\]
is attained by some $Z^\sharp\in{}^*\mathcal{Z}$ (for instance $Z^\sharp={}^*Z^*$, where
$Z^*\in\mathcal Z$ is any standard maximiser of $f_X$). For the hyperfinite-sum functional
\[
S_X(Z):=\frac{1}{N}\sum_{k=1}^N (-\tilde X(k))\,Z(k),
\]
the same $Z^\sharp$ is a near-maximiser in the sense that
\[
\st\bigl(S_X(Z^\sharp)\bigr)=\rho(X)
=
\st\Big(\sup_{Z\in{}^*\mathcal{Z}} S_X(Z)\Big).
\]

Equivalently, defining weight vectors $a=\Psi(Z)$ via the normalisation
\[
    \Psi(Z)_k:=Z(k)/\sum_{j=1}^N Z(j),
\]
we have $\sum_k a_k=1$ exactly and
\[
\rho(X)=\st\Big(\sup_{a\in\mathcal{A}_N}\sum_{k=1}^N a_k\,(-\tilde X(k))\Big),
\]
where $\mathcal{A}_N:=\{\Psi(Z): Z\in{}^*\mathcal{Z}\}$ is \emph{internal} (as the image of
an internal set under an internal map). For the maximiser $Z^\sharp\in{}^*\mathcal{Z}$,
$\Psi(Z^\sharp)$ achieves a value infinitely close to the supremum over $\mathcal{A}_N$.
\end{theorem}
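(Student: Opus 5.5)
The plan is to reduce everything to standard attainment on the weak$^*$-compact set $\mathcal Z$ and then transfer. First I will fix $X\in L^\infty$ and note that $f_X(Z_0)=\E[-XZ_0]$ is $\sigma(L^1,L^\infty)$-continuous on $\mathcal Z$. Since $\mathcal Z$ is compact, $f_X$ attains its supremum at some $Z^*\in\mathcal Z$, and $f_X(Z^*)=\rho(X)$. The first-order statement ``$\forall Z\in\mathcal Z:\ f_X(Z)\le f_X(Z^*)$'' transfers to ``$\forall Z\in{}^*\mathcal Z:\ {}^*f_X(Z)\le {}^*f_X({}^*Z^*)=\rho(X)$''. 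Hence the internal problem $\sup_{Z\in{}^*\mathcal Z}{}^*f_X(Z)$ equals the standard real $\rho(X)$ and is attained at $Z^\sharp={}^*Z^*$.

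Next I will pass from ${}^*f_X$ to the hyperfinite sums $S_X$. Lemma~\ref{lem:UI-S-int} gives ${}^*f_X(Z)=\st(S_X(Z))$ for every $Z\in{}^*\mathcal Z$, i.e.\ $S_X(Z)\approx{}^*f_X(Z)\le\rho(X)$. In particular $S_X(Z^\sharp)\approx\rho(X)$. The main obstacle is the upper bound on the internal supremum $\sigma:=\sup_{Z\in{}^*\mathcal Z}S_X(Z)$. This supremum exists internally, since $|S_X(Z)|\le M\cdot\frac1N\sum_k Z(k)$ is bounded on ${}^*\mathcal Z$ by the transferred UI bound. The difficulty is that an internal sup of quantities each infinitely close to something $\le\rho(X)$ need not itself be $\lesssim\rho(X)$ without uniformity.

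To settle this I will argue by contradiction. Suppose $\sigma>\rho(X)+2\varepsilon$ for some standard $\varepsilon>0$. By the internal definition of the supremum there is $Z\in{}^*\mathcal Z$ with $S_X(Z)>\rho(X)+\varepsilon$. This contradicts $S_X(Z)\approx{}^*f_X(Z)\le\rho(X)$, since $\varepsilon$ is standard. Therefore $\rho(X)\approx S_X(Z^\sharp)\le\sigma\lesssim\rho(X)$, so $\st(\sigma)=\rho(X)$. The argument thus uses pointwise $S$-integrability plus the standardness of $\varepsilon$, with no saturation needed. The only delicate point is that Lemma~\ref{lem:UI-S-int} applies uniformly to all $Z\in{}^*\mathcal Z$, which is exactly what it asserts.

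Finally, for the weight-vector form I will set $s(Z):=\sum_j Z(j)$. Lemma~\ref{lem:UI-S-int} gives $s(Z)/N\approx1$, so $\Psi(Z)$ is well defined, and it is internal because it is defined by an internal formula. Then $\sum_k\Psi(Z)_k(-\tilde X(k))=S_X(Z)\cdot N/s(Z)$. Since $S_X(Z)$ is finite, uniformly bounded by $M$ times a finite quantity, and $N/s(Z)\approx1$, this is $\approx S_X(Z)$ for every $Z$. Repeating the standard-$\varepsilon$ contradiction argument for the sup over $\mathcal A_N$ yields the same standard part $\rho(X)$. It also shows that $\Psi(Z^\sharp)$ attains a value infinitely close to the supremum over $\mathcal A_N$.
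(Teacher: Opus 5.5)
Your proposal is correct and follows the paper's proof essentially step for step. $\sigma(L^1,L^\infty)$-compactness yields a standard maximiser $Z^*$, transfer makes ${}^*Z^*$ an exact maximiser of ${}^*f_X$ on ${}^*\mathcal Z$, Lemma~\ref{lem:UI-S-int} gives $S_X(Z)\approx{}^*f_X(Z)$ for every $Z\in{}^*\mathcal Z$, and the normalisation $\Psi$ handles $\mathcal A_N$. The only difference is that your standard-$\varepsilon$ argument, which uses that an internal supremum is approached by elements of the internal set, spells out the upper bound $\st(\sup_{Z\in{}^*\mathcal Z}S_X(Z))\le\rho(X)$ that the paper merely asserts, and in the weight-vector step it replaces the paper's uniform bound $\sup_{Z}|1/H(Z)-1|\approx 0$.
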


\begin{proof}
\textbf{Internality of ${}^*\mathcal{Z}$.}
Fix once and for all a measurable representative for each element of $\mathcal{Z}\subset L^1_+$
(possible since $\mathcal{Z}$ is a set, not a quotient).
With this choice, ${}^*\mathcal{Z}$ may be viewed as an internal family of internal functions $I_N\to{}^*[0,\infty)$.

By Lemma \ref{lem:UI-S-int},
since $\mathcal{Z}$ is uniformly integrable, every $Z\in{}^*\mathcal{Z}$ is automatically $S$-integrable.
Crucially, ${}^*\mathcal{Z}$ is an \emph{internal} set (as the non-standard extension of a standard set),
so the supremum $\sup_{Z\in{}^*\mathcal{Z}}(\cdots)$ is internal.

Fix $X\in L^\infty$ and define the functional
$f_X:\mathcal{Z}\to\R$ by $f_X(Z):=\E[-XZ]$. This is continuous in the $\sigma(L^1,L^\infty)$ topology
on $\mathcal{Z}$ (by definition of that topology). Since $\mathcal{Z}$ is
$\sigma(L^1,L^\infty)$-compact, there exists a standard maximiser $Z^*\in\mathcal{Z}$ such that
$f_X(Z^*)=\rho(X)$.

The non-standard extension ${}^*Z^*\in{}^*\mathcal{Z}$ (viewing $Z^*$ as an element of the
standard universe) satisfies ${}^*f_X({}^*Z^*)={}^*\rho(X)=\rho(X)$ by transfer. Set $Z^\sharp:={}^*Z^*$.
Alternatively, one may transfer the existence statement directly: the first-order statement
``$\exists Z\in\mathcal{Z}\ (f_X(Z)=\rho(X))$'' holds, hence by transfer there exists
$Z^\sharp\in{}^*\mathcal{Z}$ such that ${}^*f_X(Z^\sharp)=\rho(X)$.

We may (and do) identify the underlying atomless probability space with the standing Loeb model
$(I_N,\mathscr{I}_N^L,L(\mu_N))$, and let $\tilde X:I_N\to{}^*\R$ be a bounded internal
lifting of $X$ as in Lemma \ref{lem:shadowExp}. By Lemma \ref{lem:UI-S-int}, since $Z^\sharp\in{}^*\mathcal{Z}$
is $S$-integrable and $\tilde X$ is bounded, the product $(-\tilde X)\cdot Z^\sharp$ is $S$-integrable.
Applying Lemma \ref{lem:UI-S-int} yields
\begin{equation}\label{eq:starfx-shadow}
\rho(X)={}^*f_X(Z^\sharp)
=
\st\Big(\frac{1}{N}\sum_{k=1}^N (-\tilde X(k))\,Z^\sharp(k)\Big).
\end{equation}

Moreover, by transfer of ``$f_X(Z)\le\rho(X)$ for all $Z\in\mathcal{Z}$'', we have
${}^*f_X(Z)\le\rho(X)$ for all $Z\in{}^*\mathcal{Z}$. Since the supremum over the internal
set ${}^*\mathcal{Z}$ is internal, we obtain
\[
\rho(X)=\st\Big(\sup_{Z\in{}^*\mathcal{Z}}\frac{1}{N}\sum_{k=1}^N (-\tilde X(k))\,Z(k)\Big).
\]

For the normalised weight formulation, define
\[
H(Z):=\frac{1}{N}\sum_{j=1}^N Z(j),\qquad
\Psi(Z)_k:=\frac{Z(k)}{\sum_{j=1}^N Z(j)},\quad k=1,\dots,N.
\]
Then $\sum_{k=1}^N\Psi(Z)_k=1$ exactly (this is an algebraic identity).

For $Z\in{}^*\mathcal{Z}$, by Lemma \ref{lem:UI-S-int} we have
$\frac{1}{N}\sum_{k=1}^N Z(k)\approx 1$, i.e., $H(Z)\approx 1$.
Since $Z\ge 0$ and $H(Z)\approx 1>0$, $\Psi(Z)$ is well-defined and internal.
The set $\mathcal{A}_N:=\{\Psi(Z): Z\in{}^*\mathcal{Z}\}$ is internal (as the image of
an internal set under an internal map).

The connection between the two formulations is:
\[
\frac{1}{N}\sum_{k=1}^N (-\tilde X(k))\,Z(k)
=
H(Z)\sum_{k=1}^N \Psi(Z)_k\,(-\tilde X(k)).
\]
Since $|\tilde X|\le M$ for some standard $M$, the sum $\sum_{k=1}^N \Psi(Z)_k(-\tilde X(k))$
is bounded (in absolute value by $M$, since $\sum_k\Psi(Z)_k=1$). Since $H(Z)\approx 1$,
\[
\st\Big(\frac{1}{N}\sum_{k=1}^N (-\tilde X(k))\,Z(k)\Big)
=
\st\Big(\sum_{k=1}^N \Psi(Z)_k\,(-\tilde X(k))\Big).
\]

The weight-vector supremum satisfies
\[
\st\Big(\sup_{a\in\mathcal{A}_N}\sum_{k=1}^N a_k\,(-\tilde X(k))\Big)
=
\st\Big(\sup_{Z\in{}^*\mathcal{Z}}\frac{1}{N}\sum_{k=1}^N (-\tilde X(k))\,Z(k)\Big).
\]
Indeed, for $Z\in{}^*\mathcal{Z}$ we have
\[
\sum_{k=1}^N \Psi(Z)(k)(-\tilde X(k))
=\frac{1}{\sum_{j=1}^N Z(j)}\sum_{k=1}^N (-\tilde X(k))Z(k)
=\frac{1}{H(Z)}\cdot \frac{1}{N}\sum_{k=1}^N (-\tilde X(k))Z(k).
\]
Since ${}^*\mathcal{Z}$ is internal and $H(Z)\approx 1$ for every $Z\in{}^*\mathcal{Z}$, the internal set
$\{|1/H(Z)-1|: Z\in{}^*\mathcal{Z}\}$ consists only of infinitesimals, hence its supremum is infinitesimal.
Moreover,
\[
\left|\frac{1}{N}\sum_{k=1}^N (-\tilde X(k))Z(k)\right|
\le \|\tilde X\|_\infty\, H(Z),
\]
so taking internal suprema preserves the standard part. Therefore the two internal suprema have the same standard part.

Thus $Z^\sharp={}^*Z^*$ attains the exact maximum of the internal functional ${}^*f_X$, and
$a^\sharp:=\Psi(Z^\sharp)$ yields a value whose standard part equals the supremum over
$\mathcal{A}_N$. In general we only claim near-attainment for the hyperfinite-sum functionals.
\end{proof}

\begin{remark}[Why compactness appears]\label{rem:compactness}
The passage from a standard supremum $\sup_{Q\in\mathcal{Q}} \E_Q[-X]$ to an internal
maximum over ${}^*\mathcal{Q}$ uses the NSA characterisation of compactness. In the
absence of compactness one still obtains $\varepsilon$-maximisers in ${}^*\mathcal{Q}$
and hence a hyperfinite representation \emph{up to infinitesimals}, uniformly over
$X$ with $\|X\|_\infty\le 1$; compactness is the natural condition ensuring exact
shadowing by standard parts.
\end{remark}

\begin{remark}[The hyperfinite dictionary]\label{rem:dictionary}
Theorem \ref{thm:hyperfinite-rho} establishes the central dictionary of our approach:
\begin{center}
\begin{tabular}{cc}
\hline
\textbf{CRM (population)} & \textbf{Hyperfinite representation} \\
\hline
Probability measure $Q$ & Weight vector $a \in ({}^*[0,1])^N$, $\sum a_k = 1$ \\
Expectation $\E_Q[-X]$ & Hyperfinite sum $\sum_{k=1}^N a_k(-x_k)$ \\
Supremum over $\mathcal{Q}$ & Internal supremum over $\mathcal{A}_N$ \\
Risk measure $\rho(X)$ & Standard part of hyperfinite support function \\
\hline
\end{tabular}
\end{center}
\end{remark}

\subsection{From hyperfinite representation to CRE representation}

The finite-sample representation theorems for CREs now emerge as special cases.

\begin{proof}[Hyperfinite proof of Theorem \ref{thm:ACJP-4.1}]
Fix $n \in \N$ and consider the finite probability space
$(\hat\Omega, \hat{\mathscr{G}}, \hat\Pbb)$ with $\hat\Omega = \{\omega_1, \ldots, \omega_n\}$,
$\hat{\mathscr{G}} = 2^{\hat\Omega}$, and $\hat\Pbb(\{\omega_i\}) = 1/n$.

A CRE $\hat\rho_n: \R^n \to \R$ induces a CRM $\rho$ on this finite space by
$\rho(X) := \hat\rho_n(x)$ where $x_i = X(\omega_i)$. The axioms (E1)--(E4) translate
directly to (R1)--(R4).

By finite-dimensional convex duality (the finite space version of the robust representation,
which does not require atomlessness), there exists a convex set $\mathcal{Q}$ of probability measures
on $\hat\Omega$ such that
\[
\rho(X) = \sup_{Q \in \mathcal{Q}} \E_Q[-X].
\]

Each $Q$ on $\hat\Omega$ corresponds to a weight vector $a \in \Mn$ via
$a_i = Q(\{\omega_i\})$, in which case $\E_Q[-X] = \sum_{i=1}^n a_i(-x_i) = \ip{a}{-x}$. Setting $M^*_{\hat\rho_n} := \{ a(Q) : Q \in \mathcal{Q} \}$, we obtain
\[
\hat\rho_n(x) = \sup_{a \in M^*_{\hat\rho_n}} \ip{a}{-x}.
\]
Convexity of $M^*_{\hat\rho_n}$ follows from convexity of $\mathcal{Q}$, and attainment
follows from compactness of the simplex $\Mn$.
\end{proof}

For the law-invariant representation, we need the rearrangement inequality.

\begin{lemma}[Rearrangement inequality {\cite[Theorem 368]{HLP}}]\label{lem:rearrangement}
Let $u, v \in \R^n$ with decreasing rearrangements $u^\downarrow$ and $v^\downarrow$. Then
for any permutation $\sigma$,
\[
\sum_{i=1}^n u_i v_{\sigma(i)} \leq \sum_{i=1}^n u^\downarrow_i v^\downarrow_i,
\]
with equality when $u$ and $v$ are both sorted in the same order (both increasing or both decreasing).
\end{lemma}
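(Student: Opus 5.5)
We prove the rearrangement inequality by a finite exchange (bubble-sort) argument, which uses only the two-term inequality and finiteness of the permutation group.

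\emph{Reduction.} Because the statement is invariant under simultaneously permuting the indices of $u$ and $v$, we may relabel so that $u=u^\downarrow$, i.e.\ $u_1\ge u_2\ge\cdots\ge u_n$. Writing $w:=(v_{\sigma(1)},\dots,v_{\sigma(n)})$, which is again a rearrangement of $v$, the claim becomes
\[
\sum_{i=1}^n u_i w_i\le\sum_{i=1}^n u_i v^\downarrow_i
\quad\text{for every rearrangement } w \text{ of } v.
\]

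\emph{Key two-term step.} Suppose $i<j$ and $w_i<w_j$, so $(i,j)$ is an inversion relative to the decreasing order. Let $w'$ be $w$ with $w_i$ and $w_j$ swapped. Then
\[
\sum_k u_k w'_k-\sum_k u_k w_k=(u_i-u_j)(w_j-w_i)\ge0,
\]
because $u_i\ge u_j$ and $w_j>w_i$. Each such swap strictly reduces the number of inversions of $w$ and does not decrease the sum.

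\emph{Iteration and equality.} The number of inversions is a nonnegative integer, so after finitely many swaps no inversions remain and $w$ has become $v^\downarrow$. The sum has not decreased along the way, which gives the inequality. Equality holds when $w$ is already sorted in the same order as $u$, since then $w=v^\downarrow$ up to permuting equal entries, and swapping equal entries leaves the sum unchanged. The case where $u$ and $v$ are both increasing follows by applying the decreasing case after reversing the index order.

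The only point needing care is the bookkeeping with ties. With non-strict comparisons, $v^\downarrow$ is well defined as a multiset ordering, and any two decreasing arrangements of $v$ differ only by permuting equal values, which does not change the sum. Beyond this, I expect no real obstacle: the argument is finite and elementary, and nothing in it requires non-standard tools.
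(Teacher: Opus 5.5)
Your proof is correct, but the paper gives no proof to compare it with: the lemma is simply quoted from \cite[Theorem 368]{HLP}. It is used once, in the proof of Theorem~\ref{thm:ACJP-4.2}, with $u=b$, $v=-x$, $\sigma=\mathrm{id}$, and only the inequality is needed there. The citation keeps the paper short; your interchange argument instead makes that step self-contained, and it is the standard elementary route. The identity $\sum_k u_kw'_k-\sum_k u_kw_k=(u_i-u_j)(w_j-w_i)$ carries the whole content. Your claim that swapping an arbitrary inverted pair (not necessarily adjacent) strictly lowers the inversion count does hold with ties. For $i<k<j$ and $c=w_k$, the two pairs through $k$ contribute $[w_i<c]+[c<w_j]$ inversions before the swap and $[w_j<c]+[c<w_i]$ after, which is never more. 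The pair $(i,j)$ itself loses its inversion, and pairs with $k$ outside $[i,j]$ keep the same count.

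One precision on the equality clause: the ties that matter are ties in $u$, not in $w$. Suppose ``sorted in the same order'' is read in the weak sense $(u_i-u_j)(w_i-w_j)\ge 0$. Then $w$ need not equal $v^\downarrow$, as $u=(1,1)$, $w=(0,5)$ shows. What is true is that every inversion of $w$ then lies inside a block where $u$ is constant, and such swaps preserve this property. So each of your swaps changes the sum by $(u_i-u_j)(w_j-w_i)=0$, which gives equality. Under the literal reading ($\sigma=\mathrm{id}$, both vectors monotone in the same direction), equality is immediate. After reversing indices in the increasing case, the left-hand side is literally $\sum_i u^\downarrow_i v^\downarrow_i$.
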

We may now prove Theorem \ref{thm:ACJP-4.2}.
\begin{proof}[Proof of Theorem \ref{thm:ACJP-4.2}]
By Theorem \ref{thm:ACJP-4.1} there exists a non-empty convex $M^*_{\hat\rho_n}\subseteq\Delta_n$
such that $\hat\rho_n(x)=\sup_{a\in M^*_{\hat\rho_n}}\ip{a}{-x}$.

Define the \emph{symmetrised} set
\[
\widetilde M:=\operatorname{conv}\bigl\{\pi(a):\ a\in M^*_{\hat\rho_n},\ \pi\ \text{a permutation of }\{1,\dots,n\}\bigr\}
\subseteq \Delta_n,
\]
where $\pi(a)$ denotes the permuted vector $(a_{\pi(1)},\dots,a_{\pi(n)})$.
Then $\widetilde M$ is convex and non-empty. Moreover, law invariance implies that for every $x$,
\[
\hat\rho_n(x)=\hat\rho_n(\pi(x))=\sup_{a\in M^*_{\hat\rho_n}}\ip{a}{-\pi(x)}
=\sup_{a\in M^*_{\hat\rho_n}}\ip{\pi^{-1}(a)}{-x},
\]
and taking the supremum over all permutations shows
\[
\hat\rho_n(x)=\sup_{b\in \widetilde M}\ip{b}{-x}.
\]

Now for each $b\in\Delta_n$ let $b^\downarrow\in\Delta_n^\downarrow$ be its decreasing rearrangement.
Note that $-s(x)$ is the \emph{decreasing} rearrangement of $-x$ (since $s(x)=(x_{1:n},\ldots,x_{n:n})$
is increasing). By the rearrangement inequality (Lemma \ref{lem:rearrangement}) applied to $u=b$ and $v=-x$,
\[
\ip{b}{-x}=\sum_{i=1}^n b_i(-x_i)\le \sum_{i=1}^n b^\downarrow_i(-x)^\downarrow_i
=\sum_{i=1}^n b^\downarrow_i(-x_{i:n})=\ip{b^\downarrow}{-s(x)}.
\]
Hence
\[
\hat\rho_n(x)=\sup_{b\in \widetilde M}\ip{b}{-x}
\le \sup_{b\in \widetilde M}\ip{b^\downarrow}{-s(x)}.
\]
Conversely, for any $b\in\widetilde M$, since $\widetilde M$ is permutation invariant, we have
$b^\downarrow\in\widetilde M$. Moreover, $s(x)$ is a permutation of $x$, so there exists a
permutation $\sigma$ with $s(x)=\sigma(x)$. Then
\[
\ip{b^\downarrow}{-s(x)}=\ip{\sigma^{-1}(b^\downarrow)}{-x}\le\sup_{b'\in\widetilde M}\ip{b'}{-x}=\hat\rho_n(x),
\]
where we used $\sigma^{-1}(b^\downarrow)\in\widetilde M$ by permutation invariance.
Hence $\sup_{b\in\widetilde M}\ip{b^\downarrow}{-s(x)}\le\hat\rho_n(x)$, establishing equality.

Therefore, with
\[
M^s_{\hat\rho_n}:=\operatorname{cl}\operatorname{conv}\{b^\downarrow:\ b\in \widetilde M\}\subseteq\Delta_n^\downarrow,
\]
we obtain
\[
\hat\rho_n(x)=\sup_{a\in M^s_{\hat\rho_n}}\sum_{i=1}^n a_i(-x_{i:n}).
\]
Attainment follows because $M^s_{\hat\rho_n}$ is a closed subset of the compact simplex $\Delta_n^\downarrow$,
and the objective is continuous and linear in $a$.
\end{proof}

\begin{proof}[Proof of Theorem \ref{thm:ACJP-4.10}]
The implication ``$\Leftarrow$'' is immediate from Theorem \ref{thm:ACJP-4.2}: if
\[
\hat\rho_n(x)=\sum_{i=1}^n a_i(-x_{i:n}),\qquad a\in\Mndown,
\]
then $\hat\rho_n$ is a law-invariant CRE. To see comonotonicity, let $x,y\in\R^n$ be
comonotonic. Then there exists a permutation $\pi$ such that both $\pi(x)$ and $\pi(y)$
are increasing. Consequently,
\[
s(x+y)=s(x)+s(y),
\]
and therefore
\[
\hat\rho_n(x+y)
=
\sum_{i=1}^n a_i\bigl(-(x+y)_{i:n}\bigr)
=
\sum_{i=1}^n a_i(-x_{i:n})
+
\sum_{i=1}^n a_i(-y_{i:n})
=
\hat\rho_n(x)+\hat\rho_n(y).
\]

For the converse, let $M^s_{\hat\rho_n}\subseteq\Mndown$ be a representing set from
Theorem \ref{thm:ACJP-4.2}, and define
\[
K_{\hat\rho_n}:=\overline{M^s_{\hat\rho_n}}^{\,\R^n}\subseteq \Mndown.
\]
Since the map $a\mapsto \langle a,-s(x)\rangle$ is continuous for each fixed $x$, passing
from $M^s_{\hat\rho_n}$ to $K_{\hat\rho_n}$ does not change the support function:
\[
\hat\rho_n(x)=\sup_{a\in K_{\hat\rho_n}}\langle a,-s(x)\rangle,\qquad x\in\R^n.
\]
Moreover, $K_{\hat\rho_n}$ is compact. For $x\in\R^n$, define
\[
N(x):=\{a\in K_{\hat\rho_n}:\ \hat\rho_n(x)=\langle a,-s(x)\rangle\}.
\]
Then each $N(x)$ is non-empty and closed in the compact set $K_{\hat\rho_n}$.

We claim that the family $\{N(x):x\in\R^n\}$ has the finite intersection property.
Fix $x^1,\dots,x^K\in\R^n$ and set
\[
y:=\sum_{j=1}^K s(x^j).
\]
Since each $s(x^j)$ is increasing, the vectors $s(x^1),\dots,s(x^K)$ are pairwise comonotonic.
Hence, by comonotonicity and law invariance,
\[
\hat\rho_n(y)
=
\sum_{j=1}^K \hat\rho_n(s(x^j))
=
\sum_{j=1}^K \hat\rho_n(x^j).
\]
Choose $a\in N(y)$. Then
\[
\hat\rho_n(y)=\langle a,-y\rangle
=\sum_{j=1}^K \langle a,-s(x^j)\rangle.
\]
Because $a\in K_{\hat\rho_n}$, we have
\[
\langle a,-s(x^j)\rangle\le \hat\rho_n(x^j),\qquad j=1,\dots,K,
\]
since $\hat\rho_n(x^j)=\sup_{b\in K_{\hat\rho_n}}\langle b,-s(x^j)\rangle$.
Comparing the last two displays forces equality term by term:
\[
\langle a,-s(x^j)\rangle=\hat\rho_n(x^j),\qquad j=1,\dots,K.
\]
Thus
\[
a\in \bigcap_{j=1}^K N(x^j),
\]
which proves the finite intersection property.

Since $K_{\hat\rho_n}$ is compact,
\[
\bigcap_{x\in\R^n}N(x)\neq\varnothing.
\]
Pick $a$ in this intersection. Then
\[
\hat\rho_n(x)=\langle a,-s(x)\rangle=\sum_{i=1}^n a_i(-x_{i:n}),
\qquad x\in\R^n.
\]

Finally, to prove uniqueness, suppose $a,b\in\Mndown$ both satisfy the representation. For
\[
x^{(k)}:=(-1,\dots,-1,0,\dots,0)\in\R^n
\]
with exactly $k$ entries equal to $-1$, we obtain
\[
\sum_{i=1}^k a_i=\hat\rho_n(x^{(k)})=\sum_{i=1}^k b_i,\qquad k=1,\dots,n.
\]
Subtracting the relation for $k-1$ from that for $k$ gives $a_k=b_k$ for every $k$, so
$a=b$.
\end{proof}

\section{Discrete Kusuoka representation for law-invariant CREs}\label{sec:discKusuoka}

The Kusuoka representation (Theorem \ref{thm:Kusuoka}) expresses law-invariant CRMs as
suprema over mixtures of expected shortfall. In this section, we establish the
finite-sample analogue for CREs.

\subsection{Discrete expected shortfall}

We begin by defining the natural finite-sample version of expected shortfall.

\begin{definition}[Discrete expected shortfall]\label{def:dES}
For $x \in \R^n$ and $k \in \{1, \ldots, n\}$, the \emph{discrete expected shortfall} at
level $k/n$ is
\begin{equation}\label{eq:dES-def}
\dES_{k/n}(x) := -\frac{1}{k} \sum_{i=1}^k x_{i:n}.
\end{equation}
\end{definition}

This is the average of the $k$ smallest (most negative) outcomes, negated to give a
positive quantity for losses. When $k = 1$, we recover the minimum:
$\dES_{1/n}(x) = -x_{1:n}$. When $k = n$, we get the negative mean:
$\dES_1(x) = -\frac{1}{n}\sum_{i=1}^n x_i$.

\begin{remark}
The discrete expected shortfall $\dES_{k/n}$ is a coherent, law-invariant, comonotonic
risk estimator. Its weight vector is $a = (\frac{1}{k}, \ldots, \frac{1}{k}, 0, \ldots, 0)$
with $k$ entries equal to $1/k$, which lies in $\Mndown$.
\end{remark}

\subsection{Decomposition of L-estimators into discrete expected shortfall}

The key technical result is that any L-estimator with non-increasing weights can be
written as a mixture of discrete expected shortfalls.

\begin{lemma}[ES-decomposition lemma]\label{lem:ESdecomp}
Let $a = (a_1, \ldots, a_n) \in \Mndown$ and set $a_{n+1} := 0$. Define
\begin{equation}\label{eq:mu-from-a}
\mu_k := k(a_k - a_{k+1}), \qquad k = 1, \ldots, n.
\end{equation}
Then:
\begin{enumerate}[(i)]
\item $\mu_k \geq 0$ for all $k$;
\item $\sum_{k=1}^n \mu_k = 1$;
\item for every $x \in \R^n$,
\begin{equation}\label{eq:ES-decomp}
\sum_{i=1}^n a_i (-x_{i:n}) = \sum_{k=1}^n \mu_k \, \dES_{k/n}(x).
\end{equation}
\end{enumerate}
\end{lemma}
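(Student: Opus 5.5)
The plan is to prove all three parts by Abel summation (summation by parts), using only that $a$ is non-increasing, non-negative, and sums to one.

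Part (i) is immediate. Since $a\in\Mndown$ we have $a_k\ge a_{k+1}$ for $k<n$. For $k=n$ we have $a_{n+1}=0\le a_n$ because $a_n\ge 0$. Hence $\mu_k=k(a_k-a_{k+1})\ge 0$ for every $k$.

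For the identity (iii), I will use the telescoping representation
\[
a_i=\sum_{k=i}^n (a_k-a_{k+1}),\qquad i=1,\dots,n,
\]
which holds because $a_{n+1}=0$. Substituting this into $\sum_{i=1}^n a_i(-x_{i:n})$ and interchanging the finite double sum over the region $\{1\le i\le k\le n\}$ gives
\[
\sum_{i=1}^n a_i(-x_{i:n})=\sum_{k=1}^n (a_k-a_{k+1})\sum_{i=1}^k(-x_{i:n})
=\sum_{k=1}^n k(a_k-a_{k+1})\Bigl(-\tfrac1k\sum_{i=1}^k x_{i:n}\Bigr).
\]
The last expression equals $\sum_{k=1}^n\mu_k\,\dES_{k/n}(x)$ by Definition~\ref{def:dES}.

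Part (ii) then follows by specialising (iii) to $x=-\ind$. Every order statistic equals $-1$, so each $\dES_{k/n}(-\ind)=1$, and the left side equals $\sum_i a_i=1$. Alternatively, one can expand $\sum_k k(a_k-a_{k+1})$ directly as a telescoping sum, which gives $\sum_k a_k=1$.

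There is no real obstacle here. The only points needing care are the index bookkeeping in the interchange of summation and the boundary convention $a_{n+1}=0$, which is exactly what makes $\mu_n\ge0$ and makes the telescoping exact.
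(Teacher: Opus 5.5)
Your proposal is correct and follows essentially the same summation-by-parts argument as the paper: the paper expands the right-hand side of (iii) and swaps the sums, while you substitute the telescoped $a_i=\sum_{k=i}^n(a_k-a_{k+1})$ into the left-hand side, which is the same interchange read in reverse. Two small differences are worth noting: you obtain (ii) by evaluating (iii) at $x=-\ind$ instead of telescoping directly, and you make explicit that $\mu_n\ge 0$ uses $a_n\ge 0$, which the paper leaves implicit.
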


\begin{proof}
(i) Since $a$ is non-increasing, $a_k \geq a_{k+1}$, hence $\mu_k = k(a_k - a_{k+1}) \geq 0$.

(ii) We compute:
\begin{align*}
\sum_{k=1}^n \mu_k
&= \sum_{k=1}^n k(a_k - a_{k+1}) \\
&= \sum_{k=1}^n k a_k - \sum_{k=1}^n k a_{k+1} \\
&= \sum_{k=1}^n k a_k - \sum_{j=2}^{n+1} (j-1) a_j \\
&= a_1 + \sum_{k=2}^n k a_k - \sum_{j=2}^n (j-1) a_j - n a_{n+1} \\
&= a_1 + \sum_{k=2}^n (k - (k-1)) a_k - 0 \\
&= a_1 + \sum_{k=2}^n a_k = \sum_{k=1}^n a_k = 1.
\end{align*}

(iii) We expand the right-hand side:
\begin{align*}
\sum_{k=1}^n \mu_k \, \dES_{k/n}(x)
&= \sum_{k=1}^n \mu_k \cdot \left( -\frac{1}{k} \sum_{i=1}^k x_{i:n} \right) \\
&= -\sum_{k=1}^n (a_k - a_{k+1}) \sum_{i=1}^k x_{i:n}.
\end{align*}
We exchange the order of summation. For each $i \in \{1, \ldots, n\}$, the term $x_{i:n}$
appears in the inner sum for all $k \geq i$. Thus:
\begin{align*}
-\sum_{k=1}^n (a_k - a_{k+1}) \sum_{i=1}^k x_{i:n}
&= -\sum_{i=1}^n x_{i:n} \sum_{k=i}^n (a_k - a_{k+1}) \\
&= -\sum_{i=1}^n x_{i:n} (a_i - a_{n+1}) \\
&= -\sum_{i=1}^n x_{i:n} \cdot a_i \\
&= \sum_{i=1}^n a_i (-x_{i:n}). \qedhere
\end{align*}
\end{proof}

\subsection{The discrete Kusuoka representation theorem}

Let $\mathcal{T}: \Mndown \to \Mn$ denote the linear map $a \mapsto \mu$ defined by
\eqref{eq:mu-from-a}. The lemma shows that $\mathcal{T}$ is well-defined and maps
non-increasing probability vectors to probability vectors.

\begin{remark}[Inverse of $\mathcal{T}$]\label{rem:T-inverse}
Given $\mu\in\Mn$ with $\mu_k=k(a_k-a_{k+1})$ and $a_{n+1}=0$, one recovers
\[
a_i=\sum_{k=i}^n \frac{\mu_k}{k},\qquad i=1,\ldots,n.
\]
Thus mixtures of $\dES_{k/n}$ are in one-to-one correspondence with weight vectors in $\Mndown$.
\end{remark}

\begin{theorem}[Discrete Kusuoka representation]\label{thm:discKusuoka}
Let $\hat\rho_n: \R^n \to \R$ be a law-invariant CRE. Then there exists a non-empty convex
set $\mathcal{M}_n \subseteq \Mn$ such that for all $x \in \R^n$,
\begin{equation}\label{eq:discKusuoka}
\hat\rho_n(x) = \sup_{\mu \in \mathcal{M}_n} \sum_{k=1}^n \mu_k \, \dES_{k/n}(x),
\end{equation}
with the supremum attained. Moreover, one may take $\mathcal{M}_n = \mathcal{T}(M^s_{\hat\rho_n})$,
where $M^s_{\hat\rho_n}$ is the representing set from Theorem \ref{thm:ACJP-4.2}.
\end{theorem}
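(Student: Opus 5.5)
The plan is to combine Theorem \ref{thm:ACJP-4.2} with Lemma \ref{lem:ESdecomp}, so the argument is essentially a linear change of variables.

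First, since $\hat\rho_n$ is a law-invariant CRE, Theorem \ref{thm:ACJP-4.2} supplies a non-empty convex set $M^s_{\hat\rho_n}\subseteq\Mndown$ with
\[
\hat\rho_n(x)=\sup_{a\in M^s_{\hat\rho_n}}\sum_{i=1}^n a_i(-x_{i:n}),
\]
and the supremum is attained for each $x$. I set $\mathcal{M}_n:=\mathcal{T}(M^s_{\hat\rho_n})$. Lemma \ref{lem:ESdecomp}(i)--(ii) gives $\mathcal{T}(\Mndown)\subseteq\Mn$, so $\mathcal{M}_n\subseteq\Mn$. It is non-empty because $M^s_{\hat\rho_n}$ is.

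Next, convexity. The map $\mathcal{T}$ is the restriction to $\Mndown$ of the linear map $\R^n\to\R^n$, $a\mapsto(k(a_k-a_{k+1}))_{k=1}^n$ with $a_{n+1}=0$. The image of a convex set under a linear map is convex, so $\mathcal{M}_n$ is convex.

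Then I check the representation. Lemma \ref{lem:ESdecomp}(iii) gives, for each $a\in M^s_{\hat\rho_n}$ and $\mu=\mathcal{T}(a)$,
\[
\sum_i a_i(-x_{i:n})=\sum_k\mu_k\,\dES_{k/n}(x).
\]
Hence the two suprema range over the same set of real values, and \eqref{eq:discKusuoka} follows. For attainment, take the maximiser $a^*(x)\in M^s_{\hat\rho_n}$ from Theorem \ref{thm:ACJP-4.2}; then $\mathcal{T}(a^*(x))\in\mathcal{M}_n$ attains the supremum in \eqref{eq:discKusuoka}.

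The only point needing care is that the representing set from Theorem \ref{thm:ACJP-4.2} really lies in $\Mndown$, so that the sign condition in Lemma \ref{lem:ESdecomp}(i) applies and $\mathcal{M}_n$ sits inside the simplex. In the proof given above this holds by construction, since $M^s_{\hat\rho_n}$ was defined as a closed convex hull of decreasing rearrangements. Remark \ref{rem:T-inverse} shows $\mathcal{T}$ is injective, with inverse $\mu\mapsto(\sum_{k\ge i}\mu_k/k)_i$, so no information is lost. Injectivity is not needed for the existence statement, though.
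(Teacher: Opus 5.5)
Your proposal is correct and follows the paper's proof essentially verbatim. Like the paper, you invoke Theorem~\ref{thm:ACJP-4.2}, set $\mathcal{M}_n:=\mathcal{T}(M^s_{\hat\rho_n})$, match the objectives term by term via Lemma~\ref{lem:ESdecomp}(iii), get convexity from linearity of $\mathcal{T}$, and push the maximiser forward for attainment; your explicit checks of non-emptiness and of $\mathcal{M}_n\subseteq\Mn$ via Lemma~\ref{lem:ESdecomp}(i)--(ii) only spell out what the paper leaves implicit.
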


\begin{proof}
By Theorem \ref{thm:ACJP-4.2}, there exists a convex set $M^s_{\hat\rho_n} \subseteq \Mndown$
such that
\[
\hat\rho_n(x) = \sup_{a \in M^s_{\hat\rho_n}} \sum_{i=1}^n a_i (-x_{i:n}).
\]

By Lemma \ref{lem:ESdecomp}, for each $a \in M^s_{\hat\rho_n}$,
\[
\sum_{i=1}^n a_i (-x_{i:n}) = \sum_{k=1}^n \mu_k \, \dES_{k/n}(x),
\]
where $\mu = \mathcal{T}(a)$. Thus
\[
\hat\rho_n(x) = \sup_{a \in M^s_{\hat\rho_n}} \sum_{k=1}^n \mathcal{T}(a)_k \, \dES_{k/n}(x)
= \sup_{\mu \in \mathcal{T}(M^s_{\hat\rho_n})} \sum_{k=1}^n \mu_k \, \dES_{k/n}(x).
\]

Setting $\mathcal{M}_n := \mathcal{T}(M^s_{\hat\rho_n})$, we obtain \eqref{eq:discKusuoka}.

Convexity: since $\mathcal{T}$ is linear, $\mathcal{T}(M^s_{\hat\rho_n})$ is convex
whenever $M^s_{\hat\rho_n}$ is.

Attainment: the supremum in \eqref{eq:ACJP-4.2} is attained for each $x$ at some
$a^* \in M^s_{\hat\rho_n}$. Then $\mu^* = \mathcal{T}(a^*)$ achieves the supremum in
\eqref{eq:discKusuoka}.
\end{proof}

\begin{corollary}[Comonotonic case]\label{cor:discKusuoka-comon}
A CRE $\hat\rho_n$ is comonotonic and law-invariant if and only if there exists a unique
$\mu \in \Mn$ such that
\[
\hat\rho_n(x) = \sum_{k=1}^n \mu_k \, \dES_{k/n}(x), \qquad x \in \R^n.
\]
\end{corollary}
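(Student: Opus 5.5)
The plan is to reduce everything to Theorem \ref{thm:ACJP-4.10} and transport it through the bijection $\mathcal{T}:\Mndown\to\Mn$ of Lemma \ref{lem:ESdecomp} and Remark \ref{rem:T-inverse}.

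\emph{Forward direction.} Suppose $\hat\rho_n$ is a comonotonic, law-invariant CRE. Then Theorem \ref{thm:ACJP-4.10} supplies a unique $a\in\Mndown$ with $\hat\rho_n(x)=\sum_i a_i(-x_{i:n})$. Set $\mu:=\mathcal{T}(a)$. By Lemma \ref{lem:ESdecomp}(i)--(ii) we have $\mu\in\Mn$, and by (iii) we get $\hat\rho_n(x)=\sum_k\mu_k\dES_{k/n}(x)$ for every $x$.

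\emph{Converse.} Given $\mu\in\Mn$ with this representation, define $a_i:=\sum_{k=i}^n\mu_k/k$ as in Remark \ref{rem:T-inverse}. I will check the following directly:
\begin{itemize}
\item $a_i-a_{i+1}=\mu_i/i\ge0$, so $a$ is non-increasing;
\item $\sum_i a_i=\sum_k\mu_k\cdot k/k=1$ by exchanging the sums, so $a\in\Mndown$;
\item $\mathcal{T}(a)=\mu$.
\end{itemize}
Lemma \ref{lem:ESdecomp}(iii) then rewrites the mixture as $\sum_i a_i(-x_{i:n})$. The ``$\Leftarrow$'' part of Theorem \ref{thm:ACJP-4.10} shows that this is a comonotonic law-invariant CRE.

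\emph{Uniqueness.} If $\mu,\mu'\in\Mn$ both represent $\hat\rho_n$, pull them back to $a=\mathcal{T}^{-1}(\mu)$ and $a'=\mathcal{T}^{-1}(\mu')$ in $\Mndown$. Both give L-representations of $\hat\rho_n$, so the uniqueness in Theorem \ref{thm:ACJP-4.10} forces $a=a'$, and hence $\mu=\mathcal{T}(a)=\mathcal{T}(a')=\mu'$. Alternatively, test on the vectors $x^{(k)}$ directly.

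The only point needing care is that $\mathcal{T}$ is a bijection onto all of $\Mn$, not merely into it. This requires the inverse formula to land in $\Mndown$ for an arbitrary $\mu\in\Mn$, which the verification above handles. There is no real obstacle.
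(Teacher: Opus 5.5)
Your proof is correct and follows the same route as the paper: reduce to Theorem \ref{thm:ACJP-4.10} and transport the unique weight vector $a\in\Mndown$ through $\mathcal{T}$. The paper's one-line proof only records that $\mathcal{M}_n=\{\mathcal{T}(a)\}$, implicitly taking $M^s_{\hat\rho_n}=\{a\}$. You additionally spell out the converse and the uniqueness of $\mu$ using the bijectivity of $\mathcal{T}:\Mndown\to\Mn$, which the paper leaves implicit in Remark \ref{rem:T-inverse}.
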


\begin{proof}
By Theorem \ref{thm:ACJP-4.10}, $M^s_{\hat\rho_n} = \{a\}$ is a singleton, hence
$\mathcal{M}_n = \{\mathcal{T}(a)\}$ is a singleton.
\end{proof}

\begin{remark}
The discrete Kusuoka representation \eqref{eq:discKusuoka} is the finite-sample analogue
of the population-level Kusuoka representation \eqref{eq:Kusuoka}. The probability
measure $\nu$ on $(0, 1]$ is replaced by a probability vector $\mu \in \Mn$, and the
continuous expected shortfall $\ES_\alpha$ is replaced by its discrete counterpart
$\dES_{k/n}$. This representation makes explicit the sense in which CREs are
``statistical shadows'' of CRMs.
\end{remark}

\section{Spectral risk measures and hyperfinite L-statistics}\label{sec:spectral}

We now turn to spectral risk measures and their finite-sample estimators, developing the
hyperfinite perspective that will underpin our consistency results.

\subsection{Spectral risk as a hyperfinite L-statistic}

Let $\varphi$ be a spectrum (Definition \ref{def:spectral}) and $X \in \Lone$ a random
variable with lower quantile function $q_X$. The spectral risk measure is
\[
\rho_\varphi(X) = -\int_0^1 q_X(\alpha) \, \varphi(\alpha) \, d\alpha.
\]

Now consider a hyperfinite i.i.d.\ sample $(X_1, \ldots, X_N)$ with $N \in {}^*\N$ infinite.
Let $X_{1:N} \leq \cdots \leq X_{N:N}$ be the order statistics and $\alpha_k := k/N$.
Define the \emph{hyperfinite L-statistic}
\begin{equation}\label{eq:hyperfinite-L}
L_N^\varphi := -\frac{1}{N} \sum_{k=1}^N \varphi(\alpha_k) \, X_{k:N}.
\end{equation}

\begin{proposition}[Hyperfinite spectral representation]\label{prop:hyperfinite-spectral}
Let $X \in \Lone$ with quantile function $q_X$, and let $\varphi$ be a bounded spectrum.
Then, for Loeb-almost all sample paths,
\begin{equation}\label{eq:spectral-st}
\rho_\varphi(X) = \st(L_N^\varphi).
\end{equation}
\end{proposition}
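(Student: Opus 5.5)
The plan is to reduce the statement directly to the integral identity \eqref{eq:quantile-integral} of Theorem \ref{thm:hGC}, applied with $g:=\varphi$. By definition,
\[
L_N^\varphi=-\frac{1}{N}\sum_{k=1}^N \varphi(\alpha_k)\,X_{k:N},
\]
so if $\varphi$ meets the hypotheses of Theorem \ref{thm:hGC}, then \eqref{eq:quantile-integral} gives, for $L({}^*\Pbb)$-almost all sample paths,
\[
\st(L_N^\varphi)=-\int_0^1\varphi(\alpha)\,q_X(\alpha)\,d\alpha=\rho_\varphi(X).
\]
Here we use linearity of $\st$ (Proposition \ref{prop:standard-part}), which is legitimate once we know the sum is finite; that finiteness is part of what \eqref{eq:quantile-integral} asserts.

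The hypotheses to check are the following.
\begin{enumerate}[(a)]
\item $\E|X_1|<\infty$, which holds because $X\in\Lone$.
\item $\varphi$ is bounded, which is (S2).
\item $\int_0^1|\varphi q_X|<\infty$. This follows from $|\varphi q_X|\le\|\varphi\|_\infty|q_X|$ and $\int_0^1|q_X|=\E|X|<\infty$.
\item $\varphi$ is Riemann integrable. By (S1) $\varphi$ is monotone, and a bounded monotone function on $[0,1]$ has at most countably many discontinuities. By Lebesgue's criterion it is therefore Riemann integrable.
\end{enumerate}
We should also make the star-suppression convention explicit: $\varphi(\alpha_k)$ means ${}^*\varphi(\alpha_k)$, as in Notation \ref{not:Loeb-model}.

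The main obstacle, to the extent there is one, is being precise about which almost-sure event we mean. The proof of Theorem \ref{thm:hGC} works on the Loeb-probability-one event $\{\sup_x|F_N-F|\approx 0\}$ intersected with the Loeb-null-complement events from Theorem \ref{thm:hSLLN}. The latter are applied to the countably many truncated integrands $|X_i|\ind_{\{|X_i|>M\}}$, with $M\in\N$, used in the truncation step. A countable intersection of full-measure events still has full measure, so one event works for every $M$ at once.

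One point deserves care. The argument for $\int_0^1|g_N-g|\approx 0$ uses Proposition \ref{prop:Loeb-Riemann} applied to $|g_N-g|$, which is not literally of the form ${}^*f(k/N)$. I would handle this by transferring the standard fact that, for Riemann integrable $g$,
\[
\int_0^1|g_n-g|\,d\alpha\to 0,
\]
where $g_n$ is the right-endpoint step function on the mesh $1/n$. Transfer then gives the conclusion at the infinite $N$. This is exactly the ingredient \eqref{eq:quantile-integral} needs, so the conclusion follows without further work.
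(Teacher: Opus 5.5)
Your proposal is correct and is essentially the paper's proof: apply Theorem~\ref{thm:hGC} with $g=\varphi$, after checking that $\varphi$ is bounded, Riemann integrable because it is monotone, and satisfies $\int_0^1|\varphi q_X|\le\|\varphi\|_\infty\E|X|<\infty$, and then negate. Your extra remarks are sound clarifications of details inside the proof of Theorem~\ref{thm:hGC}, but they are not needed for the proposition itself: one common full-measure event serves all standard truncation levels $M$, and $\int_0^1|g_N-g|\approx 0$ can be justified by transferring the standard convergence of right-endpoint step approximations rather than by Proposition~\ref{prop:Loeb-Riemann}.
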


\begin{proof}
Apply Theorem \ref{thm:hGC} with $g=\varphi$. Since $\varphi$ is bounded and (being
monotone) Riemann integrable, and $X\in L^1$ implies $\int_0^1|\varphi(\alpha)q_X(\alpha)|\,d\alpha<\infty$,
we obtain
\[
\st\Big(\frac{1}{N}\sum_{k=1}^N \varphi(\alpha_k)X_{k:N}\Big)
=
\int_0^1 \varphi(\alpha)q_X(\alpha)\,d\alpha.
\]
Multiplying by $-1$ gives $\st(L_N^\varphi)=\rho_\varphi(X)$.
\end{proof}

\subsection{Canonical spectral plug-in estimators}

Given a spectrum $\varphi$ and sample size $n \in \N$, the canonical finite-sample
estimator is constructed as follows.

\begin{definition}[Canonical spectral plug-in estimator]\label{def:canonical-plug-in}
For $x \in \R^n$, define
\begin{equation}\label{eq:plug-in-estimator}
\hat\rho_{n,\varphi}(x) := -\sum_{i=1}^n a_{i,n}(\varphi) \, x_{i:n},
\end{equation}
where the weights are
\begin{equation}\label{eq:canonical-weights}
a_{i,n}(\varphi) := \int_{(i-1)/n}^{i/n} \varphi(s) \, ds, \qquad i = 1, \ldots, n.
\end{equation}
\end{definition}

\begin{proposition}\label{prop:plug-in-is-CRE}
The canonical spectral plug-in estimator $\hat\rho_{n,\varphi}$ is a comonotonic,
law-invariant CRE. Moreover, its Kusuoka representation (Theorem \ref{thm:discKusuoka})
has $\mathcal{M}_n = \{\mu\}$ a singleton with $\mu = \mathcal{T}(a_n(\varphi))$.
\end{proposition}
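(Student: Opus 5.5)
The plan is to reduce everything to Theorem \ref{thm:ACJP-4.10} and Corollary \ref{cor:discKusuoka-comon}. The only real content is to verify that the canonical weight vector $a_n(\varphi):=(a_{1,n}(\varphi),\dots,a_{n,n}(\varphi))$ lies in $\Mndown$. Once this is established, the ``$\Leftarrow$'' direction of Theorem \ref{thm:ACJP-4.10} gives that $\hat\rho_{n,\varphi}(x)=\sum_i a_{i,n}(\varphi)(-x_{i:n})$ is a comonotonic, law-invariant CRE.

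First, nonnegativity: since $\varphi\ge 0$ by Definition \ref{def:spectral}, each $a_{i,n}(\varphi)=\int_{(i-1)/n}^{i/n}\varphi\ge 0$. These integrals are well defined because $\varphi$ is bounded and monotone, hence Lebesgue (indeed Riemann) integrable.

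Second, normalisation: by additivity of the integral over the partition $\{((i-1)/n,i/n]\}_{i=1}^n$ of $(0,1]$, we get $\sum_i a_{i,n}(\varphi)=\int_0^1\varphi=1$ by (S3).

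Third, monotonicity, which is the only step requiring a small argument. I would use the change of variables $s=t+1/n$ to write
\[
a_{i+1,n}(\varphi)=\int_{(i-1)/n}^{i/n}\varphi(t+1/n)\,dt\le\int_{(i-1)/n}^{i/n}\varphi(t)\,dt=a_{i,n}(\varphi),
\]
where the inequality holds pointwise because $\varphi$ is non-increasing (S1). Hence $a_n(\varphi)\in\Mndown$.

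For the Kusuoka statement, Theorem \ref{thm:ACJP-4.10} asserts that the representing vector of a comonotonic law-invariant CRE is unique, so $a_n(\varphi)$ is that vector. Two routes then finish the proof. The direct route: the singleton $\{a_n(\varphi)\}$ is a legitimate choice of $M^s_{\hat\rho_n}$ in Theorem \ref{thm:ACJP-4.2}, so Theorem \ref{thm:discKusuoka} yields $\mathcal{M}_n=\mathcal{T}(\{a_n(\varphi)\})=\{\mathcal{T}(a_n(\varphi))\}$. The alternative route is to apply Lemma \ref{lem:ESdecomp} directly to $a_n(\varphi)$, which gives the mixture $\sum_k\mu_k\dES_{k/n}$ with $\mu=\mathcal{T}(a_n(\varphi))$; Corollary \ref{cor:discKusuoka-comon} then supplies uniqueness of this $\mu$ (equivalently, Remark \ref{rem:T-inverse} shows $\mathcal{T}$ is injective).

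I expect no genuine obstacle here. The only point needing care is interpretive: the representing set $M^s$ in Theorem \ref{thm:ACJP-4.2} is not unique in general, so the singleton claim should be read with $M^s_{\hat\rho_n}$ chosen as $\{a_n(\varphi)\}$. This choice is justified by the uniqueness clause of Theorem \ref{thm:ACJP-4.10}.
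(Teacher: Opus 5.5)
Your proposal is correct and follows the paper's own proof. Like the paper, you show $a_n(\varphi)\in\Mndown$ (nonnegativity, total mass $\int_0^1\varphi=1$, and monotonicity because $\varphi$ is non-increasing), then apply the ``$\Leftarrow$'' direction of Theorem \ref{thm:ACJP-4.10} and Corollary \ref{cor:discKusuoka-comon}. Your change-of-variables argument for monotonicity is more explicit than the paper's. Your caveat that the singleton claim needs the choice $M^s_{\hat\rho_n}=\{a_n(\varphi)\}$ is also correct: for example, $\operatorname{conv}\{(1,0,\dots,0),\tfrac1n\ind\}$ also represents the minimum estimator $-x_{1:n}$, so this point sharpens the paper's looser wording.
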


\begin{proof}
Since $\varphi$ is non-increasing, the weights $a_{i,n}(\varphi)$ are non-increasing in
$i$ (as integrals of a non-increasing function over consecutive intervals). Also,
$\sum_{i=1}^n a_{i,n}(\varphi) = \int_0^1 \varphi = 1$, so $a_n(\varphi) \in \Mndown$.

By Theorem \ref{thm:ACJP-4.10}, $\hat\rho_{n,\varphi}$ is a comonotonic, law-invariant CRE.
The Kusuoka representation follows from Corollary \ref{cor:discKusuoka-comon}.
\end{proof}

\subsection{Step approximations and the spectral-estimator correspondence}

For theoretical analysis, it is useful to associate with each weight vector
$a_n \in \Mndown$ a step function $\varphi_n$ on $[0, 1]$.

\begin{definition}[Associated step function]\label{def:step-function}
For $a_n = (a_{1,n}, \ldots, a_{n,n}) \in \Mndown$, define
\begin{equation}\label{eq:step-function}
\varphi_n(t) := \sum_{i=1}^n n a_{i,n} \, \ind_{((i-1)/n, i/n]}(t), \qquad t \in (0, 1].
\end{equation}
\end{definition}

When $a_n$ arises from a spectrum $\varphi$ via \eqref{eq:canonical-weights}, we have
\[
    \varphi_n(t) = n \int_{(i-1)/n}^{i/n} \varphi(s) ds\text{ for }t \in ((i-1)/n, i/n],
\]
which is the average of $\varphi$ on that interval.

\begin{lemma}[Integral correspondence]\label{lem:integral-correspondence}
For $a_n \in \Mndown$ and its associated step function $\varphi_n$,
\begin{equation}\label{eq:integral-correspondence}
\sum_{i=1}^n a_{i,n} \, x_{i:n} = \int_0^1 \varphi_n(\alpha) \, q_n(\alpha) \, d\alpha,
\end{equation}
where $q_n(\alpha) := x_{\lceil n\alpha \rceil : n}$ is the empirical quantile function.
\end{lemma}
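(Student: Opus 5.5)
The identity is a direct computation: both sides are integrals of step functions that are constant on the same partition $\{((i-1)/n,i/n]\}_{i=1}^n$ of $(0,1]$. I would first record the values of the empirical quantile function $q_n$ on each cell. For $\alpha\in((i-1)/n,i/n]$ we have $i-1<n\alpha\le i$, so $\lceil n\alpha\rceil=i$ and hence $q_n(\alpha)=x_{i:n}$. By Definition \ref{def:step-function}, on the same cell $\varphi_n(\alpha)=n a_{i,n}$, because the indicator intervals are pairwise disjoint and exactly one of them contains $\alpha$.

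Next I would split the integral over the partition. The point $\alpha=0$ is excluded (it is a single point, of Lebesgue measure zero, where $q_n$ is not defined via $\lceil\cdot\rceil$), so
\[
\int_0^1\varphi_n(\alpha)q_n(\alpha)\,d\alpha=\sum_{i=1}^n\int_{(i-1)/n}^{i/n} n a_{i,n}\,x_{i:n}\,d\alpha=\sum_{i=1}^n n a_{i,n}x_{i:n}\cdot\frac1n=\sum_{i=1}^n a_{i,n}x_{i:n},
\]
which is \eqref{eq:integral-correspondence}.

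There is no real obstacle here. The only points needing care are the ceiling bookkeeping at the cell endpoints $\alpha=i/n$ and the null set $\{0\}$, and both are handled by the half-open convention $((i-1)/n,i/n]$ used in \eqref{eq:step-function}. The hypothesis $a_n\in\Mndown$ plays no role in the identity. It matters only later, where monotonicity of $\varphi_n$ lets the identity be combined with Proposition \ref{prop:hyperfinite-spectral}-type arguments.
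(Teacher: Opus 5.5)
Your proof is correct and is essentially the paper's argument: on each cell $((i-1)/n,i/n]$ you have $\lceil n\alpha\rceil=i$, so $q_n(\alpha)=x_{i:n}$ and $\varphi_n(\alpha)=na_{i,n}$, and integrating cell by cell gives the sum. Your side remarks, that $\alpha=0$ is a null set and that the hypothesis $a_n\in\Mndown$ is not used for the identity itself, are accurate.
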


\begin{proof}
For $\alpha \in ((i-1)/n, i/n]$, we have $\lceil n\alpha \rceil = i$, so
$q_n(\alpha) = x_{i:n}$. Thus
\[
\int_0^1 \varphi_n(\alpha) q_n(\alpha) d\alpha
= \sum_{i=1}^n \int_{(i-1)/n}^{i/n} n a_{i,n} \cdot x_{i:n} \, d\alpha
= \sum_{i=1}^n n a_{i,n} \cdot \frac{1}{n} \cdot x_{i:n}
= \sum_{i=1}^n a_{i,n} x_{i:n}. \qedhere
\]
\end{proof}

\section{Spectral plug-in consistency}\label{sec:uniform}

We now establish the consistency of spectral plug-in estimators, culminating in a
uniform consistency theorem over Lipschitz families of spectra.

\subsection{Pointwise consistency: analytic core}

The following lemma isolates the key analytic condition for consistency.

\begin{lemma}[Primitive convergence implies density convergence]\label{lem:primitiveToae}
Let $(\varphi_n)_{n \in \N}$ be a sequence of step functions on $[0, 1]$, each non-increasing,
non-negative, and satisfying $\int_0^1 \varphi_n = 1$. Assume $\sup_n \|\varphi_n\|_\infty < \infty$.
Let $\varphi$ be a bounded spectrum. Define the primitives
\[
\Phi_n(t) := \int_0^t \varphi_n(s) \, ds, \qquad \Phi(t) := \int_0^t \varphi(s) \, ds.
\]
If $\Phi_n(t) \to \Phi(t)$ for all $t \in (0, 1)$, then:
\begin{enumerate}[(i)]
\item $\varphi_n(t) \to \varphi(t)$ for almost every $t \in (0, 1)$;
\item $\varphi_n \to \varphi$ in $L^1([0, 1])$.
\end{enumerate}
\end{lemma}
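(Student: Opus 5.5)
The proof rests on concavity. Each $\Phi_n$ is concave on $[0,1]$, since $\varphi_n$ is non-increasing, and so is $\Phi$. The plan is to combine the pointwise convergence of the concave functions $\Phi_n \to \Phi$ with the classical fact that for concave functions pointwise convergence on an open interval forces convergence of one-sided derivatives wherever the limit is differentiable.

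\textbf{Step 1 (pointwise convergence at continuity points).} Fix $t\in(0,1)$ at which $\varphi$ is continuous; these are all but countably many points, because $\varphi$ is monotone. For small $h>0$, monotonicity of $\varphi_n$ gives the sandwich
\[
\frac{\Phi_n(t+h)-\Phi_n(t)}{h}\le \varphi_n(t^{+})\le\varphi_n(t)\le \varphi_n(t^{-})\le \frac{\Phi_n(t)-\Phi_n(t-h)}{h}.
\]
Here $\varphi_n(t^\pm)$ denote one-sided limits; at points where the step function jumps we use that $\varphi_n(t)$ lies between them. Letting $n\to\infty$ with $h$ fixed, the outer quotients converge to the corresponding quotients of $\Phi$. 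Hence
\[
\frac{\Phi(t+h)-\Phi(t)}{h}\le\liminf_n\varphi_n(t)\le\limsup_n\varphi_n(t)\le\frac{\Phi(t)-\Phi(t-h)}{h}.
\]
Letting $h\downarrow 0$, both bounds tend to $\varphi(t)$, since $\Phi'(t)=\varphi(t)$ at continuity points of $\varphi$. This proves (i), with the exceptional set being the countable discontinuity set of $\varphi$.

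\textbf{Step 2 ($L^1$ convergence).} By assumption $\sup_n\|\varphi_n\|_\infty=:C<\infty$, and $\varphi$ is bounded. So $|\varphi_n-\varphi|\le C+\|\varphi\|_\infty$ on $[0,1]$. Dominated convergence together with (i) then gives $\int_0^1|\varphi_n-\varphi|\to0$, which is (ii).

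\textbf{Main obstacle.} The only delicate point is the sandwich inequality in Step 1. It relies on monotonicity of each $\varphi_n$, so that its averages over $[t,t+h]$ and $[t-h,t]$ bracket the value at $t$. It also requires that $t\pm h$ stay inside $(0,1)$, where the hypothesis $\Phi_n\to\Phi$ is available; this holds for small $h$ since $t$ is interior. The endpoints $0,1$ are a null set and play no role. Note that the uniform bound on $\|\varphi_n\|_\infty$ is used only in Step 2; without it one could instead argue via Scheffé's lemma, using $\int\varphi_n=\int\varphi=1$ and nonnegativity.
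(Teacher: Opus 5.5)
Your proof is correct and follows the paper's argument. It uses the same monotonicity sandwich of $\varphi_n(t)$ between the forward and backward difference quotients of $\Phi_n$, then a limit in $n$ followed by $h\downarrow 0$ at points where $\Phi'(t)=\varphi(t)$; your continuity points of $\varphi$ coincide with the paper's differentiability points of the concave $\Phi$. Part (ii) is then dominated convergence, as in the paper. Your closing remark is also right: since $\varphi_n,\varphi\ge 0$ with $\int_0^1\varphi_n=\int_0^1\varphi=1$, Scheff\'e's lemma gives (ii) from (i), so the hypothesis $\sup_n\|\varphi_n\|_\infty<\infty$ is not needed for this lemma.
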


\begin{proof}
(i) Fix $t \in (0, 1)$ and $h > 0$ sufficiently small. Since $\varphi_n$ is non-increasing,
for any $s \in [t, t + h]$ we have $\varphi_n(s) \leq \varphi_n(t)$, hence
\[
\frac{\Phi_n(t + h) - \Phi_n(t)}{h} = \frac{1}{h} \int_t^{t+h} \varphi_n(s) ds \leq \varphi_n(t).
\]
Similarly, for $s \in [t - h, t]$, $\varphi_n(s) \geq \varphi_n(t)$, so
\[
\varphi_n(t) \leq \frac{\Phi_n(t) - \Phi_n(t - h)}{h}.
\]
Taking $\liminf$ and $\limsup$ as $n \to \infty$, and using $\Phi_n \to \Phi$ pointwise:
\[
\frac{\Phi(t + h) - \Phi(t)}{h} \leq \liminf_{n \to \infty} \varphi_n(t)
\leq \limsup_{n \to \infty} \varphi_n(t) \leq \frac{\Phi(t) - \Phi(t - h)}{h}.
\]
Now let $h \downarrow 0$. Since $\varphi$ is non-increasing, $\Phi$ is concave, hence
differentiable almost everywhere. At differentiability points of $\Phi$, both bounds
converge to $\Phi'(t) = \varphi(t)$, giving $\varphi_n(t) \to \varphi(t)$.

(ii) By (i), $\varphi_n \to \varphi$ pointwise almost everywhere. The uniform bound
$\sup_n \|\varphi_n\|_\infty < \infty$ allows dominated convergence:
$\int_0^1 |\varphi_n - \varphi| \to 0$.
\end{proof}

\subsection{Spectral L-estimator consistency theorem}

\begin{theorem}[Spectral L-estimator consistency: a distribution-free criterion]\label{thm:spectralConsistency}
Let $\varphi$ be a bounded spectrum. Let $\hat\rho_n(x)=-\sum_{i=1}^n a_{i,n}x_{i:n}$ with
$a_n\in\Delta_n^\downarrow$, and let $\varphi_n$ be the associated step function \eqref{eq:step-function}.
Assume $\sup_n\|\varphi_n\|_\infty<\infty$.

Then the following are equivalent:
\begin{enumerate}[(i)]
\item For every $t\in(0,1)$, $\int_0^t \varphi_n(s)\,ds \to \int_0^t \varphi(s)\,ds$.
\item For every $X\in L^1$ and every i.i.d.\ sample $(X_i)$ with law $X$,
\[
\hat\rho_n(X_1,\ldots,X_n)\xrightarrow[n\to\infty]{\mathrm{a.s.}} \rho_\varphi(X).
\]
\end{enumerate}
\end{theorem}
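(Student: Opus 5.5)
The two implications need different tools. For (i)$\Rightarrow$(ii) we reduce to $L^1$ convergence of the spectra and combine it with a.s.\ $L^1$ convergence of empirical quantiles. For (ii)$\Rightarrow$(i) we test (ii) on Bernoulli variables, which isolates the primitive $\Phi_n(t)$.

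\emph{(i)$\Rightarrow$(ii).} By Lemma~\ref{lem:primitiveToae}, (i) together with $\sup_n\|\varphi_n\|_\infty=:C<\infty$ gives $\varphi_n\to\varphi$ in $L^1[0,1]$. By Lemma~\ref{lem:integral-correspondence},
\[
\hat\rho_n(X_1,\dots,X_n)-\rho_\varphi(X)=-\int_0^1\varphi_n(q_n-q)\,d\alpha-\int_0^1(\varphi_n-\varphi)q\,d\alpha ,
\]
where $q_n$ is the empirical quantile function and $q=q_X$.

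The first term is at most $C\int_0^1|q_n-q|\,d\alpha$. This tends to $0$ a.s. by the standard version of the truncation argument in Theorem~\ref{thm:hGC}. On a truncation level $M$, Glivenko--Cantelli gives $q_n\to q$ at continuity points of $q$, so bounded convergence applies. The tails are controlled by the classical SLLN applied to $|X_i|\ind_{\{|X_i|>M\}}$ and to $\E[|X|\ind_{\{|X|>M\}}]$, using a countable sequence $M\in\N$ to keep a single null set.

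For the second term, split $q=q^{(M)}+(q-q^{(M)})$. This gives the bound $M\|\varphi_n-\varphi\|_{L^1}+(C+\|\varphi\|_\infty)\int|q-q^{(M)}|$. Let $n\to\infty$ and then $M\to\infty$. (One could phrase the whole step via the hyperfinite picture, i.e.\ Proposition~\ref{prop:hyperfinite-spectral}, but the direct standard argument gives a.s. convergence along standard $n$ without a transfer detour, so I would use it.)

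\emph{(ii)$\Rightarrow$(i).} Fix $t\in(0,1)$ and take $X=-\ind_{\{U\le t\}}$ with $U$ uniform, so $X\in L^1$. Then $q_X=-\ind_{(0,t]}$ (up to the endpoint), and hence $\rho_\varphi(X)=\Phi(t)$.

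For the sample, let $K_n$ be the number of indices with $U_i\le t$. The order statistics are $-1$ in positions $1,\dots,K_n$ and $0$ afterwards, so $\hat\rho_n=\sum_{i\le K_n}a_{i,n}=\Phi_n(K_n/n)$. By the SLLN, $K_n/n\to t$ a.s. Since $|\Phi_n(s)-\Phi_n(s')|\le C|s-s'|$, we get $\Phi_n(K_n/n)-\Phi_n(t)\to0$ a.s. Combining this with (ii) yields $\Phi_n(t)\to\Phi(t)$; this is a deterministic statement, so the a.s.\ qualifier drops.

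The main obstacle is the a.s.\ $L^1$ convergence $\int|q_n-q|\to0$ uniformly handled with only $X\in L^1$. Here the care lies in interchanging the limits in $M$ and $n$ on a single full-measure event, and in handling the countably many discontinuities of $q$. I would first try the truncation scheme already used in the proof of Theorem~\ref{thm:hGC}, transcribed into standard a.s.\ language.
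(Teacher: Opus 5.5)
Your proposal is correct and follows the paper's proof essentially step for step. You use Lemma~\ref{lem:primitiveToae} to get $\varphi_n\to\varphi$ in $L^1$, the truncation/Glivenko--Cantelli/strong-law argument for $\int_0^1|q_n-q_X|\,d\alpha\to0$ almost surely, and the test variable $X=-\ind_{\{U\le t\}}$ together with the Lipschitz bound on $\int_0^{\cdot}\varphi_n$ for the converse. The only deviation is minor: you split the error as $\varphi_n(q_n-q_X)+(\varphi_n-\varphi)q_X$ rather than the paper's $(\varphi_n-\varphi)q_n+\varphi(q_n-q_X)$, which makes the spectral-error term deterministic and saves one application of the strong law.
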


\begin{remark}
The condition $\sup_n\|\varphi_n\|_\infty<\infty$ is automatic for the canonical discretisation
$a_{i,n}=\int_{(i-1)/n}^{i/n}\varphi$ of a bounded spectrum $\varphi$: in this case
$\|\varphi_n\|_\infty\le\|\varphi\|_\infty$ since $\varphi_n$ is a step-function approximation.
For general L-estimator weights $(a_{i,n})$ not arising from a fixed spectrum, the uniform
bound is a genuine assumption that must be verified.
\end{remark}

\begin{proof}
Assume \emph{(i)}. By Lemma \ref{lem:primitiveToae}, the primitive convergence implies
$\varphi_n\to\varphi$ in $L^1([0,1])$. Let $q_n(\alpha):=X_{\lceil n\alpha\rceil:n}$ be the
empirical quantile function and let $q_X$ be the population lower quantile. By the
Glivenko--Cantelli theorem, $q_n(\alpha)\to q_X(\alpha)$ almost surely at every continuity
point of $q_X$, hence for almost every $\alpha\in(0,1)$.

Using Lemma \ref{lem:integral-correspondence},
\[
\hat\rho_n(X_1,\ldots,X_n)=-\int_0^1 \varphi_n(\alpha)\,q_n(\alpha)\,d\alpha,
\qquad
\rho_\varphi(X)=-\int_0^1 \varphi(\alpha)\,q_X(\alpha)\,d\alpha.
\]
Therefore,
\[
\left|\hat\rho_n(X_1,\ldots,X_n)-\rho_\varphi(X)\right|
\le
\int_0^1|\varphi_n-\varphi|\,|q_n|\,d\alpha
+
\int_0^1|\varphi|\,|q_n-q_X|\,d\alpha.
\]
We handle these two integrals separately.

Fix $M>0$. Since $|q_n|=|q_n|\ind_{\{|q_n|\le M\}}+|q_n|\ind_{\{|q_n|>M\}}$, we obtain
\begin{align*}
\int_0^1|\varphi_n-\varphi|\,|q_n|\,d\alpha
&\le
M\int_0^1|\varphi_n-\varphi|\,d\alpha
+
\bigl(\sup_m\|\varphi_m\|_\infty+\|\varphi\|_\infty\bigr)
\int_0^1 |q_n|\,\ind_{\{|q_n|>M\}}\,d\alpha.
\end{align*}
The first term converges to $0$ because $\varphi_n\to\varphi$ in $L^1$. Moreover,
\[
\int_0^1 |q_n(\alpha)|\,\ind_{\{|q_n(\alpha)|>M\}}\,d\alpha
=
\frac{1}{n}\sum_{i=1}^n |X_i|\,\ind_{\{|X_i|>M\}},
\]
since $q_n$ is a step function taking the values $X_{i:n}$. By the strong law of large
numbers, the right-hand side converges almost surely to $\E[|X|\,\ind_{\{|X|>M\}}]$.
Letting $n\to\infty$ and then $M\to\infty$ (using $\E|X|<\infty$) shows
\[
\int_0^1|\varphi_n-\varphi|\,|q_n|\,d\alpha \longrightarrow 0
\qquad\text{almost surely.}
\]

For the second term, note first that $q_n(\alpha)\to q_X(\alpha)$ almost everywhere.
Define truncations $q_n^{(M)}:=\max(\min(q_n,M),-M)$ and $q_X^{(M)}:=\max(\min(q_X,M),-M)$.
Then $|q_n^{(M)}-q_X^{(M)}|\le 2M$ and $q_n^{(M)}\to q_X^{(M)}$ almost everywhere, so dominated
convergence yields $\int_0^1|q_n^{(M)}-q_X^{(M)}|\,d\alpha\to 0$ almost surely for each fixed $M$.
Moreover,
\[
\int_0^1|q_n-q_X|
\le
\int_0^1|q_n-q_n^{(M)}|
+
\int_0^1|q_n^{(M)}-q_X^{(M)}|
+
\int_0^1|q_X^{(M)}-q_X|.
\]
The middle term tends to $0$ almost surely for fixed $M$. The last term tends to $0$ as
$M\to\infty$ because $q_X\in L^1(0,1)$ when $X\in L^1$. Finally,
\[
\int_0^1|q_n(\alpha)|\,\ind_{\{|q_n(\alpha)|>M\}}\,d\alpha
=
\frac{1}{n}\sum_{i=1}^n |X_i|\,\ind_{\{|X_i|>M\}}
\longrightarrow
\E[|X|\,\ind_{\{|X|>M\}}]
\]
almost surely by the strong law, and the right-hand side vanishes as $M\to\infty$. Hence
$\int_0^1|q_n-q_X|\,d\alpha\to 0$ almost surely, and therefore
\[
\int_0^1|\varphi(\alpha)|\,|q_n(\alpha)-q_X(\alpha)|\,d\alpha
\le
\|\varphi\|_\infty\int_0^1|q_n-q_X|
\longrightarrow 0
\qquad\text{almost surely.}
\]
Combining the two parts proves \emph{(ii)}.

Conversely, assume \emph{(ii)}. Fix $t\in(0,1)$ and let $U\sim\mathrm{Unif}(0,1)$.
Set $X:=-\ind_{\{U\le t\}}$. Then $X$ is bounded and
\[
q_X(\alpha)=
\begin{cases}
-1, & \alpha\in(0,t],\\
0,  & \alpha\in(t,1),
\end{cases}
\qquad\text{so}\qquad
\rho_\varphi(X)=\int_0^t \varphi(\alpha)\,d\alpha.
\]
For an i.i.d.\ sample $X_i:=-\ind_{\{U_i\le t\}}$, let $m_n:=\sum_{i=1}^n \ind_{\{U_i\le t\}}$ and
$t_n:=m_n/n$. The order statistics satisfy $X_{1:n}=\cdots=X_{m_n:n}=-1$ and
$X_{m_n+1:n}=\cdots=X_{n:n}=0$, hence
\[
\hat\rho_n(X_1,\ldots,X_n)
=
-\sum_{i=1}^n a_{i,n}X_{i:n}
=
\sum_{i=1}^{m_n} a_{i,n}
=
\int_0^{t_n}\varphi_n(\alpha)\,d\alpha.
\]
By the strong law, $t_n\to t$ almost surely, and the uniform bound $\sup_n\|\varphi_n\|_\infty<\infty$
implies
\[
\left|\int_0^{t_n}\varphi_n-\int_0^t\varphi_n\right|
\le
\sup_n\|\varphi_n\|_\infty\,|t_n-t|
\longrightarrow 0
\qquad\text{almost surely.}
\]
Using \emph{(ii)} for this bounded $X$ gives $\int_0^{t_n}\varphi_n\to \int_0^t\varphi$ almost surely,
and the previous estimate therefore yields $\int_0^t\varphi_n\to \int_0^t\varphi$, \textit{i.e.}\ \emph{(i)}.
\end{proof}

\subsection{Uniform consistency over Lipschitz spectral classes}

We now establish uniform consistency over families of spectra satisfying Lipschitz and
boundedness conditions.

\begin{definition}[Lipschitz spectral class]\label{def:Lipschitz-class}
A family $\mathcal{V}$ of spectra is a \emph{Lipschitz spectral class} with constants
$(C, L)$ if:
\begin{enumerate}[(V1)]
\item Each $\varphi \in \mathcal{V}$ is non-increasing, bounded, and $\int_0^1 \varphi = 1$.
\item $\sup_{\varphi \in \mathcal{V}} \|\varphi\|_\infty \leq C$.
\item Each $\varphi \in \mathcal{V}$ is $L$-Lipschitz: $|\varphi(s) - \varphi(t)| \leq L|s - t|$.
\end{enumerate}
\end{definition}

\begin{lemma}[Uniform discretisation bound]\label{lem:uniformDisc}
Let $\mathcal{V}$ be a Lipschitz spectral class with constants $(C, L)$. For each
$\varphi \in \mathcal{V}$ and $n \in \N$, let $a_n(\varphi)$ be the canonical weights
\eqref{eq:canonical-weights} and $\varphi_n$ the associated step function. Then:
\begin{enumerate}[(i)]
\item $\sup_{\varphi \in \mathcal{V}} \sup_{t \in (0, 1]} |\varphi_n(t) - \varphi(t)| \leq L/n$.
\item For any $x \in \R^n$,
\[
\sup_{\varphi \in \mathcal{V}} \left| \sum_{i=1}^n a_{i,n}(\varphi) x_{i:n}
- \int_0^1 \varphi(\alpha) q_n(\alpha) d\alpha \right|
\leq \frac{L}{n} \int_0^1 |q_n(\alpha)| d\alpha.
\]
\end{enumerate}
\end{lemma}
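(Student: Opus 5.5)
Part (i) is a mean value estimate that holds interval by interval. Fix $\varphi\in\mathcal{V}$, $n\in\N$ and $t\in((i-1)/n,i/n]$. By Definition~\ref{def:step-function} and \eqref{eq:canonical-weights}, $\varphi_n(t)=n\int_{(i-1)/n}^{i/n}\varphi(s)\,ds$ is the average of $\varphi$ over an interval of length $1/n$ that contains $t$. Hence
\[
|\varphi_n(t)-\varphi(t)|
\le n\int_{(i-1)/n}^{i/n}|\varphi(s)-\varphi(t)|\,ds
\le n\int_{(i-1)/n}^{i/n}L|s-t|\,ds\le \frac{L}{n},
\]
where we used (V3) and the fact that $|s-t|\le 1/n$ on the interval. (A sharper constant, roughly $L/(2n)$ on average, is available, but it is not needed.) The constant $L$ is the same for every member of the class, so taking the supremum over $t\in(0,1]$ and then over $\varphi\in\mathcal{V}$ gives (i). The bound uses only (V3); neither monotonicity nor (V2) enters.

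For (ii) we reduce to (i) through the integral correspondence. Lemma~\ref{lem:integral-correspondence} applies because $a_n(\varphi)\in\Mndown$ by Proposition~\ref{prop:plug-in-is-CRE}. It gives
\[
\sum_{i=1}^n a_{i,n}(\varphi)\,x_{i:n}=\int_0^1\varphi_n(\alpha)\,q_n(\alpha)\,d\alpha .
\]
Subtracting $\int_0^1\varphi\,q_n$ and applying H\"older's inequality with the pair $(L^\infty,L^1)$, we get
\[
\Big|\int_0^1(\varphi_n-\varphi)\,q_n\,d\alpha\Big|\le \|\varphi_n-\varphi\|_\infty\int_0^1|q_n|\,d\alpha\le \frac{L}{n}\int_0^1|q_n(\alpha)|\,d\alpha .
\]
The factor $\int_0^1|q_n|=\frac1n\sum_i|x_i|$ does not depend on $\varphi$, so the supremum over $\mathcal{V}$ passes through.

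No step here is a real obstacle. The one point to check is that the endpoint conventions match. The step function is defined on half-open intervals $((i-1)/n,i/n]$, and $q_n(\alpha)=x_{\lceil n\alpha\rceil:n}$ uses the same partition. Any mismatch at the finitely many endpoints occurs on a Lebesgue-null set and does not affect the integrals. Also, the uniform bound in (i) holds on all of $(0,1]$, not merely almost everywhere, since the Lipschitz estimate is pointwise.
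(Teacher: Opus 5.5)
Your proposal is correct and follows essentially the same route as the paper: for (i) you bound the gap between the interval average $\varphi_n(t)$ and $\varphi(t)$ by $L$ times the mesh $1/n$, and for (ii) you combine Lemma~\ref{lem:integral-correspondence} with the $L^\infty$--$L^1$ H\"older bound. Your explicit step $|\varphi_n(t)-\varphi(t)|\le n\int_{(i-1)/n}^{i/n}|\varphi(s)-\varphi(t)|\,ds$ simply makes precise the averaging argument that the paper leaves implicit.
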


\begin{proof}
(i) For $t \in ((i-1)/n, i/n]$, we have
$\varphi_n(t) = n \int_{(i-1)/n}^{i/n} \varphi(s) ds$, the average of $\varphi$ on that
interval. For any $s \in ((i-1)/n, i/n]$ and Lipschitz $\varphi$,
\[
|\varphi(s) - \varphi(t)| \leq L |s - t| \leq L/n.
\]
Thus $|\varphi_n(t) - \varphi(t)| \leq L/n$, uniformly in $t$ and $\varphi \in \mathcal{V}$.

(ii) By Lemma \ref{lem:integral-correspondence},
$\sum_{i=1}^n a_{i,n}(\varphi) x_{i:n} = \int_0^1 \varphi_n(\alpha) q_n(\alpha) d\alpha$.
Thus
\begin{align*}
\left| \sum_{i=1}^n a_{i,n}(\varphi) x_{i:n} - \int_0^1 \varphi(\alpha) q_n(\alpha) d\alpha \right|
&= \left| \int_0^1 (\varphi_n - \varphi) q_n \, d\alpha \right| \\
&\leq \|\varphi_n - \varphi\|_\infty \int_0^1 |q_n| \\
&\leq \frac{L}{n} \int_0^1 |q_n|. \qedhere
\end{align*}
\end{proof}

\begin{theorem}[Uniform spectral plug-in consistency]\label{thm:uniform-spectral}
Let $\mathcal{V}$ be a Lipschitz spectral class with constants $(C, L)$. Let $X \in \Lone$
and $(X_i)$ be i.i.d.\ with law $X$. Then
\begin{equation}\label{eq:uniform-consistency}
\sup_{\varphi \in \mathcal{V}} \left| \hat\rho_{n,\varphi}(X_1, \ldots, X_n) - \rho_\varphi(X) \right|
\xrightarrow[n \to \infty]{\mathrm{a.s.}} 0.
\end{equation}
(The uniformity is over spectra $\varphi\in\mathcal{V}$ for a fixed underlying law of $X$.)
\end{theorem}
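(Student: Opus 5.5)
Split the error into a discretisation term and a sampling term, each bounded uniformly over $\mathcal{V}$. For $\varphi\in\mathcal{V}$, Lemma \ref{lem:integral-correspondence} gives $\hat\rho_{n,\varphi}(X_1,\ldots,X_n)=-\int_0^1\varphi_n q_n\,d\alpha$, where $q_n$ is the empirical quantile function. Since $\rho_\varphi(X)=-\int_0^1\varphi q_X\,d\alpha$, we get
\[
\bigl|\hat\rho_{n,\varphi}(X_1,\ldots,X_n)-\rho_\varphi(X)\bigr|
\le \Bigl|\int_0^1(\varphi_n-\varphi)q_n\,d\alpha\Bigr|
+\int_0^1|\varphi|\,|q_n-q_X|\,d\alpha .
\]
Lemma \ref{lem:uniformDisc}(ii) bounds the first term, uniformly in $\varphi\in\mathcal{V}$, by $\frac{L}{n}\int_0^1|q_n|\,d\alpha=\frac{L}{n}\cdot\frac1n\sum_{i=1}^n|X_i|$. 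By (V2) the second term is at most $C\int_0^1|q_n-q_X|\,d\alpha$, again uniformly in $\varphi$. The supremum over $\mathcal{V}$ is therefore dominated by
\[
\frac{L}{n}\cdot\frac1n\sum_{i=1}^n|X_i|+C\int_0^1|q_n-q_X|\,d\alpha,
\]
which no longer depends on $\varphi$. This removes the uniformity issue completely; in particular, measurability of the supremum is not needed, since it sits below a measurable bound.

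For the first summand, the strong law gives $\frac1n\sum|X_i|\to\E|X|<\infty$ a.s., so the product with $L/n$ tends to $0$ a.s. For the second summand we need $\int_0^1|q_n-q_X|\,d\alpha\to0$ a.s., the $L^1$ quantile (Wasserstein-1) convergence. This was already established inside the proof of Theorem \ref{thm:spectralConsistency}, and I will reuse that argument. Truncate at level $M$ and write
\[
\int_0^1|q_n-q_X|\le\int_0^1|q_n-q_n^{(M)}|+\int_0^1|q_n^{(M)}-q_X^{(M)}|+\int_0^1|q_X^{(M)}-q_X| .
\]
The first piece equals $\frac1n\sum|X_i|\ind_{\{|X_i|>M\}}$, which tends a.s. to $\E[|X|\ind_{\{|X|>M\}}]$. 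The middle piece tends to $0$ by Glivenko--Cantelli (a.e. quantile convergence) and bounded convergence. The last piece is $\E[|X|\ind_{\{|X|>M\}}]$. Let $n\to\infty$ along a probability-one event on which the strong law holds for the countably many integrands $|X|\ind_{\{|X|>M\}}$, $M\in\N$, and then let $M\to\infty$. Alternatively, one could phrase this step via the hyperfinite route of Theorem \ref{thm:hGC}, which gives $\int|q_N-q|\approx0$ for infinite $N$.

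The main obstacle is this $L^1$ quantile convergence under only $X\in L^1$; everything else is a direct consequence of the Lipschitz bound. The delicate point there is ensuring a single null set works for all truncation levels, which is handled by restricting to integer $M$. Note that only (V2) and (V3) are used, together with $\int|q_n|<\infty$. The rate $L/n$ in the discretisation term also sets up Corollary \ref{cor:rate}, where the stochastic term would be controlled by known rates for $\int|q_n-q_X|$.
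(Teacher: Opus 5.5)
Your proposal is correct and is essentially the paper's proof. The paper makes the same split via Lemma~\ref{lem:integral-correspondence} into a discretisation term, bounded uniformly by $\frac{L}{n}\cdot\frac1n\sum_{i=1}^n|X_i|$, and a sampling term, bounded by $C\int_0^1|q_n-q_X|\,d\alpha$; it then kills the latter with the same truncation, Glivenko--Cantelli and strong-law argument (your restriction to integer $M$ to get a single null set is a detail the paper leaves implicit).

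One harmless slip: $\int_0^1|q_n-q_n^{(M)}|\,d\alpha=\frac1n\sum_{i=1}^n(|X_i|-M)^+$ and $\int_0^1|q_X^{(M)}-q_X|\,d\alpha=\E[(|X|-M)^+]$, so these are bounded by the truncated first moments you wrote, not equal to them, which is all the argument needs.
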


\begin{proof}
Let $q_n(\alpha):=X_{\lceil n\alpha\rceil:n}$ be the empirical quantile function and let $q_X$ be the
population lower quantile. For each $\varphi\in\mathcal{V}$, write $\varphi_n$ for the step function
associated with the canonical weights (Definitions \ref{def:canonical-plug-in} and \ref{def:step-function}).
By Lemma \ref{lem:integral-correspondence},
\[
\hat\rho_{n,\varphi}(X_1,\ldots,X_n)=-\int_0^1 \varphi_n(\alpha)\,q_n(\alpha)\,d\alpha,
\qquad
\rho_\varphi(X)=-\int_0^1 \varphi(\alpha)\,q_X(\alpha)\,d\alpha.
\]
Hence,
\begin{align*}
\sup_{\varphi\in\mathcal{V}}
\left|\hat\rho_{n,\varphi}(X_1,\ldots,X_n)-\rho_\varphi(X)\right|
&\le
\sup_{\varphi\in\mathcal{V}}\left|\int_0^1(\varphi_n-\varphi)\,q_n\,d\alpha\right|
+
\sup_{\varphi\in\mathcal{V}}\left|\int_0^1 \varphi\,(q_n-q_X)\,d\alpha\right|.
\end{align*}

For the first term, Lemma \ref{lem:uniformDisc}(i) yields
\[
\sup_{\varphi\in\mathcal{V}}\left|\int_0^1(\varphi_n-\varphi)\,q_n\,d\alpha\right|
\le
\frac{L}{n}\int_0^1|q_n(\alpha)|\,d\alpha
=
\frac{L}{n}\cdot\frac{1}{n}\sum_{i=1}^n |X_i|.
\]
By the strong law, $\frac{1}{n}\sum_{i=1}^n|X_i|\to \E|X|$ almost surely, so this term tends to $0$
almost surely.

For the second term, boundedness of the class gives
\[
\sup_{\varphi\in\mathcal{V}}\left|\int_0^1 \varphi\,(q_n-q_X)\,d\alpha\right|
\le
\sup_{\varphi\in\mathcal{V}}\|\varphi\|_\infty\int_0^1|q_n-q_X|
\le
C\int_0^1|q_n-q_X|.
\]
It remains to show $\int_0^1|q_n-q_X|\,d\alpha\to 0$ almost surely when $X\in L^1$.
By Glivenko--Cantelli, $q_n(\alpha)\to q_X(\alpha)$ at continuity points of $q_X$, hence almost everywhere.
Fix $M>0$ and define truncations $q_n^{(M)}$ and $q_X^{(M)}$ by clamping to $[-M,M]$.
Then $q_n^{(M)}\to q_X^{(M)}$ almost everywhere and $|q_n^{(M)}-q_X^{(M)}|\le 2M$, so dominated
convergence gives $\int_0^1|q_n^{(M)}-q_X^{(M)}|\,d\alpha\to 0$ almost surely for fixed $M$.
Moreover,
\[
\int_0^1|q_n-q_X|
\le
\int_0^1|q_n-q_n^{(M)}|
+
\int_0^1|q_n^{(M)}-q_X^{(M)}|
+
\int_0^1|q_X^{(M)}-q_X|.
\]
The middle term vanishes as $n\to\infty$. Since $X\in L^1$, we have $q_X\in L^1(0,1)$, so
$\int_0^1|q_X^{(M)}-q_X|\to 0$ as $M\to\infty$. Finally,
\[
\int_0^1|q_n(\alpha)|\,\ind_{\{|q_n(\alpha)|>M\}}\,d\alpha
=
\frac{1}{n}\sum_{i=1}^n |X_i|\,\ind_{\{|X_i|>M\}}
\longrightarrow
\E[|X|\,\ind_{\{|X|>M\}}]
\]
almost surely by the strong law, and the limit tends to $0$ as $M\to\infty$.
Thus $\int_0^1|q_n-q_X|\to 0$ almost surely, and the second term tends to $0$ almost surely as well.

Combining the two bounds proves \eqref{eq:uniform-consistency}.
\end{proof}

\begin{remark}\label{rem:scope-Lipschitz}
Theorem \ref{thm:uniform-spectral} requires spectra in the \emph{Lipschitz} class $\mathcal{V}$.
This covers distortion risk measures and spectral measures with smooth weights, but does
\textbf{not} cover the Expected Shortfall spectrum $\varphi_\alpha(u)=\alpha^{-1}\ind_{(0,\alpha]}(u)$,
which is bounded but not Lipschitz (it has a jump discontinuity at $u=\alpha$).

For ES specifically, one can prove consistency by a direct argument exploiting the explicit
formula $\ES_\alpha=-\frac{1}{\alpha}\int_0^\alpha q(u)\,du$ and Glivenko--Cantelli, but the
\emph{uniform} rate over $\alpha\in[\delta,1]$ requires additional care; see \cite{Chen08}
for sharp uniform results on ES estimation.
\end{remark}

\begin{corollary}\label{cor:rate}
Under the conditions of Theorem \ref{thm:uniform-spectral}, assume additionally that
$\E[|X|^{2+\eta}]<\infty$ for some $\eta>0$. Then
\[
\sup_{\varphi\in\mathcal{V}}
\left| \hat\rho_{n,\varphi}(X_1,\ldots,X_n)-\rho_\varphi(X)\right|
=
O_{\Pbb}\!\left(\frac{1}{\sqrt{n}}\right).
\]
If, in addition, $X$ is almost surely bounded and the distribution function $F$ is continuous, then
\[
\sup_{\varphi\in\mathcal{V}}
\left| \hat\rho_{n,\varphi}(X_1,\ldots,X_n)-\rho_\varphi(X)\right|
=
O\!\left(\sqrt{\frac{\log\log n}{n}}\right)
\quad \text{almost surely.}
\]
\end{corollary}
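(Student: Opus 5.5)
The proof reuses the decomposition from the proof of Theorem \ref{thm:uniform-spectral}:
\[
\sup_{\varphi\in\mathcal V}\bigl|\hat\rho_{n,\varphi}-\rho_\varphi(X)\bigr|
\le \frac{L}{n}\cdot\frac1n\sum_{i=1}^n|X_i| \;+\; \sup_{\varphi\in\mathcal V}\Bigl|\int_0^1\varphi\,(q_n-q_X)\,d\alpha\Bigr|.
\]
The first (discretisation) term is $O(n^{-2})$ almost surely by the strong law, since $\frac1n\sum_i|X_i|\to\E|X|$. It is therefore negligible for both claims, and the whole task reduces to controlling the second term uniformly over $\mathcal V$.

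The crude bound $C\int_0^1|q_n-q_X|$ is not enough, because the $L^1$ distance between quantile functions is generally not of order $n^{-1/2}$ without density assumptions. Instead we move to the distribution-function side. By (V1)--(V3) each $\varphi$ is Lipschitz, so integrating by parts in $\alpha$ with $\Phi(t)=\int_0^t\varphi$ replaces $\int\varphi\,(q_n-q_X)\,d\alpha$ by an integral over $x$ of $[\Phi(F_n(x))-\Phi(F(x))]$, up to sign and boundary terms. This uses the standard identity
\[
\int_0^1\varphi(\alpha)q(\alpha)\,d\alpha=\int_0^\infty\bigl(1-\Phi(F(x))\bigr)\,dx-\int_{-\infty}^0\Phi(F(x))\,dx,
\]
applied to both $F_n$ and $F$. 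Since $0\le\varphi\le C$, $\Phi$ is $C$-Lipschitz, which gives
\[
\sup_{\varphi\in\mathcal V}\Bigl|\int_0^1\varphi\,(q_n-q_X)\,d\alpha\Bigr|\le C\int_{\R}|F_n(x)-F(x)|\,dx .
\]
This bound is uniform in $\varphi$ and makes no use of the Lipschitz constant $L$.

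For the $O_{\Pbb}(n^{-1/2})$ claim, take expectations:
\[
\E\int|F_n-F|\,dx\le\int\sqrt{F(x)(1-F(x))/n}\,dx.
\]
The integral $\int\sqrt{F(1-F)}\,dx$ is finite whenever $\E|X|^{2+\eta}<\infty$, because by Chebyshev/Markov $F(1-F)(x)\le \E|X|^{2+\eta}|x|^{-(2+\eta)}$, whose square root is integrable at infinity. Markov's inequality then gives the claim.

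For the almost-sure LIL rate with $X$ bounded, say $|X|\le K$, we have
\[
\int|F_n-F|\,dx\le 2K\,\|F_n-F\|_\infty.
\]
The Chung--Smirnov law of the iterated logarithm, which is where continuity of $F$ enters (via reduction to the uniform empirical process), gives $\|F_n-F\|_\infty=O(\sqrt{\log\log n/n})$ almost surely. In this bounded case one could alternatively cite the LIL for the $L^1$-Wasserstein distance directly.

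The main obstacle is making the integration-by-parts identity rigorous for both the empirical and the population law simultaneously, with the boundary terms vanishing. Integrability of $X$ handles the tails, and applying the same formula to each of $F_n$ and $F$ rather than integrating by parts in the difference avoids issues at atoms.
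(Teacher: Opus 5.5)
Your argument is correct and, apart from one step, follows the paper's proof. It uses the same split into a discretisation term and a stochastic term, and the same reduction of the stochastic term to $C\int_{\R}|F_n-F|\,dx$. It then uses the same bound $\E|F_n(x)-F(x)|\le\sqrt{F(x)(1-F(x))/n}$ with Fubini, the moment condition and Markov, and the same bounded-support LIL step.

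The step that differs is the passage from quantiles to distribution functions. The paper uses $\sup_{\varphi}|\int\varphi(q_n-q_X)|\le C\int_0^1|q_n-q_X|\,d\alpha$ and then the one-dimensional identity $\int_0^1|q_n-q_X|\,d\alpha=W_1(\hat\mu_n,\mu)=\int_{\R}|F_n-F|\,dx$. You instead use the distortion (layer-cake) representation $\int_0^1\varphi q\,d\alpha=\int_0^\infty(1-\Phi\circ F)\,dx-\int_{-\infty}^0\Phi\circ F\,dx$ together with the $C$-Lipschitz property of $\Phi$. Your route is valid: it needs only $0\le\varphi\le C$ and $\{q(\alpha)\le x\}=\{\alpha\le F(x)\}$.

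Your stated reason for taking that route is wrong, though. The ``crude'' bound is not weaker, because the two quantities are equal for any two laws with finite first moments. To see this, apply Tonelli to
\[
|q_n(\alpha)-q_X(\alpha)|=\int_{\R}\bigl|\ind_{\{\alpha\le F_n(x)\}}-\ind_{\{\alpha\le F(x)\}}\bigr|\,dx .
\]
So the $L^1$ quantile distance is already $O_{\Pbb}(n^{-1/2})$ under your moment assumption, with no density condition. Density assumptions are needed for sup-norm or weighted quantile processes, not for this $L^1$ quantity. Your detour therefore just re-derives the identity the paper cites.

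There is also a small slip in the discretisation term. The quantity $\frac{L}{n}\cdot\frac1n\sum_i|X_i|$ behaves like $L\,\E|X|/n$, so it is $O(n^{-1})$ almost surely, not $O(n^{-2})$. This is still negligible for both rates, so nothing breaks.
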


\begin{proof}
From the proof of Theorem \ref{thm:uniform-spectral} we have
$\sup_{\varphi\in\mathcal{V}}|\hat\rho_{n,\varphi}-\rho_\varphi(X)|\le A_n+B_n$ with
\[
A_n\le \frac{L}{n}\int_0^1 |q_n(\alpha)|\,d\alpha
=
\frac{L}{n}\cdot\frac{1}{n}\sum_{i=1}^n |X_i|
\quad\text{and}\quad
B_n\le C\int_0^1 |q_n(\alpha)-q_X(\alpha)|\,d\alpha
=
C\,W_1(\hat\mu_n,\mu).
\]
(The quantile formula $W_1(\hat\mu_n,\mu)=\int_0^1|q_n-q_X|$ holds whenever both measures have
finite first moments---guaranteed here by $X\in L^1$ from Theorem \ref{thm:uniform-spectral};
see, \textit{e.g.}, \cite[Theorem 2.18]{Villani}.)
Thus $A_n=O(n^{-1})$ almost surely by the strong law.

For the stochastic term, the one-dimensional identity
$W_1(\hat\mu_n,\mu)=\int_{\R}|F_n(x)-F(x)|\,dx$ (again \cite{Villani}) converts
the quantile integral to a CDF integral.
For each fixed $x$, conditional on $F(x)$ we have $F_n(x)=\frac{1}{n}\mathrm{Bin}(n,F(x))$, hence
$\E|F_n(x)-F(x)|\le \sqrt{\Var(F_n(x))}=\sqrt{F(x)(1-F(x))/n}$.
By the Fubini theorem and Jensen's inequality,
\[
\E\,W_1(\hat\mu_n,\mu)
\le
\frac{1}{\sqrt{n}}\int_{\R}\sqrt{F(x)(1-F(x))}\,dx.
\]
The integral is finite under $\E|X|^{2+\eta}<\infty$. Indeed, for $x\ge 1$,
\[
1-F(x)=\Pbb(X>x)\le \frac{\E|X|^{2+\eta}}{x^{2+\eta}}
\]
by Markov's inequality, and hence
\[
\sqrt{F(x)(1-F(x))}\le \sqrt{1-F(x)}\le \sqrt{\E|X|^{2+\eta}}\,x^{-(1+\eta/2)}.
\]
Since $\eta>0$, the function $x\mapsto x^{-(1+\eta/2)}$ is integrable on $[1,\infty)$.
A symmetric argument for $x\le -1$ (using $\Pbb(X\le x)=\Pbb(-X\ge -x)$) yields integrability
on $(-\infty,-1]$, and integrability on $[-1,1]$ is trivial because $\sqrt{F(1-F)}\le 1/2$.
Therefore $\int_{\R}\sqrt{F(x)(1-F(x))}\,dx<\infty$.

Hence $\E W_1(\hat\mu_n,\mu)=O(n^{-1/2})$, and Markov's inequality yields
$W_1(\hat\mu_n,\mu)=O_{\Pbb}(n^{-1/2})$. (This rate is standard in the 1D empirical Wasserstein
literature; see, \textit{e.g.}, \cite{Bobkov-Ledoux} for optimal moment conditions.)
Hence $B_n=O_{\Pbb}(n^{-1/2})$, and the first claim follows.

If $X$ is bounded and $F$ is continuous, then the probability integral transform and the law of the
iterated logarithm for the uniform empirical process give
$\sup_x |F_n(x)-F(x)|=O(\sqrt{\log\log n/n})$ almost surely; see, \textit{e.g.}, \cite{vdVW}.
Since $X$ is bounded, $W_1(\hat\mu_n,\mu)=\int_{\R}|F_n-F|\,dx$ is bounded by
$\mathrm{diam}(\mathrm{supp}(X))\sup_x|F_n(x)-F(x)|$, giving the stated almost sure rate.
\end{proof}

\begin{remark}
From the hyperfinite viewpoint, Theorem \ref{thm:uniform-spectral} reflects the fact that
one can work \emph{internally} with the entire class $\mathcal{V}$ at once. The Lipschitz
condition provides a deterministic envelope bounding the discretisation error uniformly,
whilst the Loeb-measure quantile convergence handles the stochastic approximation
uniformly over all bounded $\varphi$.
\end{remark}

\section{Kusuoka-type plug-in consistency}\label{sec:kusuokaPlugin}

We now extend the consistency theory to general law-invariant coherent risk measures via
the Kusuoka representation.

\subsection{Setup and assumptions}

Let $\rho: \Linf \to \R$ be a law-invariant CRM on an atomless space with Kusuoka
representation \eqref{eq:Kusuoka}:
\[
\rho(X) = \sup_{\nu \in \mathcal{M}} \int_{(0,1]} \ES_\alpha(X) \, \nu(d\alpha).
\]

We construct finite-sample estimators by discretising the Kusuoka integral. Let
$(\alpha_{i,n})_{i=0}^n$ be a grid with $0 = \alpha_{0,n} < \alpha_{1,n} < \cdots < \alpha_{n,n} = 1$.
For simplicity, take the uniform grid $\alpha_{i,n} = i/n$.

Suppose we have, for each $\alpha \in (0, 1]$, an estimator
$\widehat{\ES}_{\alpha,n}: \R^n \to \R$ for $\ES_\alpha$. The \emph{Kusuoka-type plug-in
estimator} is
\begin{equation}\label{eq:Kusuoka-plugin}
\hat\rho_n(x) := \sup_{\nu \in \mathcal{M}} \sum_{i=1}^n
\widehat{\ES}_{\alpha_{i,n},n}(x) \, \nu((\alpha_{i-1,n}, \alpha_{i,n}]).
\end{equation}

\subsection{Consistency theorem}

\begin{theorem}[Kusuoka plug-in consistency]\label{thm:Kusuoka-consistency}
Let $\rho$ be a law-invariant CRM with Kusuoka representation \eqref{eq:Kusuoka}. Assume:
\begin{enumerate}[(K1)]
\item (Tightness) For every $\varepsilon > 0$, there exists $\delta \in (0, 1)$
such that $\sup_{\nu \in \mathcal{M}} \nu((0, \delta]) \leq \varepsilon$.
\item (Uniform ES estimation) For every $\delta \in (0, 1)$,
\[
\sup_{\alpha \in [\delta, 1]} \left| \widehat{\ES}_{\alpha,n}(X_1, \ldots, X_n) - \ES_\alpha(X) \right|
\xrightarrow[n \to \infty]{\mathrm{a.s.}} 0
\]
for any bounded $X$ and i.i.d.\ sample $(X_i)$ with law $X$.
\item (Uniform envelope) There exists $C < \infty$ such that
$|\widehat{\ES}_{\alpha,n}(x)| \leq C \|x\|_\infty$ for all $\alpha, n, x$.
\item (Grid refinement) The mesh $\max_i (\alpha_{i,n} - \alpha_{i-1,n}) \to 0$
as $n \to \infty$.
\end{enumerate}

Then for any bounded $X$ and i.i.d.\ sample $(X_i)$,
\begin{equation}\label{eq:Kusuoka-plugin-consistency}
\hat\rho_n(X_1, \ldots, X_n) \xrightarrow[n \to \infty]{\mathrm{a.s.}} \rho(X).
\end{equation}
\end{theorem}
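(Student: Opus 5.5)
The plan is to compare, uniformly in $\nu\in\mathcal{M}$, the discretised sum in \eqref{eq:Kusuoka-plugin} with the Kusuoka integral $\int_{(0,1]}\ES_\alpha(X)\,\nu(d\alpha)$. I will then use the elementary inequality $|\sup_\nu A_\nu-\sup_\nu B_\nu|\le\sup_\nu|A_\nu-B_\nu|$. Fix a bounded $X$ with $\|X\|_\infty\le K$ and a sample path in the almost sure event given by (K2) for the countably many values $\delta=1/m$. On this event, and simultaneously for all $\nu$, I will show that $\limsup_n\sup_\nu|A_\nu^{(n)}-B_\nu|\le c\varepsilon$ for every $\varepsilon>0$, where
\[
A_\nu^{(n)}:=\sum_{i=1}^n \widehat{\ES}_{\alpha_{i,n},n}(X_1,\dots,X_n)\,\nu((\alpha_{i-1,n},\alpha_{i,n}]),\qquad B_\nu:=\int_{(0,1]}\ES_\alpha(X)\,\nu(d\alpha).
\]

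Given $\varepsilon$, I choose $\delta=1/m$ by (K1), so that $\nu((0,2\delta])\le\varepsilon$ for all $\nu$ (apply (K1) and shrink $\delta$). Writing $B_\nu=\sum_i\int_{(\alpha_{i-1,n},\alpha_{i,n}]}\ES_\alpha(X)\,\nu(d\alpha)$, I split the index set into the cells with $\alpha_{i,n}\le 2\delta$ and the remaining cells. For large $n$, (K4) guarantees that every remaining cell lies inside $(\delta,1]$.

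On the low cells, both contributions are small. Since $|\ES_\alpha(X)|\le K$, the contribution to $B_\nu$ is at most $K\varepsilon$. By (K3), $|\widehat{\ES}|\le C\max_i|X_i|\le CK$ almost surely, so the contribution to $A_\nu^{(n)}$ is at most $CK\varepsilon$. On the remaining cells, the summand difference splits into two pieces. The first is $|\widehat{\ES}_{\alpha_{i,n},n}-\ES_{\alpha_{i,n}}(X)|$, which is bounded by $\sup_{\alpha\in[\delta,1]}|\widehat{\ES}_{\alpha,n}-\ES_\alpha(X)|\to0$ by (K2). The second is the oscillation $\sup_{\alpha\in(\alpha_{i-1,n},\alpha_{i,n}]}|\ES_{\alpha_{i,n}}(X)-\ES_\alpha(X)|$. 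Both are then summed against $\nu$-masses totalling at most $1$.

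The main obstacle is the oscillation term, and I need a deterministic modulus for it. I will use that $\alpha\mapsto\ES_\alpha(X)=-\alpha^{-1}\int_0^\alpha q_X$ is Lipschitz on $[\delta,1]$ for bounded $X$. For $\delta\le\beta<\alpha$,
\[
|\ES_\alpha-\ES_\beta|\le \Bigl|\tfrac1\alpha-\tfrac1\beta\Bigr|K\beta+\tfrac1\alpha K(\alpha-\beta)\le \tfrac{2K}{\delta}(\alpha-\beta).
\]
So the oscillation is at most $2K\delta^{-1}\cdot\mathrm{mesh}_n$, which tends to $0$ by (K4). Combining the pieces gives $\limsup_n|\hat\rho_n-\rho(X)|\le (1+C)K\varepsilon$ almost surely. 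Since the exceptional null set is a countable union over $m$, letting $\varepsilon\downarrow0$ yields \eqref{eq:Kusuoka-plugin-consistency}.

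One point to check is whether $\nu$ charges the cell straddling $2\delta$. Choosing the split by the cells' right endpoints handles this cleanly. A second check is that (K2) is applied at the grid points $\alpha_{i,n}\ge\delta$, which holds by construction.
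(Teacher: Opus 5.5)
Your proof is correct and follows the paper's argument: the sup-difference inequality uniformly in $\nu$, tightness (K1) together with the envelope (K3) to control the cells near $0$, and (K2) for the estimation error at the grid points away from $0$. The only difference is the discretisation error. You bound it by the explicit Lipschitz estimate $|\ES_\alpha(X)-\ES_\beta(X)|\le 2K\delta^{-1}|\alpha-\beta|$ on $[\delta,1]$, which is why you need the $2\delta$ margin. The paper instead notes that $\ES_\alpha(X)\to -q_X(0+)$ as $\alpha\downarrow 0$ for bounded $X$, so $\alpha\mapsto\ES_\alpha(X)$ is uniformly continuous on $[0,1]$. Its discretisation error is then at most the modulus of continuity at the mesh, uniformly in $\nu$ and without invoking (K1).
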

Assumption (K1) excludes risk measures that put mass arbitrarily close to $\alpha=0$, which is
necessary because $\ES_\alpha(X)\to-\infty$ as $\alpha\downarrow 0$ for unbounded-below $X$,
and discretisation near $0$ is unstable. A typical choice for $\widehat{\ES}_{\alpha,n}$ is
the discrete ES $\mathrm{dES}_{\lceil n\alpha\rceil/n}$ from \eqref{eq:dES-def}. In this case,
(K3) is immediate from the definition, while (K2) is a uniform expected-shortfall consistency statement on
$[\delta,1]$ and requires a separate argument or a dedicated reference; see, for example, \cite{Chen08}.
\begin{proof}
Fix a bounded random variable $X$ and an i.i.d.\ sample $(X_i)_{i\ge1}$ with law $X$.
For $I_{i,n}:=(\alpha_{i-1,n},\alpha_{i,n}]$ and $\nu\in\mathcal M$, write
\[
I(\nu):=\int_{(0,1]} \ES_\alpha(X)\,\nu(d\alpha),
\qquad
\hat I_n(\nu):=\sum_{i=1}^n
\widehat{\ES}_{\alpha_{i,n},n}(X_1,\ldots,X_n)\,\nu(I_{i,n}).
\]
Then
\[
\rho(X)=\sup_{\nu\in\mathcal M} I(\nu),
\qquad
\hat\rho_n(X_1,\ldots,X_n)=\sup_{\nu\in\mathcal M}\hat I_n(\nu).
\]
Since $|\sup_\nu a(\nu)-\sup_\nu b(\nu)|\le \sup_\nu|a(\nu)-b(\nu)|$, it suffices to show
\[
\sup_{\nu\in\mathcal M}|\hat I_n(\nu)-I(\nu)|\xrightarrow{\mathrm{a.s.}}0.
\]

Decompose
\[
\hat I_n(\nu)-I(\nu)=A_n(\nu)+B_n(\nu),
\]
where
\[
A_n(\nu):=
\sum_{i=1}^n
\Big(\widehat{\ES}_{\alpha_{i,n},n}(X_1,\ldots,X_n)-\ES_{\alpha_{i,n}}(X)\Big)\,\nu(I_{i,n}),
\]
and
\[
B_n(\nu):=
\sum_{i=1}^n \ES_{\alpha_{i,n}}(X)\,\nu(I_{i,n})
-
\int_{(0,1]} \ES_\alpha(X)\,\nu(d\alpha).
\]

Fix $\varepsilon>0$. By (K1), choose $\delta\in(0,1)$ such that
\[
\sup_{\nu\in\mathcal M}\nu((0,\delta])\le \varepsilon.
\]

For the estimator-error term $A_n(\nu)$, using $|\ES_\alpha(X)|\le \|X\|_\infty$ and (K3),
we get
\begin{align*}
\sup_{\nu\in\mathcal M}|A_n(\nu)|
&\le
\sup_{\alpha\in[\delta,1]}
\left|\widehat{\ES}_{\alpha,n}(X_1,\ldots,X_n)-\ES_\alpha(X)\right| \\
&\qquad +
(C+1)\|X\|_\infty
\sup_{\nu\in\mathcal M}\nu((0,\delta]).
\end{align*}
Hence, by (K2),
\[
\limsup_{n\to\infty}\sup_{\nu\in\mathcal M}|A_n(\nu)|
\le
(C+1)\|X\|_\infty\,\varepsilon
\qquad\text{a.s.}
\]

It remains to control $B_n(\nu)$. Define
\[
g(\alpha):=\ES_\alpha(X),\qquad \alpha\in(0,1].
\]
Because $X$ is bounded, the quantile function $q_X$ is bounded and monotone on $(0,1)$.
Hence $q_X(0+):=\lim_{\alpha\downarrow0}q_X(\alpha)$ exists and
\[
\lim_{\alpha\downarrow0}\ES_\alpha(X)
=
-\lim_{\alpha\downarrow0}\frac1\alpha\int_0^\alpha q_X(u)\,du
=
-q_X(0+).
\]
Thus $g$ extends continuously to $[0,1]$, and therefore is uniformly continuous there.
Let $\omega_g$ be its modulus of continuity and let
\[
\Delta_n:=\max_{1\le i\le n}(\alpha_{i,n}-\alpha_{i-1,n}).
\]
Then, for every $\nu\in\mathcal M$,
\begin{align*}
|B_n(\nu)|
&=
\left|
\sum_{i=1}^n \int_{I_{i,n}} \bigl(g(\alpha_{i,n})-g(\alpha)\bigr)\,\nu(d\alpha)
\right| \\
&\le
\sum_{i=1}^n \int_{I_{i,n}} \omega_g(\Delta_n)\,\nu(d\alpha)
\le
\omega_g(\Delta_n).
\end{align*}
By (K4), $\Delta_n\to0$, so $\omega_g(\Delta_n)\to0$. Hence
\[
\sup_{\nu\in\mathcal M}|B_n(\nu)|\xrightarrow[n\to\infty]{}0.
\]

Combining the two bounds, we obtain
\[
\limsup_{n\to\infty}\sup_{\nu\in\mathcal M}|\hat I_n(\nu)-I(\nu)|
\le
(C+1)\|X\|_\infty\,\varepsilon
\qquad\text{a.s.}
\]
Since $\varepsilon>0$ is arbitrary, this proves
\[
\sup_{\nu\in\mathcal M}|\hat I_n(\nu)-I(\nu)|\xrightarrow{\mathrm{a.s.}}0,
\]
and therefore $\hat\rho_n(X_1,\ldots,X_n)\to\rho(X)$ almost surely.
\end{proof}

\section{Hyperfinite bootstrap validity}\label{sec:bootstrap}

We establish bootstrap validity for canonical spectral plug-in estimators. We first state and
prove the result in standard terms, then discuss its NSA reformulation.

\subsection{Internal bootstrap construction}

Let $(X_1, \dots, X_N)$ be a hyperfinite sample with infinite $N \in {}^*\N$.
The empirical measure is $\hat{\Pbb}_N = \frac{1}{N} \sum_{k=1}^N \delta_{X_k}$.

Introduce an internal i.i.d.\ sequence $(U_1,\dots,U_N)$, independent of $(X_1,\dots,X_N)$,
with each $U_i$ uniformly distributed on $I_N=\{1,\dots,N\}$ (counting law $\mu_N$). Define the
\emph{hyperfinite bootstrap sample} by
\[
X_i^*:=X_{U_i},\qquad i=1,\dots,N.
\]
Then $(X_1^*,\dots,X_N^*)$ is internal and i.i.d.\ with law $\hat{\Pbb}_N$, and
conditional-on-sample statements are interpreted as statements under the bootstrap-index
randomness $(U_i)$ with $(X_i)$ held fixed. This construction makes the internal conditional
law explicit: it is simply the pushforward of the internal product counting law of $(U_i)$.

By the star-suppression convention (Notation \ref{not:Loeb-model}), the mapping
$(n,x)\mapsto \hat\rho_{n,\varphi}(x)$ extends internally to hypernatural $N$, giving
the internal statistic $\hat\rho_{N,\varphi}$.
Let $\hat\rho_N := \hat\rho_{N,\varphi}(X_1, \dots, X_N)$ be the estimator on the original sample,
and $\hat\rho_N^* := \hat\rho_{N,\varphi}(X_1^*, \dots, X_N^*)$ be the estimator on the bootstrap sample.

\subsection{Bootstrap regularity}

\begin{assumption}[Bootstrap regularity]\label{ass:bootstrap}
The random variable $X$ satisfies:
\begin{enumerate}[(B1)]
\item $F_X$ is continuous;
\item $X$ has a density $f_X$ that is continuous and strictly positive on the interior
quantile range $\{q_X(\alpha):\alpha\in[\delta,1-\delta]\}$ for each fixed $\delta\in(0,1/2)$.
\end{enumerate}
\end{assumption}
In particular, (B2) implies that $f_X$ is bounded and bounded away from $0$ on every compact subinterval
of the interior of the support of $X$.
Under (B1)--(B2), the quantile map $F\mapsto q_F$ is Hadamard differentiable at $F_X$,
and the spectral functional $T(F)=-\int_0^1 q_F(\alpha)\varphi(\alpha)\,d\alpha$ inherits
Hadamard differentiability; see \cite[Lemma~21.3, Thm.~21.5]{vanderVaart}.
\begin{lemma}[Hadamard derivative of the spectral functional]\label{lem:Hadamard}
Under Assumption \ref{ass:bootstrap}, the spectral functional $T(F)=-\int_0^1 q_F(\alpha)\varphi(\alpha)\,d\alpha$
is Hadamard differentiable at $F_X$ tangentially to $C(\R)$, with derivative
\[
T'_{F_X}(h)=\int_0^1 \frac{h(q_X(\alpha))}{f_X(q_X(\alpha))}\,\varphi(\alpha)\,d\alpha.
\]
\end{lemma}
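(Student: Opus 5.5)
The plan is the chain rule for Hadamard derivatives. Write $T=\Lambda\circ\mathcal{Q}$, where $\mathcal{Q}:F\mapsto q_F$ is the quantile map and $\Lambda(q):=-\int_0^1 q(\alpha)\varphi(\alpha)\,d\alpha$. Since $\varphi$ is bounded, $\Lambda$ is continuous and linear on $L^1(0,1)$, so it is its own Hadamard derivative. The work therefore lies in differentiating $\mathcal{Q}$ at $F_X$.

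\textbf{Interior part.} Take $t_m\downarrow0$ and $h_m\to h$ uniformly with $h\in C(\R)$, and set $F_m:=F_X+t_mh_m$, assumed to be a distribution function as usual. Fix $\delta\in(0,1/2)$. On $[\delta,1-\delta]$, (B2) gives a density that is continuous and bounded away from $0$ near $q_X([\delta,1-\delta])$. We then apply the quantile lemma \cite[Lemma~21.3]{vanderVaart}:
\[
\frac{q_{F_m}(\alpha)-q_X(\alpha)}{t_m}\to-\frac{h(q_X(\alpha))}{f_X(q_X(\alpha))}
\]
uniformly in $\alpha\in[\delta,1-\delta]$. If I were to prove this directly, I would bracket $q_{F_m}(\alpha)$ using $F_m(q_{F_m}(\alpha)-)\le\alpha\le F_m(q_{F_m}(\alpha))$ and expand $F_X$ by the mean value theorem around $q_X(\alpha)$, with $f_X$ bounded below. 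Uniform continuity of $h$ and $f_X$ on the relevant compact set then gives uniformity. Integrating against $-\varphi$ over $[\delta,1-\delta]$ yields the stated formula restricted to that interval, with the sign flip coming from the minus in $\Lambda$.

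\textbf{Tails.} I expect this step to be the main obstacle. We must show that
\[
\limsup_m\int_{(0,\delta)\cup(1-\delta,1)}\Big|\frac{q_{F_m}-q_X}{t_m}\Big|\varphi\,d\alpha
\]
and the tail part of the limit integral both vanish as $\delta\downarrow0$. I would first use the change of variables
\[
\int_0^1|q_{F_m}-q_X|\,d\alpha=\int_\R|F_m-F_X|\,dx
\]
(the one-dimensional $W_1$ identity used in Corollary~\ref{cor:rate}), localised to tails. This bounds the tail contribution by $\|\varphi\|_\infty t_m^{-1}\int_{\text{tail}}|F_m-F_X|\,dx$. That quantity is controlled only if $h$ is integrable in the tails, for instance for $h$ in the tangent set of the empirical process under a moment condition, or when $X$ is bounded.

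If this fails in full generality, the fallback is to note the standing convention of \cite[Thm.~21.5]{vanderVaart}. There one either assumes $\varphi$ vanishes near $0$ and $1$ (trimmed L-functionals) or takes the tangent set to consist of $h$ with $\int|h|\,dx<\infty$. Under either reading, the tails vanish by dominated convergence, since $|h(q_X)|/f_X(q_X)\,d\alpha=|h(x)|\,dx$ after substitution. The cleanest presentation, which I would adopt, is to state the derivative via $\alpha=F_X(x)$ as $-\int_\R h(x)\varphi(F_X(x))\,dx$ up to sign convention. Continuity and linearity of the limit map in $h$ are then immediate from this representation.
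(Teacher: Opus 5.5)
Your decomposition $T=\Lambda\circ\mathcal{Q}$ is the same chain-rule argument the paper uses. The paper's proof, however, stops at the \emph{pointwise} quantile derivative $-h(q_X(\alpha))/f_X(q_X(\alpha))$ and then integrates it against $\varphi$. It never shows that $(q_{F_m}-q_X)/t_m$ converges in $L^1(0,1)$, which is what the chain rule with $\Lambda$ needs. So the tail step you single out is a genuine gap in the paper, not in your plan, and it cannot be closed as stated.

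Take the natural domain (distribution functions with finite mean) with the sup-norm, and let $\varphi\equiv1$, a Lipschitz spectrum, so $T(F)=-\int x\,dF(x)$. Let $X\sim N(0,1)$, $\varepsilon_m\downarrow0$, $x_m:=\varepsilon_m^{-2}$ and $F_m:=(1-t_m\varepsilon_m)F_X+t_m\varepsilon_m\ind_{[x_m,\infty)}$. Then $h_m:=(F_m-F_X)/t_m$ satisfies $\|h_m\|_\infty\le\varepsilon_m\to0=:h$, yet $(T(F_m)-T(F_X))/t_m=-\varepsilon_m x_m\to-\infty$. The same sequence with $X$ uniform on $[0,1]$ shows that boundedness of $X$ helps only if the domain is restricted to distribution functions supported in a fixed compact interval.

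Your fallback is the right kind of repair, but it needs sharpening in four places.
\begin{itemize}
\item Restricting the tangent set to integrable $h$ achieves nothing, since above $h=0$. The $L^1$ control must apply to the approximating directions $h_m$. That is, the norm must contain $\int|\cdot|\,dx$, e.g.\ $\|\cdot\|_\infty+\|\cdot\|_{L^1(\R)}$.
\item Dominated convergence only disposes of the tail of the limit integral. For the difference quotients, use $\int_0^\delta|q_{F_m}-q_X|\,d\alpha\le\int_{\{\min(F_m,F_X)<\delta\}}|F_m-F_X|\,dx$, and its mirror image at $1$. For large $m$ this bounds the lower-tail contribution by $\|\varphi\|_\infty\bigl(\int_{\{F_X<2\delta\}}|h|\,dx+\|h_m-h\|_{L^1(\R)}\bigr)$. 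This is exactly where $L^1$-convergence of $h_m$ is used. The bound tends to $0$ as $\delta\downarrow0$ when $h$ vanishes below the support of $X$.
\item Trimming is unavailable, because a spectrum is non-increasing with $\int_0^1\varphi=1$ and so $\varphi(0+)\ge1$.
\item Once the $L^1$ norm is in place, the cleanest proof bypasses $\mathcal{Q}$ entirely. Put $\Phi_\varphi(t):=\int_0^t\varphi$. By Fubini, $T(F)=\int_{-\infty}^0\Phi_\varphi(F(x))\,dx-\int_0^\infty\bigl(1-\Phi_\varphi(F(x))\bigr)\,dx$. Since $\Phi_\varphi$ is $\|\varphi\|_\infty$-Lipschitz, generalised dominated convergence gives $T'_{F_X}(h)=+\int_\R h(x)\varphi(F_X(x))\,dx$, provided $\varphi$ is continuous at $F_X(x)$ for a.e.\ $x$. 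The sign is plus for the paper's $T$, which settles your ``up to sign'' hedge.
\end{itemize}

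The full-line formula agrees with the paper's $\alpha$-integral only for $h$ vanishing off the support of $X$. For example, take $X$ uniform on $[0,1]$, $\varphi\equiv1$, and $h(x)=1+x$ on $[-1,0]$, $h(x)=1-x$ on $[0,1]$, $h=0$ elsewhere. Then $F_X+th$ is a distribution function, the true derivative is $1$, and the paper's formula gives $1/2$. This restriction is harmless for the Brownian-bridge directions $B\circ F_X$ that matter in Theorem~\ref{thm:bootstrap-consistency}. That theorem, however, then needs $\sqrt n(\hat F_n-F_X)$ to converge in $L^1(\R)$, which is where a condition such as $\int\sqrt{F_X(1-F_X)}\,dx<\infty$ enters.
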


\begin{proof}
The quantile functional $Q:F\mapsto q_F(\alpha)$ has Hadamard derivative
$Q'_{F_X}(h)(\alpha)=-h(q_X(\alpha))/f_X(q_X(\alpha))$ at continuity points $\alpha$ of $q_X$;
see \cite[Lemma~21.1]{vanderVaart}. Since
\[
    T(F)=-\int_0^1 Q(F)(\alpha)\varphi(\alpha)\,d\alpha,
\]
the chain rule gives
\[
T'_{F_X}(h)=-\int_0^1 Q'_{F_X}(h)(\alpha)\,\varphi(\alpha)\,d\alpha
=\int_0^1 \frac{h(q_X(\alpha))}{f_X(q_X(\alpha))}\,\varphi(\alpha)\,d\alpha.\qedhere
\]
\end{proof}

\subsection{Bootstrap validity}

\begin{theorem}[Bootstrap validity for the canonical spectral estimator]\label{thm:bootstrap-consistency}
Let $X\in L^2$ satisfy Assumption \ref{ass:bootstrap}, and let $\varphi$ be Lipschitz.
Let $(X_1^*,\dots,X_n^*)$ be an Efron bootstrap sample drawn from $(X_1,\dots,X_n)$.
Assume moreover that the hypotheses of Theorem \ref{thm:CLT} hold, so that the asymptotic
variance $\sigma_\varphi^2$ in \eqref{eq:asymptotic-variance} is finite.
Then
\[
d_K\!\left(
\mathcal L\!\left(\sqrt n\bigl(\hat\rho_{n,\varphi}(X_1^*,\dots,X_n^*)-\hat\rho_{n,\varphi}(X_1,\dots,X_n)\bigr)\,\Big|\,X_1,\dots,X_n\right),
N(0,\sigma_\varphi^2)
\right)
\xrightarrow{\Pbb}0.
\]
\end{theorem}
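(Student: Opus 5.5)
The plan is to use the functional delta method for the bootstrap (van der Vaart, Thm.~23.9), combined with the Hadamard differentiability of Lemma~\ref{lem:Hadamard}. The NSA reformulation is used only as a bookkeeping device at the end. Write $\hat F_n$ for the empirical distribution function of $(X_1,\dots,X_n)$ and $\hat F_n^*$ for that of the bootstrap sample. The first step is to reduce the estimator to the functional $T$. By Lemma~\ref{lem:integral-correspondence}, $\hat\rho_{n,\varphi}(x)=-\int_0^1\varphi_n q_n$, whereas $T(\hat F_n)=-\int_0^1\varphi q_n$. Lemma~\ref{lem:uniformDisc}(ii) bounds the difference by $(L/n)\int_0^1|q_n|$. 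After multiplying by $\sqrt n$ this is $O(n^{-1/2}\cdot\frac1n\sum|X_i|)$, which tends to $0$ almost surely. The same bound holds for the bootstrap sample: the conditional mean of $\frac1n\sum|X_i^*|$ equals $\frac1n\sum|X_i|$, so the bootstrap remainder is conditionally $o_{\Pbb}(1)$ in probability. It therefore suffices to treat $\sqrt n\,(T(\hat F_n^*)-T(\hat F_n))$.

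The second step applies the bootstrap delta method. Donsker's theorem gives $\sqrt n(\hat F_n-F_X)\rightsquigarrow \mathbb G\circ F_X$, with $\mathbb G$ a Brownian bridge; since $F_X$ is continuous by (B1), the limit has paths in $C(\R)$. Giné--Zinn bootstrap consistency for the empirical process on the class of indicators $\{\ind_{(-\infty,t]}\}$ gives conditional weak convergence of $\sqrt n(\hat F_n^*-\hat F_n)$ to the same limit, in outer probability. By Lemma~\ref{lem:Hadamard}, $T$ is Hadamard differentiable tangentially to $C(\R)$. Theorem~23.9 of van der Vaart then yields that $\sqrt n(T(\hat F_n^*)-T(\hat F_n))$ converges conditionally, in probability, to $T'_{F_X}(\mathbb G\circ F_X)$. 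Because $T'$ is linear, this limit is centred Gaussian with variance $\int\!\!\int \varphi(\alpha)\varphi(\beta)\frac{\alpha\wedge\beta-\alpha\beta}{f_X(q_X(\alpha))f_X(q_X(\beta))}\,d\alpha\,d\beta$. I would identify this with $\sigma_\varphi^2$ from \eqref{eq:asymptotic-variance} via the hypotheses of Theorem~\ref{thm:CLT}. Conditional weak convergence is metrised by bounded Lipschitz distance. Since the Gaussian limit has a continuous distribution function, Pólya's theorem upgrades this to convergence in Kolmogorov distance $d_K$, still in probability.

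The main obstacle is the domain of the Hadamard derivative. Lemma~\ref{lem:Hadamard} is stated on the whole interval $(0,1)$, but (B2) gives control of $f_X$ only on $[\delta,1-\delta]$, so the tails $\alpha\in(0,\delta)\cup(1-\delta,1)$ need a separate argument. My first attempt would be to split $T=T_\delta+R_\delta$, where $T_\delta$ integrates $\varphi$ over $[\delta,1-\delta]$; for $T_\delta$ the van der Vaart Lemma~21.4 argument applies directly. The bootstrap contribution of the tails, $\sqrt n\int_{\text{tails}}\varphi\,(q_n^*-q_n)$, I would bound by $\|\varphi\|_\infty\sqrt n\int_{\text{tails}}|q_n^*-q_n|$. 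Writing this as $\int_{\text{tails}}|F_n^*-F_n|\,dx$ in the $x$-variable and using the binomial variance bound from Corollary~\ref{cor:rate}, conditionally on the sample, gives a conditional second moment dominated by $\int_{\{F<\delta'\}\cup\{F>1-\delta'\}}\sqrt{F(1-F)}\,dx$ plus an $o_{\Pbb}(1)$ term. This is finite because $X\in L^2$ (together with the integrability hypothesis of Theorem~\ref{thm:CLT}), and it tends to $0$ as $\delta\to0$. A standard $\delta\to0$ diagonal argument (van der Vaart, Thm.~4.2 style) would then close the proof.

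The NSA reformulation would follow by choosing $N$ via Lemma~\ref{lem:choose-N} so that the transferred $d_K$ bound is infinitesimal with Loeb probability near one. I regard this as a restatement rather than as part of the proof.
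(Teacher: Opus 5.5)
Your main line is the paper's proof. The paper uses the same reduction via Lemma~\ref{lem:uniformDisc}, with conditional Markov for the bootstrap copy. It then applies the bootstrap delta method (van der Vaart, Thm.~23.9) with Lemma~\ref{lem:Hadamard}, and upgrades to $d_K$ because the Gaussian CDF is continuous. The paper does no trimming: it applies Lemma~\ref{lem:Hadamard} on all of $(0,1)$. You point out that (B2) only controls $f_X$ on interior quantile ranges. So van der Vaart's Lemma~21.4, which gives differentiability into $\ell^\infty[p,q]$ with $0<p<q<1$, does not by itself cover the untrimmed $T$. That objection to the paper's step is legitimate, and the split $T=T_\delta+R_\delta$ is the right repair. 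As written, however, your tail estimate does not close.

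The failing step is the claim that $\int_{\{F<\delta'\}\cup\{F>1-\delta'\}}\sqrt{F(1-F)}\,dx$ is finite because $X\in L^2$.

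\begin{itemize}
\item Finiteness of $\int_\R\sqrt{F(1-F)}\,dx$ is strictly stronger than $\E X^2<\infty$. It follows from the $2+\eta$ moment used in Corollary~\ref{cor:rate}, not from $L^2$.
\item The hypothesis $\int_0^1\varphi^2(q_X')^2<\infty$ of Theorem~\ref{thm:CLT} only helps where $\varphi$ stays away from $0$. At $\alpha=0$ this is automatic, since $\varphi(0)\ge\int_0^1\varphi=1$. There it forces $\int_0^\delta(q_X')^2<\infty$, so the lower tail is even bounded.
\item Once you pull out $\|\varphi\|_\infty$, you discard exactly the weight that controls a tail where $\varphi$ vanishes.
\end{itemize}

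A counterexample: take $\varphi(\alpha)=2(1-\alpha)$, and let $X$ have a continuous density, positive on $[0,\infty)$, with $\Pbb(X>x)=x^{-2}(\log x)^{-2}$ for large $x$. All hypotheses of the theorem hold. $X\in L^2$, and $(1-\alpha)q_X'(\alpha)\asymp q_X(\alpha)$ near $1$, so $\int_0^1\varphi^2(q_X')^2<\infty$. Yet $\int^\infty\sqrt{1-F}\,dx=\int^\infty\frac{dx}{x\log x}=\infty$, so your dominating term is infinite for every $\delta'$.

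The fix is to keep $\varphi$ inside.

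\begin{itemize}
\item In the $x$-variable, the tail term equals $\int|\Phi_\delta(F_n^*)-\Phi_\delta(F_n)|\,dx$, where $\Phi_\delta$ is the primitive of $\varphi$ over the tail levels.
\item Use $|\Phi_\delta(a)-\Phi_\delta(b)|\le(\varphi(b)+L|a-b|)\,|a-b|$, which follows from monotonicity and the Lipschitz property.
\item Compare $\varphi\circ F_n$ with $\varphi\circ F$ in the same way.
\end{itemize}

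The dominating integral becomes
\[
\int_{\mathrm{tails}}\varphi(\alpha)\sqrt{\alpha(1-\alpha)}\,q_X'(\alpha)\,d\alpha
\le\Bigl(\int_{\mathrm{tails}}\alpha(1-\alpha)\,d\alpha\Bigr)^{1/2}\Bigl(\int_{\mathrm{tails}}\varphi^2(q_X')^2\,d\alpha\Bigr)^{1/2}.
\]
This is finite under the hypothesis of Theorem~\ref{thm:CLT} and tends to $0$ as $\delta\to0$. The same bound also gives convergence of the trimmed variances to $\sigma_\varphi^2$. The leftover terms carry an extra factor $n^{-1/2}$ and are controlled by $X\in L^1$.

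Two minor slips. The binomial bound controls the conditional \emph{first} moment (via Jensen), not the second. The approximation lemma for the $\delta\to0$ step is Billingsley's Theorem~4.2, in its conditional bounded-Lipschitz form, not a result of van der Vaart.
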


\begin{proof}
Let
\[
T(F):=-\int_0^1 q_F(\alpha)\varphi(\alpha)\,d\alpha,
\qquad
T_n:=T(\hat F_n),
\qquad
T_n^*:=T(\hat F_n^*).
\]
Here
\[
\hat F_n^*:=\frac{1}{n}\sum_{i=1}^n \delta_{X_i^*},
\qquad
\Pbb^*(\cdot):=\Pbb(\cdot\mid X_1,\dots,X_n).
\]
Since $\varphi$ is Lipschitz, Lemma \ref{lem:uniformDisc} gives
\[
|\hat\rho_{n,\varphi}(X_1,\dots,X_n)-T_n|
\le
\frac{L}{n}\cdot \frac1n\sum_{i=1}^n |X_i|,
\]
and similarly
\[
|\hat\rho_{n,\varphi}(X_1^*,\dots,X_n^*)-T_n^*|
\le
\frac{L}{n}\cdot \frac1n\sum_{i=1}^n |X_i^*|.
\]
Therefore
\[
\sqrt n\,|\hat\rho_{n,\varphi}(X_1,\dots,X_n)-T_n|
\le
\frac{L}{\sqrt n}\cdot \frac1n\sum_{i=1}^n |X_i|
\xrightarrow{\Pbb}0
\]
by the strong law. Conditionally on the sample, Markov's inequality yields, for every $\eta>0$,
\[
\Pbb^*\!\left(
\sqrt n\,|\hat\rho_{n,\varphi}(X_1^*,\dots,X_n^*)-T_n^*|>\eta
\right)
\le
\frac{L}{\eta\sqrt n}\cdot \frac1n\sum_{i=1}^n |X_i|,
\]
and the right-hand side converges to $0$ in $\Pbb$-probability.

By Lemma \ref{lem:Hadamard}, $T$ is Hadamard differentiable at $F_X$. Hence the standard
bootstrap delta method \cite[Thm.~23.9]{vanderVaart} yields conditional weak convergence in
probability (equivalently, convergence in the bounded-Lipschitz metric). Since the limit
$N(0,\sigma_\varphi^2)$ has a continuous CDF, \cite[Lemma~21.2]{vanderVaart} upgrades this to
convergence in Kolmogorov distance:
\[
d_K\!\left(
\mathcal L\!\left(\sqrt n(T_n^*-T_n)\,\big|\,X_1,\dots,X_n\right),
N(0,\sigma_\varphi^2)
\right)
\xrightarrow{\Pbb}0.
\]
The previously established $\sqrt n$-equivalence allows $T_n$ and $T_n^*$ to be replaced by
$\hat\rho_{n,\varphi}(X_1,\dots,X_n)$ and $\hat\rho_{n,\varphi}(X_1^*,\dots,X_n^*)$, respectively.
\end{proof}

\subsection{Internal Kolmogorov distance and NSA reformulation}

We measure closeness of laws using the Kolmogorov distance. For standard probability measures,
\[
d_K(\mathcal{L}(Y),\mathcal{L}(Z))
:=
\sup_{t\in\R}\big|\Pbb(Y\le t)-\Pbb(Z\le t)\big|.
\]
In the hyperfinite setting, we must be careful about internality. For $m\in{}^*\N$, define the
\emph{internal truncated Kolmogorov distance}
\[
d_{K,m}(\mathcal{L}(Y),\mathcal{L}(Z))
:=
\max_{t\in\{-m,-m+1/m,\dots,m\}}\big|{}^*\Pbb(Y\le t)-{}^*\Pbb(Z\le t)\big|.
\]
This is internal (a maximum over an internal hyperfinite grid).

\begin{lemma}[Internal Kolmogorov distance approximation]\label{lem:internal-dK}
Let $Y$ be a finite ${}^*\R$-valued random variable on $({}^*\Omega,L({}^*\Pbb))$, and let $\mu$ be a standard probability measure on $\R$ with continuous CDF $G$.
For $m\in{}^*\N$ define
\[
d_{K,m}(\mathcal{L}^*(Y),\mu):=\max_{t\in\{-m,-m+1/m,\dots,m\}}
\left|\mathcal{L}^*(Y)((-\infty,t]) - G(t)\right|.
\]
\begin{enumerate}[(i)]
\item If there exists an infinite $m\in{}^*\N$ such that $d_{K,m}(\mathcal{L}^*(Y),\mu)\approx 0$, then
\[
    d_K(\mathcal{L}(\st(Y)),\mu)=0.
\]
\item The event $\{d_{K,m}(\mathcal{L}^*(Y\mid X),\mu)>\varepsilon\}$ is internal for each internal $m\in{}^*\N$ and each standard $\varepsilon>0$, hence Loeb measurable.
\end{enumerate}
\end{lemma}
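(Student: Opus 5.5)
For (i), I will compare the internal CDF of $Y$ with the standard CDF of $\st(Y)$ at standard points, and then use continuity of $G$ together with monotonicity of CDFs to pass from a grid to all of $\R$. Write $\mathcal{L}^*(Y)((-\infty,t])={}^*\Pbb(Y\le t)$ and $H(t):=L({}^*\Pbb)(\st(Y)\le t)$, the CDF of $\mathcal{L}(\st(Y))$ under the Loeb measure.

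\emph{Step 1 (grid points).} Fix a standard $t$ of the form $t=j/r$ with $j\in\mathbb Z$ and $r\in\N$. Since $m$ is infinite, the grid $\{-m,-m+1/m,\dots,m\}$ contains $t$ whenever $r\mid m$. If $m$ is not divisible by every standard $r$, I will use a standard $t'$ in the grid within $1/m$ of $t$ and absorb the error using continuity of $G$; either way I obtain ${}^*\Pbb(Y\le s)\approx G(t)$ for some grid point $s\approx t$.

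\emph{Step 2 (sandwich).} The sandwich I will use is the standard Loeb inclusion
\[
\{Y\le t-\varepsilon\}\subseteq\{\st(Y)\le t\}\subseteq\{Y\le t+\varepsilon\}
\]
for every standard $\varepsilon>0$, which holds because $Y$ is finite. Taking Loeb measures and using ${}^*\Pbb(Y\le t\pm\varepsilon)\approx G(t\pm\varepsilon)$ from Step 1 (applied at the nearby grid points), I get
\[
G(t-\varepsilon)\le H(t)\le G(t+\varepsilon).
\]
Letting $\varepsilon\downarrow0$ and using continuity of $G$ gives $H(t)=G(t)$ for all $t$ in a dense set. Right-continuity of $H$ then extends this to all $t\in\R$, so $d_K=0$.

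The main subtlety is ensuring that the internal grid meets infinitesimal neighbourhoods of standard points and that $t\pm\varepsilon$ lie within $[-m,m]$. Both hold because $m$ is infinite and the mesh $1/m$ is infinitesimal.

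For (ii), I will argue by internal definition. For fixed internal $m$, the conditional law $\mathcal{L}^*(Y\mid X)$ evaluated at each of the hyperfinitely many grid points is an internal function of the sample $X$. Hence the maximum over the grid is an internal random variable (Proposition \ref{prop:internal-sums}(iii)), and $\{d_{K,m}>\varepsilon\}$ is internal by the internal definition principle. Every internal event lies in ${}^*\mathscr{G}\subseteq L({}^*\mathscr{G})$, so it is Loeb measurable. This part is routine.
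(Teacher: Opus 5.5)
Your proposal is correct and follows essentially the paper's route: sandwich the Loeb event $\{\st(Y)\le t\}$ between internal events $\{Y\le s\}$ at grid points $s$, use $d_{K,m}\approx 0$ together with continuity of $G$, and let the standard slack go to $0$; part (ii) is the same internal-definition argument. One thing to tighten: Step 1 only controls ${}^*\Pbb(Y\le s)$ at (generally nonstandard) grid points $s$, not at $t\pm\varepsilon$ themselves, so you should take the grid points on the correct side ($s\le t-\varepsilon$, $s'\ge t+\varepsilon$) and use monotonicity of ${}^*\Pbb(Y\le\cdot)$, as the paper does. Done that way, the argument works for every standard $t$ at once, so the rational/divisibility detour and the final right-continuity step become unnecessary.
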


\begin{proof}
(i) Fix an infinite $m\in{}^*\N$ with $d_{K,m}(\mathcal{L}^*(Y),\mu)\approx 0$ and write $\Delta:=1/m$.
Let $t\in\R$ be standard and let $\eta>0$ be standard.
Since $m$ is infinite, we have $t,t+\eta\in(-m,m)$.
Choose gridpoints $s,s'\in\{-m,-m+1/m,\dots,m\}$ such that
\[
s\le t < s+\Delta,
\qquad
s'\le t+\eta < s'+\Delta.
\]
Then
\[
\{Y\le s\}\subseteq \{\st(Y)\le t\}\subseteq \{Y\le t+\eta\}\subseteq \{Y\le s'+\Delta\}.
\]
Taking Loeb probabilities and using $L({}^*\Pbb)(A)=\st({}^*\Pbb(A))$ for internal $A$ gives
\[
\st({}^*\Pbb(Y\le s))
\le L({}^*\Pbb)(\st(Y)\le t)
\le \st({}^*\Pbb(Y\le s'+\Delta)).
\]
By the definition of $d_{K,m}$,
\[
\big|{}^*\Pbb(Y\le s)-G(s)\big|\le d_{K,m}(\mathcal{L}^*(Y),\mu)\approx 0,
\]
and
\[
\big|{}^*\Pbb(Y\le s'+\Delta)-G(s'+\Delta)\big|\le d_{K,m}(\mathcal{L}^*(Y),\mu)\approx 0.
\]
Hence
\[
\st(G(s))\le L({}^*\Pbb)(\st(Y)\le t)\le \st(G(s'+\Delta)).
\]
Since $s\approx t$ and $s'+\Delta\approx t+\eta$ and $G$ is continuous, we have
$\st(G(s))=G(t)$ and $\st(G(s'+\Delta))=G(t+\eta)$.
Therefore
\[
G(t)\le L({}^*\Pbb)(\st(Y)\le t)\le G(t+\eta).
\]
Letting $\eta\downarrow 0$ and using continuity of $G$ yields
$L({}^*\Pbb)(\st(Y)\le t)=G(t)$ for all $t\in\R$, \emph{i.e.}\ $\st(Y)\sim \mu$.

(ii) For fixed internal $m$, the grid $\{-m,-m+1/m,\dots,m\}$ is hyperfinite and
$d_{K,m}$ is a hyperfinite maximum of internal conditional probabilities and standard constants.
Thus $\{d_{K,m}(\mathcal{L}^*(Y\mid X),\mu)>\varepsilon\}$ is internal for standard $\varepsilon>0$.
\end{proof}

\begin{remark}[NSA reformulation of bootstrap validity]\label{rem:bootstrap-NSA}
Since $d_{K,m}\le d_K$ for every standard $m\in\N$, Theorem \ref{thm:bootstrap-consistency}
implies that for each fixed standard $m$,
\[
d_{K,m}\bigl(\mathcal{L}^*(\sqrt{N}(\hat\rho_N^*-\hat\rho_N)\mid X),\,N(0,\sigma_\varphi^2)\bigr)\approx 0
\]
on a $L({}^*\Pbb)$-probability-one event. Fix a sample path in this event. For each $r\in\N$,
define
\[
A_r:=\left\{m\in{}^*\N:\ d_{K,m}\bigl(\mathcal{L}^*(\sqrt{N}(\hat\rho_N^*-\hat\rho_N)\mid X),\,N(0,\sigma_\varphi^2)\bigr)<\frac1r\right\},
\]
and, for each $s\in\N$,
\[
B_s:=\{m\in{}^*\N:\ m>s\}.
\]
Each $A_r$ is internal and contains all standard integers. Therefore the family
$\{A_r\}_{r\in\N}\cup\{B_s\}_{s\in\N}$ has the finite intersection property. By countable
saturation, there exists an infinite
\[
M\in \bigcap_{r\in\N}A_r\cap\bigcap_{s\in\N}B_s.
\]
Hence
\[
d_{K,M}\bigl(\mathcal{L}^*(\sqrt{N}(\hat\rho_N^*-\hat\rho_N)\mid X),\,N(0,\sigma_\varphi^2)\bigr)\approx 0,
\]
and Lemma \ref{lem:internal-dK}(i) shows that the conditional standard-part distribution equals
$N(0,\sigma_\varphi^2)$.
\end{remark}

\section{Asymptotic normality via the hyperfinite CLT}\label{sec:CLT}

We derive the asymptotic distribution of spectral plug-in estimators using the hyperfinite
central limit theorem.

\subsection{Hyperfinite central limit theorem}

\begin{theorem}[Hyperfinite CLT]\label{thm:hCLT}
Let $(X_i)_{i\in\N}$ be i.i.d.\ on
$(\Omega,\mathscr{G},\Pbb)$ with $\E[X_1]=0$ and $\Var(X_1)=\sigma^2<\infty$. Let $N\in{}^*\N$ be infinite and consider the hyperfinite
sum $S_N:=\frac{1}{\sqrt{N}}\sum_{k=1}^N X_k$ on $({}^*\Omega,L({}^*\mathscr{G}),L({}^*\Pbb))$.
Under the finite variance assumption, $S_N$ is finite $L({}^*\Pbb)$-almost surely.
Indeed, by Chebyshev's inequality (which transfers), for any standard $M>0$,
\[
{}^*\Pbb(|S_N|>M)\le \frac{{}^*\E[S_N^2]}{M^2}=\frac{\sigma^2}{M^2}.
\]
Taking standard parts, $L({}^*\Pbb)(|S_N|>M)\le \sigma^2/M^2$. Letting $M\to\infty$,
\[
    L({}^*\Pbb)(|S_N|\text{ is infinite})=0.
\]
Thus $\st(S_N)$ is well-defined $L({}^*\Pbb)$-almost surely.

Then for every $t\in\R$,
\[
L({}^*\Pbb)\bigl(\st(S_N)\le t\bigr)=\Phi(t/\sigma),
\]
where $\Phi$ is the standard normal distribution function.
(When $\sigma=0$ the conclusion is trivial: $S_N=0$ for all $N$, and the degenerate
distribution $\delta_0$ is recovered.)
\end{theorem}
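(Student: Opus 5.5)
The plan is to transfer the classical CLT, following the template of the proofs of Theorem~\ref{thm:hSLLN} and Theorem~\ref{thm:hGC}. Finiteness of $S_N$ is already settled in the statement, so only the distributional identity needs proof. Assume $\sigma>0$; the degenerate case is noted in the statement.

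\textbf{Step 1: internal bound at the infinite $N$.} Fix a standard $t\in\R$. The classical Lindeberg--L\'evy CLT gives $\Pbb(S_n\le t)\to\Phi(t/\sigma)$, where $S_n:=n^{-1/2}\sum_{k\le n}X_k$. This is the first-order statement
\[
\forall\varepsilon\in\Qbb_{>0}\ \exists n_0\ \forall n\ge n_0:\ |\Pbb(S_n\le t)-\Phi(t/\sigma)|<\varepsilon .
\]
By transfer it holds for all $n\in{}^*\N$ with $n\ge n_0$, in particular for the infinite $N$. Since $\varepsilon$ is an arbitrary standard rational, we get ${}^*\Pbb(S_N\le t)\approx\Phi(t/\sigma)$.

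\textbf{Step 2: passing through the standard part.} The event $\{S_N\le t\}$ is internal, so its Loeb measure is $\st({}^*\Pbb(S_N\le t))=\Phi(t/\sigma)$. This does not yet identify $L({}^*\Pbb)(\st(S_N)\le t)$, because the two events differ on the set where $S_N\approx t$. To handle this, fix a standard $\eta>0$ and use the inclusions
\[
\{S_N\le t\}\subseteq\{\st(S_N)\le t\}\subseteq\{S_N\le t+\eta\},
\]
which hold on the Loeb-full set where $S_N$ is finite. Taking Loeb measures and applying Step 1 at $t$ and at $t+\eta$ gives
\[
\Phi(t/\sigma)\le L({}^*\Pbb)(\st(S_N)\le t)\le\Phi((t+\eta)/\sigma).
\]
Letting $\eta\downarrow0$ and using continuity of $\Phi$ yields the claim. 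This is the same sandwich argument used in Lemma~\ref{lem:internal-dK}(i).

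\textbf{Main obstacle.} The only delicate point is the gap between the internal event $\{S_N\le t\}$ and the external event $\{\st(S_N)\le t\}$, together with checking that the latter is Loeb measurable. I would handle measurability by writing
\[
\{\st(S_N)\le t\}=\bigcap_{r\in\N}\{S_N\le t+1/r\}\cap\{S_N\text{ finite}\},
\]
which is a countable intersection of internal or Loeb-measurable sets.

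No saturation or choice of $N$ via Lemma~\ref{lem:choose-N} is needed. The transferred statement holds for every infinite $N$, because the CLT concerns the laws ${}^*\Pbb(S_n\le t)$, which are deterministic numbers, rather than sample paths.
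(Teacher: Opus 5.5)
Your proposal is correct and follows essentially the same route as the paper. You transfer the classical CLT to get ${}^*\Pbb(S_N\le t)\approx\Phi(t/\sigma)$, then pass to the external event $\{\st(S_N)\le t\}=\bigcap_{r\in\N}\{S_N\le t+1/r\}$; the paper takes the limit by continuity from above, where you use the equivalent $\eta$-sandwich, and your intersection with $\{S_N\text{ finite}\}$ is a slightly more careful version of the paper's set identity.
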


\begin{remark}
The hyperfinite CLT is a \emph{standard} result in non-standard probability; see, \textit{e.g.},
\cite[Ch.~5]{AFHL} or \cite{Anderson76} for systematic treatments.
\end{remark}

\begin{proof}
By the classical CLT, $S_n := \frac{1}{\sqrt{n}} \sum_{k=1}^n X_k \xrightarrow{d} N(0, \sigma^2)$.
For any $t \in \R$ and $\varepsilon > 0$, there exists $n_0$ such that for all $n \geq n_0$,
\[
|\Pbb(S_n \leq t) - \Phi(t/\sigma)| < \varepsilon.
\]
By transfer, this holds for $N$, hence
$|{}^*\Pbb(S_N \leq t) - \Phi(t/\sigma)| < \varepsilon$.
Since $\varepsilon$ is arbitrary, ${}^*\Pbb(S_N \leq t) \approx \Phi(t/\sigma)$.

To pass from ${}^*\Pbb(S_N\le t)$ to the Loeb event $\{\st(S_N)\le t\}$, note that
\[
\{\st(S_N)\le t\}=\bigcap_{m\in\N}\{S_N\le t+1/m\},
\]
hence by continuity from above of $L({}^*\Pbb)$,
\[
L({}^*\Pbb)(\st(S_N)\le t)
=
\lim_{m\to\infty} \st\bigl({}^*\Pbb(S_N\le t+1/m)\bigr).
\]
Since ${}^*\Pbb(S_N\le t+1/m)\approx \Phi((t+1/m)/\sigma)$ for each fixed $m$ and $\Phi$ is continuous,
the limit equals $\Phi(t/\sigma)$.
\end{proof}

\subsection{Asymptotic normality of spectral estimators}

Before stating the main CLT, we establish two technical lemmas that make the NSA proof rigorous.

\begin{lemma}[Infinitesimal perturbation preserves standard-part distribution]\label{lem:infinitesimal-perturbation}
Let $Y,Z$ be finite hyperreal random variables on $({}^*\Omega,L({}^*\Pbb))$ such that
$Y\approx Z$ holds $L({}^*\Pbb)$-almost surely. Then $\st(Y)$ and $\st(Z)$ have the same
distribution under $L({}^*\Pbb)$.
\end{lemma}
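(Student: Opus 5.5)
The plan is to show directly that the two standard-part variables agree almost surely as functions on ${}^*\Omega$; equality of laws is then immediate. Let $E$ be the set of $\omega$ at which both $Y(\omega)$ and $Z(\omega)$ are finite and $Y(\omega)\approx Z(\omega)$. By hypothesis and the standing finiteness assumption, $E$ has Loeb measure one. I will read the hypothesis as saying that $E$ contains a Loeb-measurable set of full measure, and use completeness of $L({}^*\Pbb)$ from Theorem \ref{thm:Loeb} to get that $E$ is itself measurable.

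For $\omega\in E$, uniqueness of standard parts (Proposition \ref{prop:standard-part}) does the work. Since $Y(\omega)\approx Z(\omega)$, we have $\st(Y(\omega))\approx Y(\omega)\approx Z(\omega)\approx\st(Z(\omega))$. Two real numbers that are infinitely close must be equal, so $\st(Y)=\st(Z)$ on $E$.

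Next I need $\st(Y)$ and $\st(Z)$ to be Loeb-measurable random variables, so that their laws make sense. If $Y$ is internal, this follows from the decomposition
\[
\{\st(Y)\le t\}\cap E=\bigcap_{m\in\N}\{Y\le t+1/m\}\cap E,
\]
which writes the event as a countable intersection of internal sets intersected with $E$. This is the same identity used in the proof of Theorem \ref{thm:hCLT}. If $Y$ is only Loeb measurable rather than internal, $\{Y\le t+1/m\}$ is still Loeb measurable, so the decomposition gives measurability in that case too.

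The conclusion then follows from
\[
L({}^*\Pbb)(\st(Y)\le t)=L({}^*\Pbb)(\{\st(Y)\le t\}\cap E)=L({}^*\Pbb)(\{\st(Z)\le t\}\cap E)=L({}^*\Pbb)(\st(Z)\le t)
\]
for every $t\in\R$. Equal distribution functions give equal laws.

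The only delicate step is the measurability bookkeeping: making sure $\st(Y)$ is defined off a null set and is measurable. I expect completeness of the Loeb measure to handle it with no further work.
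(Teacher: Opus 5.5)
Your proposal is correct and follows essentially the paper's argument: restrict to the full-measure set where $Y\approx Z$, note that $\{\st(Y)\le t\}$ and $\{\st(Z)\le t\}$ coincide there by uniqueness of standard parts, and conclude that the distribution functions are equal. Your extra measurability step, via $\{\st(Y)\le t\}\cap E=\bigcap_{m\in\N}\{Y\le t+1/m\}\cap E$ and completeness of the Loeb measure, is a detail the paper leaves implicit.
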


\begin{proof}
Let $A:=\{\omega: Y(\omega)\approx Z(\omega)\}$. By hypothesis, $L({}^*\Pbb)(A)=1$.
For any $t\in\R$,
\[
\{\st(Y)\le t\}\cap A = \{\st(Z)\le t\}\cap A
\]
because if $Y\approx Z$ and $\st(Y)\le t$, then $\st(Z)=\st(Y)\le t$, and vice versa.
Since $L({}^*\Pbb)(A^c)=0$, we have
\[
L({}^*\Pbb)(\st(Y)\le t)=L({}^*\Pbb)(\st(Y)\le t,A)=L({}^*\Pbb)(\st(Z)\le t,A)=L({}^*\Pbb)(\st(Z)\le t).\qedhere
\]
\end{proof}

\begin{lemma}[Asymptotic linearity in first-order form]\label{lem:asymptotic-linearity}
Under the hypotheses of Theorem \ref{thm:CLT}, for every $\varepsilon,\delta\in\Qbb_{>0}$
there exists $n_0\in\N$ such that for all $n\ge n_0$,
\begin{equation}\label{eq:asymptotic-linearity-first-order}
\Pbb\Bigg(
\Bigg|\sqrt{n}\bigl(\hat\rho_{n,\varphi}-\rho_\varphi(X)\bigr)
-
\frac{1}{\sqrt{n}}\sum_{i=1}^n\IF_\varphi(X_i)
\Bigg|>\varepsilon
\Bigg)<\delta,
\end{equation}
where
\[
\IF_\varphi(x):=\int_0^1 \varphi(\alpha)\,q_X'(\alpha)\,
\bigl(\ind_{\{x\le q_X(\alpha)\}}-\alpha\bigr)\,d\alpha.
\]
In particular, the remainder is $o_\Pbb(1)$ in a form suitable for transfer.
\end{lemma}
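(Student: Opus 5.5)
The plan is to reduce the statement to the classical asymptotic linearity of $L$-statistics with smooth, bounded weights, and then observe that ``$o_\Pbb(1)$'' already has the displayed first-order form. Concretely, write $\hat\rho_{n,\varphi}=T(\hat F_n)+R_n^{(1)}$, where $T(F)=-\int_0^1 q_F\varphi\,d\alpha$. Since $\varphi$ is Lipschitz (as assumed in Theorem~\ref{thm:CLT}), Lemma~\ref{lem:uniformDisc}(ii) gives
\[
|R_n^{(1)}|\le \frac{L}{n}\cdot\frac1n\sum_{i=1}^n|X_i|.
\]
Hence $\sqrt n\,R_n^{(1)}\to0$ almost surely by the strong law. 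It therefore suffices to prove
\[
\sqrt n\bigl(T(\hat F_n)-T(F_X)\bigr)-\frac1{\sqrt n}\sum_{i=1}^n\IF_\varphi(X_i)\xrightarrow{\Pbb}0 .
\]

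\textbf{Step 1: identify $\IF_\varphi$ as the Hadamard derivative applied to the empirical process.} By Lemma~\ref{lem:Hadamard}, $T$ is Hadamard differentiable at $F_X$ with derivative $T'_{F_X}(h)=\int_0^1 h(q_X(\alpha))\varphi(\alpha)/f_X(q_X(\alpha))\,d\alpha$. Take $h=\sqrt n(\hat F_n-F_X)$ and use $q_X'(\alpha)=1/f_X(q_X(\alpha))$, which holds under (B1)--(B2). We then have
\[
T'_{F_X}\bigl(\sqrt n(\hat F_n-F_X)\bigr)=\frac1{\sqrt n}\sum_{i=1}^n\int_0^1\varphi(\alpha)q_X'(\alpha)\bigl(\ind_{\{X_i\le q_X(\alpha)\}}-F_X(q_X(\alpha))\bigr)\,d\alpha .
\]
By continuity of $F_X$, $F_X(q_X(\alpha))=\alpha$, so the right-hand side is exactly $n^{-1/2}\sum_i\IF_\varphi(X_i)$. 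This uses only linearity of the integral, with Fubini justified by the finiteness of $\sigma^2_\varphi$.

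\textbf{Step 2: apply the functional delta method.} Donsker's theorem gives $\sqrt n(\hat F_n-F_X)\rightsquigarrow \mathbb G\circ F_X$, a process with continuous paths, so it is tangential to $C(\R)$ (or to the relevant subspace). The delta method \cite[Thm.~20.8]{vanderVaart} then yields
\[
\sqrt n\bigl(T(\hat F_n)-T(F_X)\bigr)-T'_{F_X}\bigl(\sqrt n(\hat F_n-F_X)\bigr)=o_\Pbb(1).
\]
Combined with Step 1 and the bound on $R_n^{(1)}$, this shows the remainder in \eqref{eq:asymptotic-linearity-first-order} is $o_\Pbb(1)$.

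\textbf{Step 3: unfold the definition.} Convergence in probability to $0$ means precisely that for each rational $\varepsilon,\delta>0$ there is an $n_0$ such that the probability in \eqref{eq:asymptotic-linearity-first-order} is less than $\delta$ for all $n\ge n_0$. This is the first-order form required, ready for transfer via Lemma~\ref{lem:choose-N}.

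\textbf{Main obstacle.} The delicate point is the tails. Hadamard differentiability of the quantile map holds only on $[\delta,1-\delta]$ in $\ell^\infty$, whereas $T$ integrates over all of $(0,1)$ and $q_X$ may be unbounded. The plan is to split $\varphi=\varphi\ind_{[\delta,1-\delta]}+\varphi\ind_{[\delta,1-\delta]^c}$. The central piece is handled as above. For the tail pieces, I will bound both the empirical and linear contributions in $L^2$ uniformly in $n$ by a quantity vanishing as $\delta\to0$. For the linear term this follows from finiteness of $\sigma^2_\varphi$. For the $L$-statistic term, I would use the $L^2$ assumption and the standard Shorack--Wellner $L$-statistic tail bounds, i.e.\ integrals $\int_{(0,\delta)}|\hat F_n^{-1}-q_X|\,d\alpha$ controlled by $\int_{-\infty}^{q_X(\delta)}|\hat F_n-F_X|\,dx$ with $\E\le n^{-1/2}\int\sqrt{F(1-F)}$, which is finite under $L^{2+\eta}$ or under the hypotheses of Theorem~\ref{thm:CLT}. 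If this direct route fails, I would fall back on citing \cite[Thm.~22.3]{vanderVaart} for asymptotic linearity of $L$-statistics with bounded continuous weights.
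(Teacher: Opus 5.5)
You take a different route from the paper, and the step that carries the real weight has a gap. Your Step~3 is exactly the paper's proof. The paper simply invokes the classical asymptotic linearity of $L$-statistics (Serfling) and unfolds ``$o_\Pbb(1)$'' into the $\forall\varepsilon,\delta\;\exists n_0$ form, so your fallback citation reproduces it. Steps~1--2 instead try to prove the classical result by trimming plus the functional delta method. The central piece is fine: Hadamard differentiability of the quantile map into $\ell^\infty[\delta,1-\delta]$, composed with the bounded functional $h\mapsto\int_\delta^{1-\delta}h\varphi\,d\alpha$, then the delta method. You are also right that Lemma~\ref{lem:Hadamard} cannot be used over all of $(0,1)$. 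One misreading is harmless. Theorem~\ref{thm:CLT} does not assume $\varphi$ Lipschitz; that hypothesis belongs to Theorem~\ref{thm:bootstrap-consistency}. You do not need it, because $R_n^{(1)}\equiv 0$: the canonical weights are cell averages of $\varphi$ and $q_n$ is constant on each cell $((i-1)/n,i/n]$, so $\hat\rho_{n,\varphi}=T(\hat F_n)$ exactly. The paper uses this identity in the proof of Theorem~\ref{thm:CLT}.

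The gap is in the tail step, the only non-routine part. You pull out $\|\varphi\|_\infty$ and then need $\int_\R\sqrt{F_X(1-F_X)}\,dx<\infty$, which you say follows from the hypotheses of Theorem~\ref{thm:CLT}.
\begin{itemize}
\item \emph{Lower tail.} Here it does hold, trivially. Since $\varphi$ is non-increasing with $\int_0^1\varphi=1$, we have $\varphi(0+)\ge 1$. Together with $\int\varphi^2(q_X')^2<\infty$ this gives $\int_0^c(q_X')^2<\infty$ for some $c>0$, hence $q_X(0+)>-\infty$.
\item \emph{Upper tail.} Here it can fail. Take $\varphi(\alpha)=2(1-\alpha)$ and $q_X'(\alpha)=(1-\alpha)^{-3/2}\bigl(1+\log\frac{1}{1-\alpha}\bigr)^{-1}$. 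Then $X\in L^2$ and $\int_0^1\varphi^2(q_X')^2\,d\alpha=4$. Yet $\int^\infty\sqrt{1-F_X(x)}\,dx=\int^1(1-\alpha)^{-1}\bigl(1+\log\frac{1}{1-\alpha}\bigr)^{-1}d\alpha=\infty$, so your bound is vacuous. The conclusion still holds in this example: the remainder is exactly $\sqrt n\int(\hat F_n-F_X)^2\,dx$, whose expectation is $n^{-1/2}\int F_X(1-F_X)\,dx\to 0$.
\end{itemize}

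The fix is to keep the weight inside. The deterministic quantity satisfies $\int_{1-\delta}^1\varphi(\alpha)\sqrt{\alpha(1-\alpha)}\,q_X'(\alpha)\,d\alpha\le\delta^{1/2}\bigl(\int_0^1\varphi^2(q_X')^2\bigr)^{1/2}$, so it is small. Transferring this to the random term $\sqrt n\int_{1-\delta}^1\varphi\,|q_n-q_X|\,d\alpha$, however, requires comparing $\varphi$ at empirical and at true quantile levels. That needs weighted empirical-process bounds (linear bounds of Shorack--Wellner type), which is precisely the content of the classical theorem you would otherwise cite.

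A smaller point: the correct area identity is $\int_0^\delta|q_n-q_X|\,d\alpha=\int_\R|\min(\hat F_n,\delta)-\min(F_X,\delta)|\,dx$. Its $x$-support extends to $\max(q_X(\delta),q_n(\delta))$, not just to $q_X(\delta)$.

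So you have two options. Either add the hypothesis $\int\sqrt{F_X(1-F_X)}\,dx<\infty$ (implied by $X\in L^{2+\eta}$), under which your route goes through. Or rely on the citation, as the paper does; then the burden shifts to checking that the cited theorem's hypotheses follow from those of Theorem~\ref{thm:CLT}.
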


\begin{proof}
This is the content of \cite[Thm.~8.5]{Serfling}: under the smoothness hypotheses (bounded
$\varphi$, $q_X\in C^1$ with finite $\int\varphi^2(q_X')^2$), the L-statistic
$\hat\rho_{n,\varphi}$ is asymptotically linear with influence function $\IF_\varphi$.
The ``$o_\Pbb(1)$'' remainder term means precisely that for every $\varepsilon>0$ and $\delta>0$,
$\Pbb(|\text{remainder}|>\varepsilon)<\delta$ for all sufficiently large $n$, which can be restated in the
``$\forall\varepsilon,\delta\in\Qbb_{>0}\;\exists n_0\;\forall n\ge n_0$'' form above.
\end{proof}

\begin{theorem}[Asymptotic normality of spectral plug-in CREs]\label{thm:CLT}
Let $X \in L^2$ and $\varphi$ a~bounded spectrum. Assume the quantile function $q_X$ is
continuously differentiable on $(0, 1)$ with derivative $q_X' = 1/f_X(q_X)$, where $f_X$
is the density of $X$. Assume moreover that $\int_0^1 \varphi(\alpha)^2 q_X'(\alpha)^2\,d\alpha<\infty$,
so that $\sigma_\varphi^2$ in \eqref{eq:asymptotic-variance} is finite.
If $\sigma_\varphi^2=0$, the conclusion is that
$\sqrt{n}(\hat\rho_{n,\varphi}-\rho_\varphi(X))\xrightarrow{\Pbb}0$; in the sequel we
assume $\sigma_\varphi^2>0$.
Then
\begin{equation}\label{eq:CLT-spectral}
\sqrt{n} \left( \hat\rho_{n,\varphi}(X_1, \ldots, X_n) - \rho_\varphi(X) \right)
\xrightarrow{d} N(0, \sigma_\varphi^2),
\end{equation}
where
\begin{equation}\label{eq:asymptotic-variance}
\sigma_\varphi^2 = \int_0^1 \int_0^1 (\min(\alpha, \beta) - \alpha\beta)
\varphi(\alpha) \varphi(\beta) q_X'(\alpha) q_X'(\beta) \, d\alpha \, d\beta.
\end{equation}
\end{theorem}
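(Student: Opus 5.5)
The plan is to combine the asymptotic linearity of Lemma \ref{lem:asymptotic-linearity} with the hyperfinite CLT (Theorem \ref{thm:hCLT}), working on a single infinite $N$ and then transferring back.

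\textbf{Step 1: moments of the influence function.} First I will compute the mean and variance of the influence function $\IF_\varphi$. By Fubini, using $\Pbb(X\le q_X(\alpha))=\alpha$ (which holds because $F_X$ is continuous under the $C^1$ quantile hypothesis), we get $\E[\IF_\varphi(X)]=0$. Similarly,
\[
\mathrm{Cov}\bigl(\ind_{\{X\le q_X(\alpha)\}},\ind_{\{X\le q_X(\beta)\}}\bigr)=\min(\alpha,\beta)-\alpha\beta,
\]
so $\Var(\IF_\varphi(X))=\sigma_\varphi^2$ as in \eqref{eq:asymptotic-variance}. Finiteness of this variance follows from the Cauchy--Schwarz bound $\E[\IF_\varphi^2]\le\int_0^1\varphi^2(q_X')^2\,d\alpha$, since $\min(\alpha,\beta)-\alpha\beta\le\sqrt{\alpha(1-\alpha)\beta(1-\beta)}\le 1/4$ on the diagonal. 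The Fubini interchange needs justification, and I expect this to be the main technical obstacle. It should come from integrability of $\varphi\,q_X'\sqrt{\alpha(1-\alpha)}$, which in turn follows from the assumed $L^2$ bound. I would also cite Serfling for this identification if needed.

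\textbf{Step 2: reduction to a hyperfinite linear statistic.} Fix any infinite $N\in{}^*\N$. The first-order form of Lemma \ref{lem:asymptotic-linearity} transfers to $N$: for all standard rationals $\varepsilon,\delta>0$, the remainder
\[
R_N:=\sqrt N(\hat\rho_{N,\varphi}-\rho_\varphi(X))-N^{-1/2}\textstyle\sum_{i\le N}\IF_\varphi(X_i)
\]
satisfies ${}^*\Pbb(|R_N|>\varepsilon)<\delta$. Taking standard parts and letting $\delta\downarrow0$ gives $L({}^*\Pbb)(|R_N|>\varepsilon)=0$. A countable intersection over $\varepsilon=1/m$ then yields $R_N\approx0$ almost surely.

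\textbf{Step 3: the hyperfinite CLT.} Apply Theorem \ref{thm:hCLT} to the i.i.d.\ centred variables $\IF_\varphi(X_i)$, which have variance $\sigma_\varphi^2$. This shows that $S_N=N^{-1/2}\sum_{i\le N}\IF_\varphi(X_i)$ is finite almost surely and that $\st(S_N)\sim N(0,\sigma_\varphi^2)$. Lemma \ref{lem:infinitesimal-perturbation} then transfers this law to $\st\bigl(\sqrt N(\hat\rho_{N,\varphi}-\rho_\varphi(X))\bigr)$.

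\textbf{Step 4: back to the standard statement.} Alternatively, the standard conclusion \eqref{eq:CLT-spectral} follows directly by Slutsky from the classical CLT and Lemma \ref{lem:asymptotic-linearity}. For the NSA route, suppose the convergence failed at some $t$. Then there would be a standard $\varepsilon>0$ and infinitely many $n$ with $|\Pbb(\sqrt n(\hat\rho_{n,\varphi}-\rho_\varphi)\le t)-\Phi(t/\sigma_\varphi)|\ge\varepsilon$. Overspill gives an infinite $N$ with the same property. The sandwich argument from the proof of Theorem \ref{thm:hCLT}, using continuity of $\Phi$ and $\{Y\le t\pm1/m\}$, then contradicts Step 3, since Step 3 holds for every infinite $N$.

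The degenerate case $\sigma_\varphi^2=0$ is handled in the same way: there $\IF_\varphi(X)=0$ almost surely, so Step 2 alone gives the claimed $o_\Pbb(1)$ conclusion.
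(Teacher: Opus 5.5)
Your proposal is correct and follows essentially the same route as the paper. You transfer the first-order asymptotic linearity of Lemma \ref{lem:asymptotic-linearity} to an infinite $N$, apply the hyperfinite CLT to $N^{-1/2}\sum_{i\le N}\IF_\varphi(X_i)$, and pass the law to $\st\bigl(\sqrt N(\hat\rho_{N,\varphi}-\rho_\varphi(X))\bigr)$ via Lemma \ref{lem:infinitesimal-perturbation}. You then return to standard $n$ by a spill argument; your contradiction via overspill is the contrapositive of the paper's saturation argument, which shows the internal set $B_{t,\varepsilon}$ has no infinite elements and is therefore bounded on $\N$.

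Your Step 1 is more explicit than the paper, which only asserts $\E[\IF_\varphi(X)]=0$ and $\Var(\IF_\varphi(X))=\sigma_\varphi^2$. The Fubini step you flag as the main obstacle is in fact immediate: $\varphi q_X'\ge 0$ and $\int_0^1\varphi^2(q_X')^2<\infty$ give $\varphi q_X'\in L^1(0,1)$, so $|\IF_\varphi|\le\int_0^1\varphi q_X'\,d\alpha$ is bounded. Every interchange is then dominated by the integrable envelope $\varphi(\alpha)q_X'(\alpha)\varphi(\beta)q_X'(\beta)$.
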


\begin{proof}
The result is a classical asymptotic normality theorem for L-statistics; see
\cite[Ch.~8]{Serfling} or \cite[Ch.~21]{vanderVaart} for the functional delta method approach.
For the reader's convenience, we provide a complete NSA proof using the hyperfinite CLT and the lemmata above.

\noindent\emph{The influence function representation.}
For L-statistics of the form $\hat\rho_{n,\varphi}=-\int_0^1 q_n(\alpha)\varphi(\alpha)\,d\alpha$,
the classical theory provides the asymptotic linearity result stated in Lemma \ref{lem:asymptotic-linearity}.
The influence function is
\[
\IF_\varphi(x):=\int_0^1 \varphi(\alpha)\,q_X'(\alpha)\,\bigl(\ind_{x\le q_X(\alpha)}-\alpha\bigr)\,d\alpha.
\]
Under the smoothness assumptions, $\IF_\varphi\in L^2$ with $\E[\IF_\varphi(X)]=0$
and $\Var(\IF_\varphi(X))=\sigma_\varphi^2$.

\noindent\emph{NSA formulation via hyperfinite CLT.}
For infinite $N\in{}^*\N$, consider the internal average
\[
T_N:=\frac{1}{\sqrt{N}}\sum_{i=1}^N \IF_\varphi(X_i).
\]
By the hyperfinite CLT (Theorem \ref{thm:hCLT}), $\st(T_N)\sim N(0,\sigma_\varphi^2)$.

By Lemma \ref{lem:asymptotic-linearity}, for every standard $\eta,\delta>0$,
\[
{}^*\Pbb\Big(\Big|\sqrt{N}(\hat\rho_{N,\varphi}-\rho_\varphi(X))-T_N\Big|>\eta\Big)<\delta.
\]
Since $\delta>0$ is arbitrary, the Loeb measure of the event in brackets is $0$. Thus,
for every standard $\eta>0$,
\[
L({}^*\Pbb)\Big(\Big|\sqrt{N}(\hat\rho_{N,\varphi}-\rho_\varphi(X))-T_N\Big|>\eta\Big)=0.
\]
Intersecting over $\eta=1/m$, $m\in\N$, we obtain
\[
\sqrt{N}(\hat\rho_{N,\varphi}-\rho_\varphi(X))\approx T_N
\qquad\text{$L({}^*\Pbb)$-almost surely.}
\]

\noindent\emph{Conclusion via standard part.}
By Lemma \ref{lem:infinitesimal-perturbation}, since
$\sqrt{N}(\hat\rho_{N,\varphi}-\rho_\varphi(X))\approx T_N$ $L({}^*\Pbb)$-a.s., we have
\[
\st\bigl(\sqrt{N}(\hat\rho_{N,\varphi}-\rho_\varphi(X))\bigr)\sim\st(T_N)\sim N(0,\sigma_\varphi^2).
\]

\noindent\emph{Transfer back to standard sequences.}
To obtain the standard convergence in distribution, we use countable saturation on the complement set.
Fix $t\in\Qbb$ and rational $\varepsilon>0$. Define the internal set
\[
S_{t,\varepsilon}:=\{n\in{}^*\N:|{}^*\Pbb(\sqrt{n}(\hat\rho_{n,\varphi}-\rho_\varphi(X))\le t)-\Phi(t/\sigma_\varphi)|<\varepsilon\}.
\]
This set is internal because the defining predicate uses ${}^*\Pbb$ (the non-standard extension
of $\Pbb$) and involves only internal random variables, hypernatural $n$, and standard constants
$t,\varepsilon,\sigma_\varphi,\Phi$. Choose a standard $m\in\N$ so large that $1/m<\varepsilon/3$ and
\[
\Phi\!\left(\frac{t+1/m}{\sigma_\varphi}\right)-\Phi\!\left(\frac{t-1/m}{\sigma_\varphi}\right)<\varepsilon/3,
\]
which is possible by continuity of $\Phi$.

Let us define the \emph{influence function} of the spectral functional by
\[
\IF_\varphi(x)
:=
\int_0^1 \varphi(\alpha)\,q_X'(\alpha)\bigl(\ind_{\{x\le q_X(\alpha)\}}-\alpha\bigr)\,d\alpha
=
\int_0^1 \varphi(\alpha)\,\frac{\ind_{\{x\le q_X(\alpha)\}}-\alpha}{f_X(q_X(\alpha))}\,d\alpha.
\]

For $n\in{}^*\N$ we define the (internal) hyperfinite sum
\[
T_n:=\frac{1}{\sqrt{n}}\sum_{i=1}^n \IF_\varphi(X_i).
\]
By transfer of \eqref{eq:asymptotic-linearity-first-order} with $\varepsilon$ replaced by $1/m$
and $\delta$ replaced by $1/m$, every infinite $n$ satisfies
\[
{}^*\Pbb\big(|\sqrt{n}(\hat\rho_{n,\varphi}-\rho_\varphi(X)) - T_n|>1/m\big) < 1/m.
\]
Moreover, the classical CLT for the i.i.d.\ sequence $\IF_\varphi(X_i)$ implies (by transfer) that for each standard $u\in\R$ and each infinite $n$,
\[
{}^*\Pbb(T_n\le u)\approx \Phi(u/\sigma_\varphi).
\]
For such $n$ we therefore have the sandwich bounds
\[
{}^*\Pbb(T_n\le t-1/m)-1/m
\le {}^*\Pbb(\sqrt{n}(\hat\rho_{n,\varphi}-\rho_\varphi(X))\le t)
\le {}^*\Pbb(T_n\le t+1/m)+1/m.
\]
Consequently, for every infinite $n$,
\begin{align*}
\Big|{}^*\Pbb(\sqrt{n}(\hat\rho_{n,\varphi}-\rho_\varphi(X))\le t)-\Phi(t/\sigma_\varphi)\Big|
&\le \max_{\pm}\Big|{}^*\Pbb(T_n\le t\pm 1/m)-\Phi\!\Big(\frac{t\pm 1/m}{\sigma_\varphi}\Big)\Big|\\
&\quad +\Big|\Phi\!\Big(\frac{t+1/m}{\sigma_\varphi}\Big)-\Phi\!\Big(\frac{t-1/m}{\sigma_\varphi}\Big)\Big| + \frac{1}{m}
<\varepsilon,
\end{align*}
so all infinite $n$ lie in $S_{t,\varepsilon}$.

Let $B_{t,\varepsilon}:={}^*\N\setminus S_{t,\varepsilon}$. This is an internal subset of ${}^*\N$,
and it contains no infinite hypernaturals.

If $B_{t,\varepsilon}\cap\N$ were unbounded, then for each standard $m\in\N$ the internal set
\[
B_{t,\varepsilon}\cap\{n\in{}^*\N:\ n>m\}
\]
would be non-empty. These sets clearly have the finite intersection property. By countable
saturation, their intersection would contain some $H\in{}^*\N$ with $H>m$ for every standard
$m$, i.e.\ an infinite hypernatural $H\in B_{t,\varepsilon}$, a contradiction.

Therefore $B_{t,\varepsilon}\cap\N$ is bounded. Hence there exists $n_0\in\N$ such that
every standard $n\ge n_0$ lies in $S_{t,\varepsilon}$.

Since $\varepsilon$ and $t$ were arbitrary rationals, and CDFs are determined by their
values at rationals (by right-continuity), we obtain
$\sqrt{n}(\hat\rho_{n,\varphi}-\rho_\varphi(X))\xrightarrow{d}N(0,\sigma_\varphi^2)$.
\end{proof}

\section{Extensions to Orlicz hearts}\label{sec:orlicz}

We briefly discuss how the hyperfinite framework extends to coherent risk measures on
Orlicz hearts.

\subsection{Orlicz spaces and duality}

Let $\Phi: [0, \infty) \to [0, \infty)$ be a Young function (convex, increasing,
$\Phi(0) = 0$, $\Phi(x)/x \to \infty$). The \emph{Orlicz space} $L^\Phi$ consists of
random variables $X$ with $\E[\Phi(|X|/\lambda)] < \infty$ for some $\lambda > 0$. The
\emph{Orlicz heart} $H^\Phi \subseteq L^\Phi$ consists of $X$ with this condition holding
for all $\lambda > 0$.

The dual Orlicz space $L^\Psi$, where $\Psi$ is the complementary Young function, provides
the dual pairs for robust representation.

\subsection{Hyperfinite representation on Orlicz hearts}

\begin{remark}[Orlicz hearts: what changes]
A full Orlicz-heart analogue of Theorem \ref{thm:hyperfinite-rho} requires replacing the $L^\infty$--$L^1$
duality by the $H^\Phi$--$L^\Psi$ duality (where $\Psi$ is complementary to $\Phi$) and imposing the usual
lower semicontinuity/Fatou-type condition in the $H^\Phi$ topology; see \cite{CheriditoLi}.

Specifically, the key modifications are:
\begin{enumerate}[(i)]
\item The representing set $\mathcal{Z}$ in Theorem \ref{thm:hyperfinite-rho} must consist of
densities $Z\in L^\Psi_+$ with $\E[Z]=1$, and compactness is taken in $\sigma(L^\Psi,H^\Phi)$.
\item The internal lifting $\tilde X$ of $X\in H^\Phi$ must satisfy $\frac{1}{N}\sum_{k=1}^N\Phi(|\tilde X(k)|/\lambda)$
finite for all standard $\lambda>0$, not merely boundedness.
\item For $Z^\sharp\in{}^*\mathcal{Z}$, $S$-integrability of the product $\tilde X\cdot Z^\sharp$
follows from the Orlicz--H\"older inequality $|\E[XZ]|\le 2\|X\|_\Phi\|Z\|_\Psi$ (transferred internally).
\item The normalisation $H(Z)\approx 1$ still holds since $\E[Z]=1$ transfers.
\end{enumerate}
The remainder of the proof of Theorem \ref{thm:hyperfinite-rho} proceeds unchanged once these
modifications are in place.
\end{remark}

\subsection{Consistency on Orlicz domains}

For plug-in consistency on $H^\Phi$, the uniform boundedness assumption (K3) in Theorem
\ref{thm:Kusuoka-consistency} is replaced by:
\begin{enumerate}
\item[(K3')] There exists $C < \infty$ such that
$|\widehat{\ES}_{\alpha,n}(x)| \leq C \|x\|_\Phi$ for all $\alpha, n, x$.
\end{enumerate}

Under this and the other assumptions (with $\Linf$ replaced by $H^\Phi$), the Kusuoka
plug-in consistency theorem extends to Orlicz hearts, though a complete treatment requires
careful attention to the Orlicz-space duality theory.

\section{Concluding remarks}

We have developed a systematic hyperfinite framework for coherent risk estimation,
demonstrating that non-standard analysis provides both conceptual clarity and technical
power for this class of problems.

The hyperfinite viewpoint suggests several directions for future work:

\begin{itemize}
\item \textbf{Sensitivity analysis.} Internal Lipschitz properties of hyperfinite risk
functionals could yield robustness bounds for CREs under model misspecification.

\item \textbf{High-dimensional extensions.} The hyperfinite framework may extend to
systemic risk measures on high-dimensional portfolios, where the number of assets grows
with the sample size.

\item \textbf{Dynamic risk.} Time-consistent dynamic risk measures might admit
hyperfinite backward induction representations, simplifying certain asymptotic analyses.

\item \textbf{Computational aspects.} The hyperfinite picture suggests natural
discretisation schemes for computing coherent risk measures, with explicit error bounds
derived from the standard-part construction.
\end{itemize}

\section*{Acknowledgements}

The author thanks the Scientific Circle of Financial Mathematics at Jagiellonian University
(Ko{\l}o Naukowe Matematyki Finansowej UJ) for the invitation to speak at the 26th edition
of the ``Future Financier Academy'' (Akademia Przysz{\l}ego Finansisty) seminar held in
Kacwin, 28--30 November 2025. It was at this conference that the author first learned of
the arXiv preprint \cite{ACJP} by Aichele, Cialenco, Jelito, and Pitera on coherent estimation
of risk measures, and where the idea of reformulating their results using non-standard analysis
was first conceived.


\end{document}